\newtheorem{theorem}{Theorem}[section]
\newtheorem{lemma}[theorem]{Lemma}
\newtheorem{corollary}[theorem]{Corollary}
\newcommand{\lrb}[1]{\left ( #1 \right )}
\newcommand{\abs}[1]{\left | #1 \right |}
\renewcommand{\exp}[1]{\operatorname{exp} \lrb{#1}}
\renewcommand{\log}[1]{\operatorname{log} \lrb{#1}}
\DeclareMathOperator{\erfc}{erfc}
\DeclareMathOperator{\comb}{comb}
\DeclareMathOperator{\round}{round}
\newcommand*{\fancyrefthmlabelprefix}{thm}
\newcommand*{\fancyreflemlabelprefix}{lem}
\newcommand*{\fancyrefcorlabelprefix}{cor}
\newcommand*{\fancyrefdefilabelprefix}{defi}
\algrenewcommand\algorithmicrequire{\textbf{Input:}}
\algrenewcommand\algorithmicensure{\textbf{Output:}}
\algnewcommand\AND{\textbf{and}}
\algnewcommand\OR{\textbf{or}}
\newcommand{\nat}{Nature}
\newcommand{\prl}{PRL}
\newcommand{\pra}{PRA}
\title{Low-depth Gaussian State Energy Estimation}
\author[1]{Gumaro Rendon}
\author[2]{Peter D. Johnson}
\affil[1]{Error Corp, College Park, MD 20740}
\affil[2]{Zapata Computing Inc., Boston, MA 02110 USA}
\date{\today}
\begin{document}

\maketitle

\begin{abstract}
Recent progress in quantum computing is paving the way for the realization of early fault-tolerant quantum computers.
To maximize the utility of these devices, it is important to develop quantum algorithms that match their capabilities and limitations.
Motivated by this, recent work has developed low-depth quantum algorithms for ground state energy estimation (GSEE), an important subroutine in quantum chemistry and materials.
We detail a new GSEE algorithm which, like recent work, uses a number of operations scaling as $O(1/\Delta)$ as opposed to the typical $O(1/\epsilon)$, at the cost of an increase in the number of circuit repetitions from $O(1)$ to $O(1/\epsilon^2)$.
The relevant features of this algorithm come about from using a Gaussian window, which exponentially reduces contamination from excited states over the simplest GSEE algorithm based on the Quantum Fourier Transform (QFT).
We adapt this algorithm to interpolate between the low-depth and full-depth regime by replacing $\Delta$ with anything between $\Delta$ and $\epsilon$. 
At the cost of increasing the number of ancilla qubits from $1$ to $O(\log\Delta)$, our method reduces the upper bound on the number of circuit repetitions by a factor of four compared to previous methods. 
\end{abstract}

\section{Introduction}

In recent years, the field of quantum computing has made rapid progress, from the declaration of quantum supremacy \cite{google_sup} to running the first quantum computations that challenge the best classical methods \cite{ibm_challenge}. This has been primarily driven by the quest to realize the first useful quantum computations.
One candidate for the first useful quantum computations is the task of ground state energy estimation (GSEE) \cite{aspuru2005simulated,cao2019quantum}. In the domains of chemistry and materials science, GSEE serves as a foundational subroutine for the design and discovery of new molecules and materials. The realization of quantum computations for GSEE could lead to significant advances in these fields.

Quantum phase estimation (QPE) \cite{kitaev1995quantum} has remained the standard approach to estimating ground state energy on fault-tolerant quantum computers.
A modern approach is to run QPE on the quantum walk operator block encoding of the Hamiltonian \cite{poulin2018quantum,berry2019qubitization}.
However, 
a series of recent work \cite{kim2022fault,goings2022reliably} has predicted that the quantum computers needed to run this version of QPE for classically intractable instances would be far beyond the capabilities of near-term devices.
This motivates the development of alternative approaches to GSEE that can be run earlier quantum computers.

One approach to achieving the first useful quantum computations in chemistry and materials has been to minimize the quantum resources of the algorithm at the cost of runtime and a loss of performance guarantees.
The primary algorithm developed with this approach has been the variational quantum eigensolver (VQE) algorithm \cite{peruzzo2014variational}.
However, recent findings have highlighted various challenges encountered by VQE, which pose limitations to its performance and wider application. 
These challenges include high sample costs \cite{gonthier2020identifying, johnson2022reducing} and the complexity of heuristic optimization \cite{daniel_optimization}.

These insights suggest that the pursuit of the first useful realizations of GSEE should involve not just minimizing the quantum resources required, but also adopting strategies that have reliable performance. 
Generally, the price paid for such reliability is an increase in the number of operations per circuit.
As an example, consider a quantum chemistry Hamiltonian that has been encoded into one hundred qubits, which is on the scale of classically-intractable.
Using VQE, this instance would require one-hundred physical qubits and might use a VQE ansatz with just tens to hundreds of thousands of operations.
In contrast, to run QPE as described above, even just the circuit used to encode this Hamiltonian into a quantum operation using a state of the art method \cite{von2021quantum} requires hundreds of thousands of operations and thousands of qubits \cite{kim2022fault}; and this encoding operation must be used millions of times per circuit. 
Such findings have motivated the development of reliable quantum algorithms for GSEE that reduce the required quantum resources, making them implementable on earlier quantum computers.
Such algorithms are aimed at running on so-called early fault-tolerant quantum computers (EFTQC) \cite{lin2022heisenberg}.

Recently, in Ref.~\cite{Wang_2022} the authors developed the Gaussian filter algorithm for ground-state energy estimation which affords a reduction in the number of operations per circuit.
This translates into a reduction in the number of physical qubits needed to implement the quantum computation.
The Gaussian filter GSEE algorithm uses circuit-depths of $T=\tilde O(\Delta^{-1+\alpha}\epsilon^{-\alpha})$, where $\alpha$ can be chosen between 0 and 1, interpolating between low and high-depth circuits. Multiplying this circuit depth by the number of circuit repetitions, the runtime cost scales as $MT=\tilde O(\gamma^{-4}\epsilon^{-2+\alpha}\Delta^{1-\alpha})$, where $\gamma$ is the ground state overlap with the ansatz wavefunction $|\langle\phi_{GS}|\psi\rangle|$. Afterwards, Ref.~\cite{LinLin_2023} found an alternative method with similar cost and circuit-depth scaling results (though some of the circuit depth there is spent on state preparation).
In these methods, if the overlap with the ground state, $\gamma$, is very small, the $\tilde O(1/\gamma^4)$ scaling is unfavorable. 
Motivated by this, Ref.~\cite{Wang2023} improved the cost scaling to $MT=\tilde O(\gamma^{-2}\epsilon^{-2+\alpha}\Delta^{1-\alpha})$ using a new approach combining quantum signal~\cite{QuantumSignal_Low} processing and rejection sampling. 

A favorable feature of the above methods is that they enable tuning the circuit depth (or number of operations) to accommodate the limitations of the quantum hardware. The cost of circuit depth reduction, however, is an increase in the number of circuit repetitions. Ideally, as the circuit depth is changed from $O(1/\Delta)$ to $O(1/\epsilon)$, the algorithm runtime should approach the performance of existing algorithms such as QPE that have $O(1/\epsilon)$ circuit depth scaling.
Unfortunately, previous tunable-depth algorithms do not achieve this.
Rather, in the $O(1/\epsilon)$ circuit depth setting ($\alpha=1$), they require four times more circuit repetitions than standard QPE\footnote{Setting $\alpha = 1$, $\eta = 1$ in the algorithm of \cite{Wang_2022}, the number of samples is
$\frac{128\log{\frac{4}{5\delta}}}{\pi}\approx 41\log{\frac{4}{5\delta}}$, whereas standard QPE requires only $\frac{2\log{\frac{1}{\delta}}}{(\sqrt{2}-1)^2}\approx 12\log{\frac{1}{\delta}}$ (See \Cref{lem:QPE}).
}.

In this work, we present a low-depth ground state energy estimation algorithm that achieves the following:
\begin{itemize}
    \item The circuit depth can be set by parameter $\alpha\in[0,1]$ to accommodate the device capabilities, interpolating between $O(1/\Delta_{\textup{true}})$ and $O(1/\epsilon)$.
    \item The number of circuit repetitions is reduced compared to the previous low-depth GSEE algorithm \cite{Wang_2022} by a factor of four, recovering the performance of QPE in the $(\alpha=1)$ regime.
    \item The circuit depth is reduced compared to \cite{Wang_2022} by a factor of two.
    \item The scaling of circuit repetitions with respect to ground state overlap $\gamma$ has a square-root speedup relative to \cite{Wang_2022} (achieving the same $\tilde{O}(\gamma^{-2})$ scaling as \cite{Wang2023})
 \ref{thm:short_sampling_algorithm}.
\end{itemize}
The additional cost of our algorithm is that, while previous algorithms \cite{Wang_2022, Wang2023, LinLin_2023} required only $O(1)$ ancillary qubits, ours requires $O(\log \Delta)$ extra ancillary qubits.
This trade-off may be favorable in cases where there is an overall reduction in the number of gates.
Determining when this occurs requires counting the number of gates used by each method, which depends on the implementation details of the quantum subroutines (e.g. which block encoding method or Trotter algorithm is used).

This work is organized as follows: In the next section we will provide a high-level overview of the method, following that, in \Cref{sec:algo_results}, we will will analyze the resources required to generate a single sample from a Gaussian distribution given a target error rate, as well as the maximum algorithmic error allowed on the estimated $m_{\rm th}$ moment's central value, which is summarized in \Cref{thm:short_sampling_algorithm}. In that same section, we detail in \Cref{thm:ggsee} and its proof how we use this algorithm to estimate the ground state energy within a confidence interval $\epsilon$.

\section{Overview of Methods}

Here, we provide a high-level overview of the methods used and what modifications we make over the traditional or textbook version of the quantum phase estimation algorithm. For comparison, we provide asymptotic scalings with respect to important parameters $\Delta$, upper bound on the spectral gap, $\gamma$, the overlap with the ground state, and $\epsilon$, the target precision. This will showcase the importance of the algorithm developed here, without going through the more detailed proofs.

The alternative GSEE method we propose here, consists on a modification to the standard quantum phase estimation algorithm\cite{QFT_QPE} based on the quantum Fourier transform.
The standard quantum phase estimation algorithm can be described succinctly in the language of signal processing.
Controlled-time evolution circuits are applied to an input state
$\ket{+}^{\otimes \log{T}}\ket{\psi}$ in order to encode the time signal $\exp{iE_jt}$ into the phases of ancilla states $\ket{t}\ket{\psi_j}\rightarrow e^{iE_j t}\ket{t}\ket{\psi_j}$.
Applying the quantum Fourier transform to the ancilla converts the ancilla amplitudes into the windowed Fourier coefficients (roughly a sinc function with width proportional to $\tilde O(1/T)$), which peak around the values $E_j$.
Measurements on the ancilla return bit strings with probabilities given by the squared modulus of the corresponding discrete Fourier coefficient.
We refer to these square moduli of the coefficients as the ``Fourier spectrum'' of the input signal.
The quantum computer is able to draw samples from this distribution.
These samples can be used to estimate properties of the signal, such as the frequency of the lowest-frequency component, which encodes the ground state energy.

In the case that the initial state is an eigenstate of the time-evolution operator, then the spectrum has a single peak around the encoded eigenvalue.
A simple algorithm for estimating the location of this peak is to use the fact that the peak is (roughly) the mean of the distribution, such that each sample gives (roughly) an unbiased estimate of this mean.
The variance of this estimate is the variance of the distribution, which in this case is proportional to the square of the width of the squared sinc function, or $\tilde O(1/T^2)$.
With $M$ samples, the estimate concentrates around the mean with variance $\tilde O(1/MT^2)$.
Therefore, to achieve an estimate of accuracy $\epsilon$ with high probability, we must take $M = \tilde O(1/T^2\epsilon^2)$ samples.
Each sample costs time $T$, so the total time is
$MT = \tilde O(1/T\epsilon^2)$.
To achieve accuracy $\epsilon$ with high probability using a minimal number of samples $M= \tilde O(1)$, requires setting $T= \tilde O(1/\epsilon)$.

In the general case, the spectrum will contain a mixture of many peaks, with each height proportional to the squared overlap of the input state with the corresponding eigenstate.
If our task is to estimate the lowest eigenvalue, then choosing the mean of the sampled energies will not work.
The standard strategy\cite{Ge_2017,Berry_2009_SemiQFT} is to increase the time window to $T= \tilde O(1/\gamma^2\epsilon)$, where $\gamma$ is the overlap with the ground state, take $\tilde O(1/\gamma^2)$ samples, and output the minimum energy found.
The total runtime of this algorithm, needed to estimate the ground state energy to within $\epsilon$ with high probability, is $\tilde O(1/\gamma^2\epsilon)$, and the maximum circuit depth is $T= \tilde O(1/\gamma^2\epsilon)$.

The maximum circuit depth in the above algorithms is inversely proportional to $\epsilon$.
Just like the algorithm introduced in this manuscript, if we have a promise on the energy gap $\Delta$, we can reduce this maximum circuit depth. As before, this comes at the cost of an increase in the number of samples.
Using the QFT, however, the sampling overhead in terms of $\epsilon$ is more favorable. The cost of the quantum Fourier transform, however, is an increased number of qubits and circuit depth.

We can modify the standard QPE algorithm to yield a low-depth version of QPE as follows.
Assume that the energy gap is lower-bounded by $\Delta$.
Instead of choosing the circuit depth to be $\tilde O(1/\epsilon)$, we can instead choose it to be $\tilde O(1/\Delta)$, while keeping the number of ancilla qubits to be $O(\log{1/\epsilon})$.
Now, samples from the resulting QPE circuit will be bit strings drawn from the spectrum originating from a time signal that is windowed to time $\tilde O(1/\Delta)$.

The result is that the probability distribution peaks about each eigenvalue are broadened to a width of $\tilde O(\Delta)$. Without any other state contamination, we could recover the $\tilde O(\epsilon)$ precision by taking $\tilde O(\Delta^2/\epsilon^2)$ samples.

However, with the standard QPE, the peak becomes shifted due to interference with nearby states. That is why we devised a modified method with a distribution that decays exponentially away from the window thus eliminating significant contamination from nearby states.

Given this spectral distribution, we are able to select samples lying within a spectral window of width $\tilde O(\Delta)$ for which excited state contributions are going to be negligible. In order to have this isolated window, we need $T=\tilde O(1/\Delta)$. Now, to ensure that we get at least one outcome from within this target window we expect to take $\tilde O(1/\gamma^2)$ samples. So the number of kept samples is $\tilde O(M\gamma^2)$ out of the total samples $M$. With this, the sample mean concentrates around the true mean with variance $\tilde O(1/T^2M\gamma^2)=\tilde O(\Delta^2/M\gamma^2)$.
To achieve $\epsilon$ accuracy w.h.p., we set $M=\tilde O(\Delta^2/\epsilon^2\gamma^2)$.
The total time is $MT=\tilde O(\Delta/\epsilon^2\gamma^2)$ and the maximum circuit depth is $T=\tilde O(1/\Delta)$.

Although we have set-up the algorithm to work in the low-depth regime, $T=\tilde O(1/\Delta)$, we can always choose to replace the algorithm's input parameter $\Delta$ with anything in $[\epsilon,\Delta]$, thus we could take full advantage of the quantum speed when $\Delta \to \epsilon$.

\section{Algorithm and Results\label{sec:algo_results}}

We now state our results and describe the ground state energy estimation algorithm. The following theorem gives the performance of Algorithm \ref{alg:sampl_n_trim}. We state the theorem generally in terms of estimating any moment of a Gaussian (the mean is the first moment), although we do not develop or propose any applications that use this generalization (beyond ground state energy estimation). Later, we will specialize this algorithm for the purpose of ground state energy estimation in our main result (see Theorem \ref{thm:ggsee}).
\begin{theorem}\label{thm:short_sampling_algorithm}
There exists an algorithm to sample the $m_{\rm th}$ moment, where $m\geq 1$, of a random variable $x$ which is approximated by the normal distribution $\mathcal{N}(\tilde\sigma,\mu=E_0)$. Here, $E_0$ is the ground state energy of $H$, and $\|H\|\leq 1-\Delta$. $\Delta$ is the lower bound on the spectral gap and $\eta\leq\abs{\gamma_0}^2$ is the lower bound on the initial state squared overlap with the ground state.
To obtain at least one (approximate) sample of the $m_{\rm th}$ moment, a sufficient number of repetitions of the circuit in \Cref{fig:gaussian_qpea} is
    $$
    M_0= O\left(\frac{\log{1/\delta}}{\eta}\right),
    $$
where $\delta$ is the tolerable failure rate of the algorithm. Also, asymptotically as $\Delta \to 0$, we require $\tilde\sigma = \tilde O(\Delta/\sqrt{m})$ and a circuit depth $\tilde O\left(\frac{m}{\Delta}\right)$.
\end{theorem}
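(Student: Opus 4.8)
The plan is to read off the outcome statistics of the Gaussian-windowed phase-estimation circuit of \Cref{fig:gaussian_qpea} and then split the theorem into two nearly independent pieces: a concentration argument fixing the repetition count $M_0$, and a resource analysis fixing the window width $\tilde\sigma$ and the circuit depth. Expanding the input in the eigenbasis of $H$ as $\ket{\psi} = \sum_j \gamma_j \ket{\psi_j}$ and tracking the Gaussian window through the controlled-evolution signal $e^{iE_j t}$, the measurement outcome $x$ is distributed as $P(x) = \sum_j \abs{\gamma_j}^2 G_{\tilde\sigma}(x - E_j)$ up to truncation and discretization error, where $G_{\tilde\sigma}$ is a near-Gaussian of width $\tilde\sigma$ at each eigenvalue. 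The $j=0$ term is exactly the target $\mathcal{N}(\tilde\sigma, E_0)$, so the claim that $x$ ``is approximated by'' this normal distribution is precisely the claim that everything else is negligible inside the ground-state window.

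For the repetition count, I would argue that a single run lands in the ground-state window with probability at least $\abs{\gamma_0}^2(1-o(1)) \ge \eta(1-o(1))$, since that peak carries weight $\abs{\gamma_0}^2 \ge \eta$ and is localized. Independence of the runs then gives failure probability at most $(1-\eta)^{M_0} \le \exp{-\eta M_0}$, using $\log{1-\eta} \le -\eta$; demanding this be at most $\delta$ yields $M_0 = O(\log{1/\delta}/\eta)$. This step is routine.

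The substantive work is fixing $\tilde\sigma$ and the depth by bounding the excited-state contamination of the $m_{\rm th}$ moment. Because the nearest excited eigenvalue sits at distance at least $\Delta$, each contaminating Gaussian leaks a tail $\sim \exp{-\Delta^2/(2\tilde\sigma^2)}$ into the ground-state window; but the $m_{\rm th}$-moment integral reweights these tails by a factor growing with $m$, since the moment is dominated by values out to $\sim \sqrt{m}\,\tilde\sigma$ from the mean. Forcing the resulting bias on the estimated $m_{\rm th}$ moment below tolerance drives the width down to $\tilde\sigma = \tilde O(\Delta/\sqrt{m})$, with the logarithmic factors absorbing $\log{1/\delta}$ and the count of excited states. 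The depth then follows: realizing a window of energy-width $\tilde\sigma$ needs a time signal of duration $\sim 1/\tilde\sigma$, and faithfully resolving the $m_{\rm th}$ moment needs that signal sampled out to times $\sim \sqrt{m}/\tilde\sigma = \tilde O(m/\Delta)$.

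I expect this last step to be the main obstacle. The delicate points are pinning down the exact $\sqrt{m}$ dependence of $\tilde\sigma$ from the moment-weighted tail integral, and showing that the two approximations---replacing the full convolved spectrum by the single Gaussian $\mathcal{N}(\tilde\sigma, E_0)$, and truncating and discretizing the window so the circuit is finite---do not, in combination, shift the estimated $m_{\rm th}$ moment past tolerance.
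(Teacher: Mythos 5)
Your high-level architecture matches the paper's: a hit-rate/concentration argument for $M_0$, plus control of contamination, truncation, and discretization of the windowed spectrum to fix $\tilde\sigma$ and the depth, and your asymptotic conclusions are the right ones. But there are two concrete gaps. First, your success criterion for a single repetition --- ``lands in the ground-state window'' --- is not the event the algorithm needs. Algorithm~\ref{alg:sampl_n_trim} does not know where the window is: it takes the \emph{lowest} observed outcome and keeps everything within $2K$ to its right. Correctness therefore also requires that no repetition returns a spurious outcome to the left of the true window (from the ground-state Gaussian's own left tail, or from excited-state amplitude wrapped around by the periodicity of the QFT output --- this is where the hypothesis $\norm{H}\leq 1-\Delta$ actually enters) and none in the gap region just above it. The paper's Lemma~\ref{lem:fail_rate} handles this with a union bound over three failure events, and it is this requirement, not the hit rate alone, that forces $\tilde\sigma$ down to a multiple of $\Delta/\sqrt{\log{M_0/\delta}}$; it also needs a hit in the \emph{left half} of the window so that the $2K$-wide bucket covers the whole peak. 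Your proposal omits these failure modes, so as written the $M_0$ argument does not establish that the kept samples come from the correct window.

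Second, the ``substantive work'' you defer is exactly where the paper's proof lives, and the route you gesture at (a real-space, moment-weighted tail integral) is not the one the paper takes and is incomplete as stated. The post-selected empirical distribution is the truncated, discretized, contaminated Gaussian renormalized by the actual hit rate $\tilde p_0$; the bias of the $m$-th moment therefore contains a renormalization term $\abs{1-\tfrac{1}{\mathcal{N}}\tfrac{\abs{\gamma_0}^2}{\tilde p_0}}\abs{G_m(0)}$ that your decomposition never mentions, and an aliasing term $\abs{G_m(0)-\tilde G_m}$ from evaluating the Gaussian on an integer grid, which the paper controls via Poisson summation together with Cauchy-integral bounds on $\mathrm{d}^m G_0/\mathrm{d}k^m$ (Lemmas~\ref{lem:fnbound} and~\ref{lem:disc_errors}), finally inverting for $2^q$ with the Lambert $W$ function in Theorem~\ref{thm:2q}. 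The $\sqrt{m}$ in $\tilde\sigma=\tilde O(\Delta/\sqrt{m})$ emerges there from reconciling the upper bound on $\tilde\sigma$ needed to suppress discretization error against the lower bound on $2^q$ needed for accuracy, not directly from a tail estimate. Your intuitions for the scalings are sound, but the proposal identifies the main obstacle without surmounting it.
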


\begin{algorithm}
\caption{Sample from Gaussian around GS Eigenvalue}\label{alg:sampl_n_trim}
\begin{algorithmic}
\Require $\delta_{\rm max}\leq 0.01$, $\eta$, $\Delta_{\rm max}$, $m$, $\tilde\varepsilon_m$
\Ensure Samples that approximate moment $m$ of normally distributed random variable with error up to $\tilde\varepsilon_m$
\State $\Delta \gets \Delta_{\rm max} $
\State $\delta \gets \delta_{\rm max}$ 
\State $M_0\equiv\left\lceil\frac{16}{3 \eta}\log{\frac{3}{ \delta }}\right\rceil$
\State $\tilde\sigma \equiv \left( \frac{1}{\sqrt{m}} \frac{ \Delta/6 }{\sqrt{2 \log{\frac{4 M_0}{\delta} \left( 1 + \sqrt{\frac{5}{3}}\sqrt{\frac{1-\eta}{\eta}}\right)^2} } }\right)$
\State $q \equiv \left\lceil {\rm log}_2\left(\frac{3 m}{\pi \Delta}\sqrt{1+3\log{\frac{m}{4 e \pi^2 \tilde\sigma^2 }\left(\frac{(m!)^2 C(\eta)}{\tilde\varepsilon_m}\right)^{2/m}}} \sqrt{2 \log{\frac{4 M_0}{\delta} \left( 1 + \sqrt{\frac{5}{3}}\sqrt{\frac{1-\eta}{\eta}}\right)^2} }\right)\right\rceil$
\State Decrease $\Delta$ such that $\abs{G_0(0)-\Tilde{G}_0} \leq 1/8$, $\abs{ \tilde{G}_0 - \tilde{F}_0 } \leq 1/8$, $\| \epsilon^{(R)} \|/\sqrt{\eta} \leq 1/8$, and $\frac{m}{2 e \pi^2 \tilde\sigma^2 }\left(\frac{(m!)^2 C(\eta)}{\tilde\varepsilon_m}\right)^{2/m} > e$ (While updating $q$ and $\tilde\sigma$)

\State Decrease $\delta$ such that $\frac{\sqrt{ \log{\frac{4 M_0}{\delta} \left( 1 + \sqrt{\frac{5}{3}}\sqrt{\frac{1-\eta}{\eta}}\right)^2} }}{\sqrt{1+3\log{\frac{m}{4 e \pi^2 \tilde\sigma^2 }\left(\frac{(m!)^2 C(\eta)}{\tilde\varepsilon_m}\right)^{2/m}}}} \geq 2$ (While updating $M_0$, $\tilde\sigma$, and $q$)
\State ${\rm basket} \gets \{\}$
\State ${\rm basket_{aux}} \gets \{\}$
\For{$j \gets 1$ to $M_0$}
    \State Draw sample and add to ${\rm basket_{aux}}$
\EndFor
\State Take lowest key with non-zero value from ${\rm basket_{aux}}$ and add it to basket
\State Also add to basket all the outcomes in ${\rm basket_{aux}}$ that are $2K =  \lfloor (2/3) 2^q \Delta \rfloor-1$ to the right of lowest
\State Return basket
\end{algorithmic}
\end{algorithm}

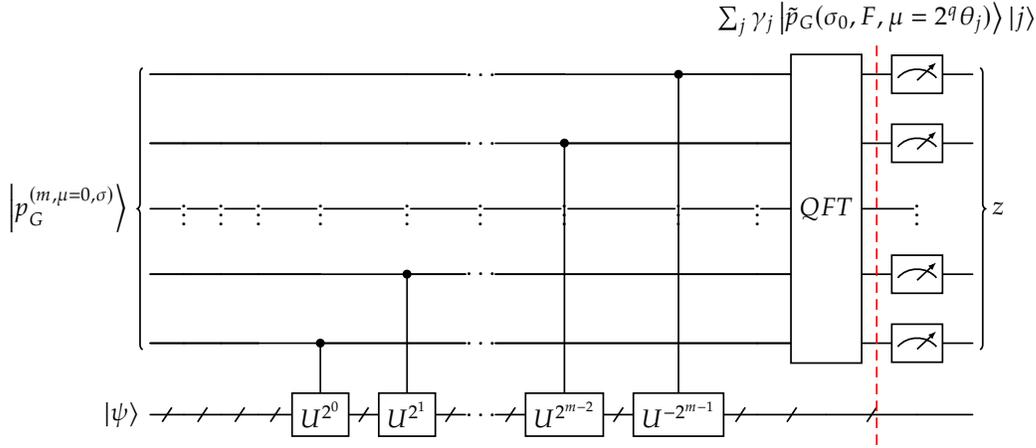
\begin{figure*}[t!]
\scalebox{0.8}{
\begin{quantikz}
 \lstick[wires=5]{$\ket{p^{(m,\mu=0,\sigma)}_{G}} $} & \qw & \qw &  \qw & \qw & \qw & \qw\ldots &\qw & \ctrl{5}  & \qw & \gate[wires=5,nwires=3]{QFT}\qw\slice{$\sum_j \gamma_j \ket{\tilde{p}_G ( \sigma_0,F,\mu=2^q\theta_j)}\ket{j}$} & \meter{} & \rstick[wires=5]{$z$}
 \\
  & \qw &\qw &\qw& \qw&\qw & \qw\ldots & \ctrl{4}   & \qw & \qw & \qw &  \meter{} &
 \\
  & \vdots & \vdots &\vdots& \vdots & \vdots & \vdots &\vdots&\vdots&\vdots&\vdots& \vdots
 \\
  & \qw &\qw &\qw& \qw&\ctrl{2}  & \qw\ldots & \qw   & \qw & \qw & \qw &  \meter{} &
 \\
  & \qw & \qw&\qw& \ctrl{1} & \qw  & \qw\ldots & \qw  & \qw & \qw &  \qw & \meter{} &
 \\
 \lstick{$\ket{\psi}$} & \qwbundle[alternate]{} & \qwbundle[alternate]{} & \qwbundle[alternate]{} & \gate{U^{2^0}}\qwbundle[alternate]{}& \gate{U^{2^1}}\qwbundle[alternate]{} & \qwbundle[alternate]{}\ldots & \gate{U^{2^{m-2}}}\qwbundle[alternate]{} & \gate{U^{-2^{m-1}}}\qwbundle[alternate]{} & \qwbundle[alternate]{} & \qwbundle[alternate]{} & \qwbundle[alternate]{} &
\end{quantikz}
}
\caption{Circuit to implement a Gaussian $m$-qubit phase estimation algorithm. Here, $U=\exp{i2\pi H}$, the evolution operator for the target Hamiltonian $H$.}\label{fig:gaussian_qpea}
\end{figure*}

The algorithm described above gives us a random variable whose mean is close to $\tilde\mu_1=E_0$ when specializing to $m=1$.
Therefore, if we construct an estimator for this mean from a sample average over sufficiently many samples, then we can ensure that this estimator will be close to $E_0$ with high probability.

We give a useful statistical lemma that we can combine with the previous results to establish our main result.

\begin{lemma}
\label{lem:stat_error}
Assume that $x\in[a,a+b]$ is a random variable such that $|\mathbb{E}(x)-\theta|<c\epsilon$.
Then for any $c< 1$ the random variable $x_M = \frac{1}{M}\sum_{i=1}^{M}x_i$, constructed from $M$ independent draws of random variable $x$, satisfies $|x_M-\theta|\leq \epsilon$ with probability greater than $1-\delta$ when
\begin{align}
    M\geq \frac{b^2}{2(\epsilon-c\epsilon)^2}\ln{\frac{2}{\delta}}.
\end{align}
\end{lemma}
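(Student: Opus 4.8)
The plan is to treat $x_M$ as an empirical mean of bounded i.i.d.\ random variables and to control its deviation from the target $\theta$ by combining a triangle-inequality split of bias and fluctuation with Hoeffding's inequality.

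First I would decompose the total error as
\[
|x_M - \theta| \le |x_M - \mathbb{E}(x)| + |\mathbb{E}(x) - \theta|,
\]
where the second term is the bias, bounded by $c\epsilon$ under the hypothesis. It therefore suffices to guarantee the statistical fluctuation $|x_M - \mathbb{E}(x)| \le (1-c)\epsilon = \epsilon - c\epsilon$, since the two pieces then sum to at most $\epsilon$. This is precisely where the assumption $c<1$ enters: it ensures the slack $(1-c)\epsilon$ left over for concentration is strictly positive, so the required sample count remains finite.

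Next I would apply the two-sided Hoeffding inequality to the average of the $M$ independent copies $x_i \in [a,a+b]$, each of range $b$, so that the sum of squared ranges is $\sum_{i=1}^M b^2 = Mb^2$. This yields
\[
\P{|x_M - \mathbb{E}(x)| \ge t} \le 2\exp{-\frac{2Mt^2}{b^2}}.
\]
Setting $t = \epsilon - c\epsilon$ and requiring the right-hand side to be at most $\delta$, I would solve the resulting exponential inequality for $M$; rearranging $2\exp{-2M(\epsilon-c\epsilon)^2/b^2}\le\delta$ gives exactly $M \ge \frac{b^2}{2(\epsilon-c\epsilon)^2}\ln\frac{2}{\delta}$. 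On the complementary event, which has probability at least $1-\delta$, the fluctuation bound holds and the decomposition above delivers $|x_M - \theta| \le \epsilon$.

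I do not anticipate a serious obstacle, as this is essentially a textbook Hoeffding argument; the only points requiring care are tracking the bias so that the concentration inequality must cover the reduced slack $(1-c)\epsilon$ rather than the full $\epsilon$, and invoking the correct sum of squared ranges $Mb^2$ so that the exponent rearranges cleanly into the stated threshold.
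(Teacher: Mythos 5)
Your proposal is correct and follows essentially the same route as the paper's proof: a triangle-inequality split that reserves $c\epsilon$ for the bias and $(1-c)\epsilon$ for the fluctuation, followed by the two-sided Hoeffding bound with range $b$ and solving $2\exp{-2M(\epsilon-c\epsilon)^2/b^2}\leq\delta$ for $M$. No gaps.
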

\begin{proof}
We first upper bound the failure probability using a triangle inequality:
\begin{align}
    \textup{Pr}(|x_M-\theta|>\epsilon) &\leq \textup{Pr}(|x_M-\mathbb{E}(x)|>\epsilon-|\mathbb{E}(x)-\theta|)\\\nonumber
    & =\textup{Pr}(|x_M-\mathbb{E}(x)|>\epsilon-c\epsilon)
\end{align}
By the Hoeffding bound, the probability that $x_M$ deviates by more than $\epsilon-c\epsilon$ from its mean is upper bounded as
\begin{align}
\textup{Pr}(|x_M-\mathbb{E}(x)|>\epsilon-c\epsilon)\leq 2\exp{-2M(\epsilon-c\epsilon)^2/b^2}.  
\end{align}
From this, we have that with probability greater than $1-\delta$, $x_M$ is within $\epsilon$ of $\theta$ when
\begin{align}
    M\geq \frac{b^2}{2(\epsilon-c\epsilon)^2}\ln{\frac{2}{\delta}}.
\end{align}
\end{proof}
\noindent In the above result, we see that when we are able to design an estimator for which $c=O(1)<1$, then the sample complexity is $O(b^2/\epsilon^2)$.

We can now state the main result.
\begin{theorem}\label{thm:ggsee}
\Cref{alg:ggsee} provides an estimate, $\hat\mu_1$, of $\tilde\mu_1$ within an error $\epsilon$ using $M=\left\lceil \frac{8\Delta^2}{9\epsilon^2(1-c)^2} \log{\frac{4}{\delta}}\right\rceil$ calls of \Cref{alg:sampl_n_trim} with a total error rate that is $\delta$, provided that the error rate for \Cref{alg:sampl_n_trim} is at most $\frac{\delta}{4 M}$ and its target error is $c\epsilon$, where $c=O(1)\leq 1$.  
\end{theorem}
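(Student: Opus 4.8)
The plan is to treat each call of \Cref{alg:sampl_n_trim}, specialized to $m=1$, as producing one bounded and nearly-unbiased sample of $\tilde\mu_1 = E_0$, and then to invoke the concentration bound of \Cref{lem:stat_error} on the sample average $\hat\mu_1 = \frac{1}{M}\sum_{i=1}^M x_i$ formed by \Cref{alg:ggsee}. Concretely, I would apply \Cref{lem:stat_error} with $\theta = \tilde\mu_1$. By \Cref{thm:short_sampling_algorithm} with $m=1$ (for which $\tilde\mu_1 = E_0$), a single successful call draws $x_i$ from a distribution approximating $\mathcal{N}(\tilde\sigma, \mu = E_0)$ restricted to the retained window, and the hypothesis that the per-call target error equals $c\epsilon$ supplies exactly the bias bound $|\mathbb{E}(x) - \tilde\mu_1| \le c\epsilon$ demanded by the lemma. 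It then remains only to identify the range $b$ and to account for the two failure mechanisms.

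First I would pin down the bounded range $[a, a+b]$. The trimming step of \Cref{alg:sampl_n_trim} discards every outcome lying more than $2K = \lfloor (2/3)2^q\Delta\rfloor - 1$ bins above the lowest populated bin, so each retained $x_i$ is confined to an interval of $2K$ frequency bins; converting bin index to energy through the $q$-qubit resolution (bin width $2/2^q$) gives a window width $b = 4\Delta/3$. Substituting into \Cref{lem:stat_error} yields $M \ge \frac{b^2}{2\epsilon^2(1-c)^2}\ln(2/\delta') = \frac{8\Delta^2}{9\epsilon^2(1-c)^2}\ln(2/\delta')$ for a statistical failure budget $\delta'$.

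Next I would allocate the total error $\delta$ across the two independent failure events. Taking $\delta' = \delta/2$ for the statistical concentration turns the logarithm into $\ln(4/\delta)$, reproducing $M = \left\lceil \frac{8\Delta^2}{9\epsilon^2(1-c)^2}\log{\frac{4}{\delta}}\right\rceil$. The remaining budget covers the event that some call of \Cref{alg:sampl_n_trim} fails: since each call fails with probability at most $\delta/(4M)$, a union bound over the $M$ calls bounds this probability by $\delta/4$. Conditioning on the complementary event, that all calls succeed, makes the $x_i$ genuine i.i.d.\ draws from the good, window-restricted distribution, so \Cref{lem:stat_error} applies on that event; adding the two contributions gives a total failure probability of at most $\delta/2 + \delta/4 \le \delta$, as claimed.

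The main obstacle is the middle step: rigorously certifying that the conditioned per-call samples are simultaneously confined to a window of width $O(\Delta)$ and unbiased to within $c\epsilon$. Both properties rest on \Cref{thm:short_sampling_algorithm} controlling excited-state contamination, so that the retained window genuinely brackets the ground-state peak rather than a spurious one, and on the finite-$q$ discretization not shifting the conditional mean by more than the allotted $c\epsilon$. Getting the bin-to-energy conversion constant exactly right is what fixes $b = 4\Delta/3$ and hence the precise prefactor $8/9$; once that identification is secured, the remainder is the bookkeeping above.
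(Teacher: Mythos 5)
Your proposal is correct and follows essentially the same route as the paper: decompose the failure probability into the statistical deviation event (handled by \Cref{lem:stat_error} with $b=4\Delta/3$ from \Cref{thm:final_sampling_algorithm} and budget $\delta/2$, yielding the stated $M$) and the event that some call of \Cref{alg:sampl_n_trim} fails. The only cosmetic difference is that you bound the latter by a direct union bound giving $M\cdot\frac{\delta}{4M}=\delta/4$, whereas the paper expands $1-(1-\tilde\delta_1)^M$ and uses $\log{\frac{1}{1-\tilde\delta_1}}\leq 2\tilde\delta_1$ to get $\delta/2$; both close the argument with total failure probability at most $\delta$.
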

\begin{proof}
Algorithm \ref{alg:ggsee} fails when $|\hat\mu_1-\tilde\mu_1|>\epsilon$ given underlying \Cref{alg:sampl_n_trim} does not fail or when \Cref{alg:sampl_n_trim} fails at least once.
The underlying sampling algorithm, \Cref{alg:sampl_n_trim}, involves post-selecting only those outcomes $x$ satisfying $x\in[x_{\textup{min}},x_{\textup{min}}+b]$. We know that $b\leq4\Delta/3$ from \Cref{thm:final_sampling_algorithm}. 
The probability of failure of the higher level algorithm, \Cref{alg:ggsee}, is upper bounded as
\begin{align}
    \textup{Pr}(\textup{Alg. 2 fails}) &\leq \textup{Pr}(|\hat\mu_1-\tilde\mu_1|>\epsilon \mid \textup{Alg. 1 does not fail throughout}) \cr 
    &+ \textup{Pr}(\textup{Alg. 1 fails at least once})\cr
    & = \delta_2 + (1 - (1-\tilde{\delta}_1)^M) \cr
    & = \delta_2 + 1 - \exp{\log{1-\tilde{\delta}_1}^M} \cr
    & \leq \delta_2 - M\log{1-\tilde{\delta}_1} \cr
    & = \delta_2 + M\log{\frac{1}{1-\tilde{\delta}_1}} \cr
\end{align}
Using that
\begin{align}
    \log{\frac{1}{1-\tilde\delta_1}} \leq \frac{\tilde\delta_1}{(1-\tilde\delta_1)} 
\end{align}
for $0\leq \tilde\delta_1\leq 1$,
then, provided that $\tilde\delta_1\leq 1/2$, we can simplify to
\begin{align}
    \log{\frac{1}{1-\tilde\delta_1}} \leq 2\tilde\delta_1. 
\end{align}
With this, we obtain
\begin{align}
    \textup{Pr}(\textup{Alg. 2 fails}) & \leq \delta_2 + 2M\tilde\delta_1.
\end{align}
We now identify $\delta_2 = \delta/2$ and $\tilde\delta_1 = \frac{\delta}{4 M}$.

Thus, every time we run \Cref{alg:sampl_n_trim} we pass it the target minimum error rate $\tilde\delta_1 = \frac{\delta}{4 M}$, while $\delta_2=\delta/2$ is passed to the procedure of \Cref{lem:stat_error} such that the overall failure rate of \Cref{alg:ggsee} is bounded from above by $\delta$. 
By \Cref{thm:final_sampling_algorithm,lem:stat_error}, Algorithm \ref{alg:ggsee} can use $\tilde{O}(\frac{\Delta^2}{\eta\epsilon^2})$ samples from quantum circuits with expected depth $\tilde{O}(1/\Delta)$ to generate a random variable that has mean within $c\epsilon$ of $\tilde\mu_1$, where $c=O(1)\leq 1$.
\end{proof}

\begin{algorithm}
\caption{Gaussian Ground State Energy Estimation}\label{alg:ggsee}
\begin{algorithmic}
\Require $\delta$, $\eta$, $\Delta$, $\tilde\sigma_{\rm max}$, $m=1$, $\epsilon$, $c$
\Ensure $\hat\mu_1$ which is within $\epsilon$ of $\tilde\mu_{1}$
\State $\delta_2\gets \frac{\delta}{2}$
\State $b\gets \frac{4\Delta}{3}$
\State $M\gets \left\lceil \frac{b^2}{2\epsilon^2(1-c)^2} \log{\frac{2}{\delta_2}}\right\rceil=\left\lceil \frac{8\Delta^2}{9\epsilon^2(1-c)^2} \log{\frac{4}{\delta}}\right\rceil$
\State $\tilde\delta_1\gets \frac{\delta}{4 M}$
\For{$j\gets 1$ to M}
\State Pass arguments  $\tilde\delta_1$, $\eta$, $\Delta$, $m=1$, $c\epsilon$ to \Cref{alg:sampl_n_trim} to obtain at least one sample of $x$.
\EndFor
\State Return $\hat\mu_1$
\end{algorithmic}
\end{algorithm}

We now introduce a corollary which tells us how we bridge the two limits of application of \Cref{alg:ggsee}: shallow depth and full depth. This also illustrates the trade-off in number of samples as we go to shallower depths. Here, $\Delta_{\rm true}$ is the true spectral gap of the target Hamiltonian and $\epsilon$ remains the target tolerance on the estimated ground state. Thus, we parameterize $\Delta = \Delta_{\rm true}^{1-\alpha}\epsilon^{\alpha}$ such that we get the shallow limit with $\alpha =0$ and the full-depth limit when $\alpha =1 $.

\begin{corollary}\label{cor:interp}
Given that we fix $c=1 - 2 \sqrt{2}/3$ and parametrize $\Delta$ using $\Delta = \Delta_{\rm true}^{1-\alpha}\epsilon^{\alpha}$, \Cref{alg:ggsee} provides an estimate, $\hat\mu_1$, of $\tilde\mu_1$ within an error $\epsilon$ using $M M_0$ calls of the circuit in \Cref{fig:gaussian_qpea}, where
$$
M=\left\lceil \epsilon^{-2+2\alpha} \Delta_{\rm true}^{2-2\alpha}  \log{\frac{4}{\delta}}\right\rceil$$
and
$$
M_0=\left\lceil\frac{16}{3 \eta}\log{\frac{12 \left\lceil \epsilon^{-2+2\alpha} \Delta_{\rm true}^{2-2\alpha} \log{\frac{4}{\delta}}\right\rceil }{ \delta }}\right\rceil,
$$
 with a total error rate that is $\delta$, where each circuit uses a number of queries of $c$-$U$ of at most
\begin{align*}
    2^q &\leq\Delta_{\rm true}^{-1+\alpha}\epsilon^{-\alpha}\frac{6\sqrt{2}}{\pi}\sqrt{1+3\log{ \frac{18}{e \pi^2 \Delta^2 } \left(\frac{ 3 C(\eta)}{(3-2\sqrt{2})\epsilon}\right)^{2}\log{\frac{4 M_0}{\delta} \left( 1 + \sqrt{\frac{5}{3}}\sqrt{\frac{1-\eta}{\eta}}\right)^2}}} \cr 
    &\qquad\times\sqrt{\log{\frac{4 M_0}{\delta} \left( 1 + \sqrt{\frac{5}{3}}\sqrt{\frac{1-\eta}{\eta}}\right)^2} },
\end{align*}
where
\begin{align}
    C(\eta) &= \left[\left(\frac{128}{45}\right)\exp{12}  \left(12 + 3 + \frac{9}{4}\sqrt{\frac{1}{\eta}}   \right) \right.\cr
    &\left. +10\exp{12} + \frac{55}{8} \exp{2} \left(  1  + \sqrt{\frac{5}{3}}\sqrt{\frac{1}{\eta}} \right)\right].
\end{align}
\end{corollary}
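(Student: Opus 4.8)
The plan is to treat the statement as a direct specialization of \Cref{thm:ggsee} and \Cref{thm:short_sampling_algorithm}: I fix the constant $c = 1 - 2\sqrt{2}/3$ and reparametrize the effective gap as $\Delta = \Delta_{\rm true}^{1-\alpha}\epsilon^{\alpha}$, then carry both substitutions through the expressions already derived. No new probabilistic argument is required, since the total failure-probability accounting has already been discharged in the proof of \Cref{thm:ggsee}; all of the remaining work is algebraic simplification.

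First I would handle the sample count. Starting from $M = \left\lceil \frac{8\Delta^2}{9\epsilon^2(1-c)^2}\log{\frac{4}{\delta}}\right\rceil$ in \Cref{thm:ggsee}, I substitute $1-c = 2\sqrt{2}/3$, so that $(1-c)^2 = 8/9$ and the numerical prefactor collapses as $\frac{8}{9(1-c)^2} = \frac{8}{9}\cdot\frac{9}{8} = 1$, leaving $M = \left\lceil \frac{\Delta^2}{\epsilon^2}\log{\frac{4}{\delta}}\right\rceil$. Inserting $\Delta^2 = \Delta_{\rm true}^{2-2\alpha}\epsilon^{2\alpha}$ then gives the stated $M = \left\lceil \epsilon^{-2+2\alpha}\Delta_{\rm true}^{2-2\alpha}\log{\frac{4}{\delta}}\right\rceil$. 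The claimed $M_0$ follows from $M_0 = \left\lceil\frac{16}{3\eta}\log{\frac{3}{\delta}}\right\rceil$ of \Cref{alg:sampl_n_trim}, evaluated at the failure rate $\tilde\delta_1 = \delta/(4M)$ that \Cref{alg:ggsee} feeds to the sampling subroutine: replacing $\delta$ by $\delta/(4M)$ turns the logarithm's argument into $12M/\delta$, and substituting the just-derived $M$ reproduces the displayed expression.

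The one step with genuine bookkeeping is the per-circuit query bound $2^q$. I would begin from the definition of $q$ in \Cref{alg:sampl_n_trim} specialized to $m=1$ with target error $\tilde\varepsilon_1 = c\epsilon$, and remove the ceiling: since $q = \lceil \logb{S}\rceil$ where $S$ denotes the argument of the base-two logarithm in \Cref{alg:sampl_n_trim}, we have $2^q \leq 2S$, paying a factor of two. Writing $L \defeq \log{\frac{4M_0}{\delta}\left(1+\sqrt{\frac{5}{3}}\sqrt{\frac{1-\eta}{\eta}}\right)^2}$ for the recurring logarithmic factor, the key substitution is for $\tilde\sigma = \frac{\Delta/6}{\sqrt{2L}}$: inserting this into $\frac{1}{4e\pi^2\tilde\sigma^2}$ gives $\frac{18L}{e\pi^2\Delta^2}$, and combining with $c\epsilon = \frac{3-2\sqrt{2}}{3}\epsilon$ converts $\left(\frac{C(\eta)}{\tilde\varepsilon_1}\right)^2$ into $\left(\frac{3C(\eta)}{(3-2\sqrt{2})\epsilon}\right)^2$, reproducing exactly the argument of the inner logarithm. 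Finally, the leading $\frac{3}{\pi\Delta}$, the $\sqrt{2}$ extracted from the outer factor $\sqrt{2L}$, and the doubling from the ceiling combine into $\frac{6\sqrt{2}}{\pi\Delta}$; writing $\Delta^{-1} = \Delta_{\rm true}^{-1+\alpha}\epsilon^{-\alpha}$ produces the stated prefactor, while the leftover $\sqrt{L}$ matches the outer square root. I would leave $\Delta$ and $M_0$ unexpanded inside the logarithms, exactly as in the statement, for compactness.

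The main obstacle is purely organizational rather than mathematical: the factor $L$ appears both inside $\tilde\sigma$ and as the standalone outer square root, so one must verify that after the $\tilde\sigma$ substitution it enters with the correct multiplicity---linearly inside the inner logarithm and as a single square root outside---and that the constants $36$, $2$, and $4$ arising from $\tilde\sigma^2 = \Delta^2/(72L)$ combine correctly into the coefficient $18$ of $\frac{1}{e\pi^2\Delta^2}$. Beyond this careful tracking of constants and nested logarithms I anticipate no analytic difficulty, since the corollary merely re-expresses \Cref{thm:ggsee} in the interpolating parametrization.
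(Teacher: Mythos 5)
Your proposal is correct and matches the paper's intent exactly: the corollary is stated without an explicit proof precisely because it is the direct specialization of \Cref{thm:ggsee} and \Cref{thm:final_sampling_algorithm} obtained by substituting $c=1-2\sqrt{2}/3$ (so $(1-c)^2=8/9$ cancels the prefactor of $M$), $\tilde\delta_1=\delta/(4M)$ into $M_0$, and $\tilde\sigma^2=\Delta^2/(72L)$ with $\tilde\varepsilon_1=c\epsilon$ into the query count, all of which you carry out with the right constants (including the factor of two from the ceiling on $q$ and the resulting $6\sqrt{2}/\pi$ prefactor).
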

For the purpose of comparison, we build off of the analysis in \cite{Cleve_1998} to compare the constant factors of our algorithm to that of quantum phase estimation.
\begin{lemma}\label{lem:QPE}
The traditional quantum phase estimation algorithm~\cite{Cleve_1998} can be used to estimate an eigenphase of $c$-$U$ to within error $\epsilon$ using a register with $q=\left\lceil{\rm log_2}(1/\epsilon)\right\rceil$ ancilla qubits, a query depth of $2^q$, and samples
   \begin{align*}
    n \geq \frac{2}{(\sqrt{2}-1)^2}\log{\frac{1}{\delta}},
\end{align*}
where $\delta$ is the tolerable failure rate.
\end{lemma}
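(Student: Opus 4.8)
The plan is to reduce the lemma to two standard facts about textbook QPE: a single-run concentration bound (the technical core, imported from \cite{Cleve_1998}) and a median-of-repetitions argument whose constant is tuned so that a per-run success probability of $1/\sqrt{2}$ reproduces the stated sample count.

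First I would fix the circuit and the accounting. On an eigenstate $\ket{\psi}$ with $U\ket{\psi}=e^{2\pi i\theta}\ket{\psi}$, textbook QPE prepares the uniform superposition on a $q$-qubit register, applies the controlled powers $c\text{-}U^{2^0},\dots,c\text{-}U^{2^{q-1}}$, applies the inverse QFT, and measures an integer $y$, returning $\tilde\theta=y/2^q$. Taking $q=\lceil\logb{1/\epsilon}\rceil$ forces the grid spacing to satisfy $2^{-q}\le\epsilon$, makes the total number of $c\text{-}U$ queries $\sum_{k=0}^{q-1}2^{k}=2^{q}-1\le 2^{q}$, and guarantees that the two grid points $\lfloor 2^q\theta\rfloor$ and $\lceil 2^q\theta\rceil$ bracketing $\theta$ each lie within one spacing $2^{-q}\le\epsilon$ of $\theta$.

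Next comes the key step. Writing $\theta=(b+f)/2^q$ with $f\in[0,1)$, the outcome law is the Fejér kernel $\Pr[y=m]=2^{-2q}\sin^2(\pi 2^q(\theta-m/2^q))/\sin^2(\pi(\theta-m/2^q))$, so the combined weight of the two bracketing outcomes $b,b+1$ is $2^{-2q}\sin^2(\pi f)\bigl(\sin^{-2}(\pi f/2^q)+\sin^{-2}(\pi(1-f)/2^q)\bigr)$. Using $\sin x\le x$ in each denominator lower-bounds this by $\pi^{-2}\sin^2(\pi f)\bigl(f^{-2}+(1-f)^{-2}\bigr)$, which is symmetric about $f=1/2$ and minimized there with value $8/\pi^2>1/\sqrt2$. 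Hence a single run lands within $\epsilon$ of $\theta$ with probability $p\ge 8/\pi^2$, and I would record the weaker bound $p\ge 1/\sqrt2$ since it is what produces the clean constant. I expect this Fejér-kernel tail estimate to be the main obstacle, since everything downstream rests on the one calculus fact that the $\mathrm{sinc}$-type sum is minimized at the half-grid point.

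Finally I would amplify by repetition: run the circuit $n$ times independently and output the \emph{median} $\hat\theta$. If strictly more than half the runs are good (within $\epsilon$ of $\theta$), the median is good as well; contrapositively, failure forces the number of good runs $G=\sum_i X_i$ (with $X_i$ the indicator that run $i$ is good) to obey $G\le n/2$. Since $\E{G}\ge pn$ with $p\ge 1/\sqrt2>1/2$, the one-sided Hoeffding inequality gives $\Pr[G\le n/2]\le\exp{-2n(p-1/2)^2}\le\exp{-\tfrac{n}{2}(\sqrt2-1)^2}$, where I used $p-1/2=(\sqrt2-1)/2$. Demanding that this be at most $\delta$ yields $n\ge\frac{2}{(\sqrt2-1)^2}\log{1/\delta}$, which is the claim. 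I would stress that the median, rather than a mode or bitwise majority, is what makes this work: the good runs may split their weight between the two bracketing grid points, so no single outcome need win a majority even when almost every run is accurate.
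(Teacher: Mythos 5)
Your proposal is correct and shares the paper's overall skeleton---a single-run success probability followed by a Hoeffding tail bound on the number of good repetitions---but it obtains the single-run bound by a genuinely different and, as it happens, more solid route. The paper imports the Cleve--Ekert--Macchiavello--Mosca extra-ancilla formula, sets the number of extra qubits to $1$, and reads off a per-run failure rate of $1/(2\sqrt{2})$, i.e.\ a success probability $p \geq 1 - 1/(2\sqrt{2}) \approx 0.646$; you instead compute the Fej\'er-kernel weight of the two bracketing grid points directly, obtaining $p \geq 8/\pi^2 \approx 0.811$, and then deliberately weaken this to $p \geq 1/\sqrt{2}$. The difference matters for the constant: with $p = 1/\sqrt{2}$ one has $p - 1/2 = (\sqrt{2}-1)/2$, so Hoeffding gives exactly $\exp{-\tfrac{n}{2}(\sqrt{2}-1)^2}$ and hence the stated $n \geq \tfrac{2}{(\sqrt{2}-1)^2}\log{\tfrac{1}{\delta}}$, whereas with the paper's $p = 1-1/(2\sqrt{2})$ one has $p-1/2 = (\sqrt{2}-1)/(2\sqrt{2})$ and Hoeffding only gives $\exp{-\tfrac{n}{4}(\sqrt{2}-1)^2}$; the paper's final inequality is therefore reversed at that step and would actually yield twice the advertised sample count. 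Your version also replaces the paper's loosely specified ``majority vote'' with a median (equivalently, a vote on the event ``within $\epsilon$ of $\theta$''), which is genuinely needed because the good probability mass can split between the two bracketing outcomes $b$ and $b+1$ so that no single outcome need win a plurality. The one point deserving a sentence in a final write-up is that the phase lives on a circle: if $\theta$ is within $\epsilon$ of $0 \equiv 1$, the median must be taken modulo $1$ rather than on $[0,1)$---a standard but non-vacuous technicality.
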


\begin{proof}
For a tolerable failure rate $\delta$, the traditional quantum phase estimation algorithm \cite{Cleve_1998} uses
\begin{align}
    m =  \left\lceil {\rm log}_2\left(\frac{1}{2 \varepsilon}\right) + \frac{1}{2} \right\rceil
\end{align}
additional ancilla qubits beyond those needed to ensure an $\epsilon$-accurate estimate.
Rather than increasing the number of ancilla qubits, we could alternatively increase the number of samples and carry out a majority vote on the outcomes. To find the minimal quantum resources needed to ensure no more than a $\delta$ failure rate, we set $m=1$ and bound the failure rate of the majority vote strategy.
With $m=1$, a single sample has failure rate
\begin{align}
    \delta \leq \frac1{2\sqrt{2}}.
\end{align}
yielding a success rate of
\begin{align}
    p_{\rm success} \geq 1-\frac{1}{2\sqrt{2}} \approx 0.6464.
\end{align}
For even $n$, a tail bound given by Hoeffding's inequality for a binomial random variable $X$ is:
\begin{align}
    \delta = P(X\leq n/2)&=F(k=n/2;n,p_{\rm success}) \leq \exp{-2 n\left(p_{\rm success}-1/2\right)^2} \cr
    & \leq \exp{- \frac{n}{2}\left(\sqrt{2}-1)\right)^2}
\end{align}
Solving for $n$
\begin{align}
    n \geq \frac{2}{(\sqrt{2}-1)^2}\log{\frac{1}{\delta}}.
\end{align}
\end{proof}
This analysis gives constant factors that show that (according to the established upper bounds) the full-depth version of the Gaussian phase estimation algorithm, which achieves a factor of $16/3\approx 5.33$ in the $\eta\rightarrow 1$ limit, ``recovers'' (as in, does not exceed) the sample counts required by the traditional quantum phase estimation algorithm, which is roughly $11.66$.
This shows that the Gaussian phase estimation algorithm enables a smooth transition in the sample complexity from low depth to high depth.

\section{Discussion and Outlook}

We have detailed a new algorithm for ground state energy estimation using a modification of standard quantum phase estimation, which exponentially suppresses contiguous state contamination on the resulting spectrum. The method interpolates between low-depth, $O(1/\Delta_{\rm true})$, and full-depth, $O(1/\epsilon)$, which can be used to accommodate the capabilities of the quantum architecture. 
This feature makes the algorithm well-suited to the era of early fault-tolerant quantum computing.

Our work finds an improvement with respect to~\cite{Wang_2022} with a factor of four reduction in the number of samples and a factor of two reduction in the depth (in the low-depth regime). Moreover, it achieves a square-root improvement relative to \cite{Wang_2022} in terms of the overlap scaling, achieving the same $\tilde O(\gamma^{-2})$ dependence established in \cite{Wang2023}. 
The cost scaling with respect to important parameters is comparable to other state-of-the-art alternatives (e.g. \cite{Wang2023},\cite{LinLin_2023}). Constant factors in the analytical upper bounds depend on the proof technique and could be overly pessimistic. 
Although tighter analytic bounds for all methods could be achieved in principle, we suggest that numerical investigations be used determine which method is most advantageous.
% Exact (dis)advantage over methods in \cite{Wang2023} or \cite{LinLin_2023} requires obtaining empirical performance benchmarks on specific ground-state energy estimation problems scenarios.

For future work, we anticipate that the generalization to estimating higher-order moments may find application or that the analysis may be of independent interest. 
It would also be important to investigate methods to reduce the number of ancilla qubits used by the algorithm, similar to \cite{najafi2023optimum}.
We hope that our methods might eventually make other algorithms that rely on QFT, such as \cite{shor1994algorithms} or estimation of multiple expectation values \cite{Multi_Est_Huggins_2022}), more amenable to implementation on early fault-tolerant quantum computers.

\section{Acknowledgements}

Authors would like to acknowledge the useful comments by Guoming Wang and Daniel Stilck Fran\c{c}a. Gumaro R. was supported by Zapata Computing during part of the manuscript writing.

\bibliographystyle{unsrt}
% \bibliography{references}

\appendix

\section{Proof of \Cref{thm:final_sampling_algorithm}, \Cref{alg:sampl_n_trim}'s main theorem}

Let us start with the outcome of the ancillary register of the Gaussian Phase Estimation (GPE) (See \Cref{fig:gaussian_qpea})\cite{Trotter_extrapol}
\begin{align}\label{eq:approx_gauss}
\sum_j \gamma_j \ket{\tilde{p}_G ( \sigma_0,F,\mu=2^q\theta_j)}\ket{j},
\end{align}
where $\ket{j}$ are eigenstates of the Hamiltonian of interest and $\ket{\tilde{p}_G ( \sigma_0,F,\mu) }$ efficiently approximates (Theorem 12, \cite{Trotter_extrapol}) the state
\begin{align}\label{eq:state}
\frac{1}{\sqrt{\mathcal{N}} }\sum_k\sqrt{g_0( (k-\mu) \bmod 2^q,0)} \ket{k},
\end{align}
where
$$
(k-\mu) \bmod 2^q = r = (k-\mu) - 2^q \round\left(\frac{k-\mu}{2^q}\right).
$$
and
$$
g_0(x,\mu) = \frac{1}{\sigma\sqrt{2\pi}} \exp{-\frac{1}{2}(x-\mu)^2/\sigma^2}.
$$
The mod function in \Cref{eq:state} is implementing the periodicity of the functions that come out of the $\rm QFT$ in the Gaussian phase estimation circuit ~\ref{fig:gaussian_qpea}. The convention chosen for the round$(\cdot)$ function is that it rounds to nearest integer, and when at half, it rounds to the even integer.
The factor $\mathcal{N}$ is here to ensure normalization through
$$
\mathcal{N} = \sum^{2^q/2-1}_{k=-2^q/2} g_0 (k, \tilde{\mu}),
$$
where
$$
\tilde{\mu} = \mu - \round(\mu).
$$

We propose we do the GPE experiment in a round of $M_0=O(1/\eta)$ samples such that at least one outcome is from the ground state's Gaussian. Process formalized in \Cref{thm:final_sampling_algorithm}. Here, $\eta$ is a lower bound such that $\eta\leq \gamma_0^2$.

Through this process, we are interested in calculating the moments of the normal distribution:
$$
\mu_m = \int g_m(x) \mathrm{d}x,
$$
where the new definition
$$
g_m(x) = x^m g_0(x),
$$
is used.

It will be convenient to define
$$
G_m(k)=\mathcal{F}_x\left(g_m(x)\right)(k),
$$
where 
$$
\mathcal{F}_x\left(f(x)\right)(k)=\int^{\infty}_{-\infty} \exp{-2\pi i x k} f(x) dx,
$$
such that
$$
\mu_m = G_m(0),
$$
and where the following moment generating property can be used
$$
G_m(k) = (-i)^m \left(\frac1{2\pi}\right)^m \frac{\mathrm{d}^m}{\mathrm{d} k^m} G_0(k).
$$

We will bound sources of error when estimating Gaussian observables off of the ancillar register outcome. The total error on the $m$-th moment will be
\begin{align}
\varepsilon_m &= \abs{G_m(0)-\frac{1}{\mathcal{N}}\left(\frac{\abs{\gamma_0}^2}{\tilde{p}_0}\right) \tilde{F}_m} \cr
\tilde{F}_m &= \sum^{K}_{n=-K} \left|(f)_n\right|^2 n^m \cr 
f_n &=  \sqrt{g_0(n)}
\end{align}

To facilitate a clear presentation, we will take a stepped approach to estimate all the comprising sources of error. Here is the list:
\begin{itemize}
    \item Renormalization
        \begin{align}
            \varepsilon^{\rm norm}_m = \left| G_m(0) - \frac{1}{\mathcal{N}}\left(\frac{\abs{\gamma_0}^2}{\tilde{p}_0}\right) G_m(0) \right| = \abs{ 1 - \frac{1}{\mathcal{N}}\left(\frac{\abs{\gamma_0}^2}{\tilde{p}_0}\right) } \left| G_m(0) \right| 
        \end{align}
    \item Discretization
        \begin{align}
            \varepsilon^{\rm discret}_m &= \frac{1}{\mathcal{N}}\left(\frac{\abs{\gamma_0}^2}{\tilde{p}_0}\right)\left| G_m(0) - \tilde{G}_m \right| \cr
            \tilde{G}_m &= \sum^{\infty}_{n=-\infty} g_m(n)
        \end{align}    
    \item Truncation and Contamination

        \begin{align}
            \varepsilon^{\rm trunc+polut}_{m} &= \frac{1}{\mathcal{N}}\left(\frac{\abs{\gamma_0}^2}{\tilde{p}_0}\right)\left|\tilde{G}_m - \tilde{F}^{(\rm polut)}_m  \right| \cr
        \end{align}       
    \begin{align}
        \tilde{F}^{(\rm polut)}_m &= \sum^{K}_{x=-K} \left|f^{(\rm polut)}_x\right|^2 x^m \cr 
        f^{(\rm polut)} &= f + \frac{\sqrt{\mathcal{N}}}{\gamma_0}\,\epsilon^{\rm (polut)}
    \end{align}
    Thus, by triangle inequality,
    \begin{align}
        \varepsilon_m \leq \varepsilon_m^{\rm norm} + \varepsilon_m^{\rm discret} + \varepsilon_m^{\rm trunc+polut}.
    \end{align}
    We will find it convenient to use the vector norm notation $\| \cdot \|_i$ for the norms on $\mathbb{C}^{2K+1}$ vectors supported on the window of interest. If no index is used, the 2-norm is assumed.
    It will also be useful to define the relative variables
    \begin{align}
        \tilde{\sigma}  &=  \sigma/2^q. \cr
        \tilde\mu_m &= \mu_m/\left(2^{q}\right)^m \cr
        \tilde\varepsilon_m &= \varepsilon_m/\left(2^{q}\right)^m.
    \end{align}

\end{itemize}

\subsection{Bounds on Normalization Factor $\mathcal{N}$}

First, we will bound the normalization factor $\mathcal{N}$.

\begin{lemma}\label{lem:norm_error}
Let $g_0 = \frac{1}{\sigma\sqrt{2\pi}}\exp{-\frac{1}{2}(x-\mu)^2/\sigma^2}$, and $-1/2 \leq \tilde\mu < 1/2$

$$
\mathcal{N}(\sigma,2^q,\tilde\mu) = \sum^{2^q/2-1}_{n=-2^q/2} g_0 (n,\tilde\mu).
$$
Then, 
$$
\sup_{\tilde\mu}\mathcal{N}(\sigma,2^q,\mu) \leq \mathcal{N}_{\rm up} = 1+\abs{ \tilde{G}_0 - \tilde{F}_0 }+|G_0(0)-\tilde{G}_0|,
$$
and the lower bound
$$
\inf_{\tilde\mu}\mathcal{N}(\sigma,2^q,\mu) \geq \mathcal{N}_{\rm low} = 1-\abs{ \tilde{G}_0 - \tilde{F}_0 }-|G_0(0)-\tilde{G}_0|.
$$
Moreover, if $1-\abs{ \tilde{G}_0 - \tilde{F}_0 }-|G_0(0)-\tilde{G}_0|>0$, we obtain
$$
\left|1-\frac{1}{\mathcal{N}}\right| \leq
\frac{\abs{ \tilde{G}_0 - \tilde{F}_0 } + |G_0(0)-\tilde{G}_0|}{1-\abs{ \tilde{G}_0 - \tilde{F}_0 } - |G_0(0)-\tilde{G}_0|}.
$$

\end{lemma}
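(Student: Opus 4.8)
The plan is to show that the windowed lattice sum $\mathcal{N}(\sigma,2^q,\tilde\mu)$ is trapped in a symmetric band of half-width $E:=\abs{\tilde G_0-\tilde F_0}+\abs{G_0(0)-\tilde G_0}$ around $G_0(0)$, uniformly over the fractional offset $\tilde\mu\in[-1/2,1/2)$, and then convert this two-sided bound into the claimed estimate on $\abs{1-1/\mathcal{N}}$ by elementary algebra. The anchoring observation is that $G_0(0)=\mu_0=\int g_0(x)\,\mathrm{d}x=1$, so the band is really centered at $1$.

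First I would compare $\mathcal{N}(\tilde\mu)$ to the full shifted lattice sum $S(\tilde\mu):=\sum_{n\in\mathbb{Z}}g_0(n,\tilde\mu)$. Since every summand is nonnegative and the register window is a finite subset of $\mathbb{Z}$, we have $\mathcal{N}(\tilde\mu)=S(\tilde\mu)-T(\tilde\mu)$ with $T(\tilde\mu)\ge 0$ the mass escaping the window, so in particular $\mathcal{N}(\tilde\mu)\le S(\tilde\mu)$. Applying Poisson summation to $S(\tilde\mu)$ gives $S(\tilde\mu)=\sum_{k\in\mathbb{Z}}e^{-2\pi i k\tilde\mu}\,e^{-2\pi^2\sigma^2 k^2}=1+2\sum_{k\ge 1}\cos(2\pi k\tilde\mu)\,e^{-2\pi^2\sigma^2 k^2}$, whose $k=0$ term is exactly $G_0(0)=1$. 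Bounding $\abs{\cos(2\pi k\tilde\mu)}\le 1$ yields the two-sided estimate $\abs{S(\tilde\mu)-1}\le 2\sum_{k\ge 1}e^{-2\pi^2\sigma^2 k^2}=S(0)-1=\tilde G_0-G_0(0)=\abs{G_0(0)-\tilde G_0}$ for every $\tilde\mu$ (with the supremum attained at $\tilde\mu=0$, where $S(0)=\tilde G_0$). The upper bound is then immediate: $\mathcal{N}(\tilde\mu)\le S(\tilde\mu)\le 1+\abs{G_0(0)-\tilde G_0}\le\mathcal{N}_{\rm up}$, the final step merely adding the nonnegative slack $\abs{\tilde G_0-\tilde F_0}$.

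For the lower bound I would write $\mathcal{N}(\tilde\mu)=S(\tilde\mu)-T(\tilde\mu)\ge\bigl(1-\abs{G_0(0)-\tilde G_0}\bigr)-T(\tilde\mu)$ using the Poisson lower bound, so it remains to show $T(\tilde\mu)\le\abs{\tilde G_0-\tilde F_0}$ uniformly in $\tilde\mu$, and this is the main obstacle. Here $\abs{\tilde G_0-\tilde F_0}=\sum_{\abs{n}>K}g_0(n,0)$ is the centered truncation tail beyond $[-K,K]$, whereas $T(\tilde\mu)$ is the shifted Gaussian's mass escaping the strictly wider register window $[-2^q/2,\,2^q/2-1]$. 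The subtlety is that translating the center by up to half a lattice spacing increases the escaping mass for a fixed window, so the comparison cannot be made term by term; it must instead exploit that the register window contains $[\tilde\mu-K,\tilde\mu+K]$ for all $\abs{\tilde\mu}<1/2$ (since $K\le 2^q/2-1$), so that the tiny mass lost at the far register boundary is dominated by the larger tail the centered Gaussian places beyond $K$. I expect this to need a monotonicity argument for Gaussian tail sums under fractional lattice shifts, controlled using the parameter relations among $\sigma$, $K$, and $2^q$ fixed by the algorithm.

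Finally, having established $\mathcal{N}\in[1-E,\,1+E]$ with $E=\abs{\tilde G_0-\tilde F_0}+\abs{G_0(0)-\tilde G_0}$ and invoking the hypothesis $1-E=\mathcal{N}_{\rm low}>0$, the last claim is routine: from $\abs{\mathcal{N}-1}\le E$ and $\mathcal{N}\ge\mathcal{N}_{\rm low}>0$ we get $\abs{1-1/\mathcal{N}}=\abs{\mathcal{N}-1}/\mathcal{N}\le E/\mathcal{N}_{\rm low}=E/(1-E)$, which is exactly the stated expression. The only genuinely nontrivial ingredient is thus the uniform tail estimate of the third paragraph; the upper bound and the closing algebra are both short.
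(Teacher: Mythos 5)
Your overall decomposition matches the paper's: both arguments pass through $\tilde G_0$ via the triangle inequality, splitting $\abs{1-\mathcal{N}}$ into an aliasing piece $\abs{G_0(0)-\tilde G_0}$ and a window-truncation piece bounded by $\abs{\tilde G_0 - \tilde F_0}$, and your closing algebra $\abs{1-1/\mathcal{N}} \le (\mathcal{N}_{\rm up}-1)/\mathcal{N}_{\rm low}$ is exactly the paper's. Your Poisson-summation treatment of the full lattice sum $S(\tilde\mu)$ is a genuine refinement: it makes the uniformity in $\tilde\mu$ of the aliasing bound explicit, something the paper glosses over.

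The gap is in your third paragraph. You correctly identify $T(\tilde\mu)\le\abs{\tilde G_0-\tilde F_0}$ as the load-bearing step for the lower bound, but you do not prove it; you only state that you ``expect'' a shift-monotonicity argument for Gaussian tail sums to work, controlled by unspecified parameter relations. As written, the proof is incomplete precisely where you say the only nontrivial content lies. The obstacle is, however, largely self-inflicted: you have fixed $\abs{\tilde G_0-\tilde F_0}$ to be the \emph{centered} ($\tilde\mu=0$) truncation tail, whereas the paper (see the definition restated in \Cref{lem:eps1_error}) takes $\abs{\tilde G_0-\tilde F_0}=\sum_{n\ge K+1}g_0(n,\tilde\mu)+\sum_{n\le -K-1}g_0(n,\tilde\mu)$ at the \emph{same} offset $\tilde\mu$, deferring the uniformization over $\tilde\mu$ to that later lemma. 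Under that reading the comparison is an immediate domination of sub-sums of nonnegative terms: $2K+1<2^q$ forces $K+1\le 2^q/2$ and $-K-1\ge -2^q/2$, so the mass escaping the register window $[-2^q/2,\,2^q/2-1]$ is a sub-sum of the mass escaping $[-K,K]$, giving $\abs{\tilde G_0-\mathcal{N}}\le\abs{\tilde G_0-\tilde F_0}$ with no shift argument at all (and then the Poisson step becomes unnecessary, since the triangle inequality through $\tilde G_0(\tilde\mu)$ works pointwise in $\tilde\mu$). If you insist on the centered quantities, you must actually supply the estimate; a sufficient route is to match the $j$-th term of each tail, $g_0(2^q/2+j-\tilde\mu)\le g_0(K+1+j)$ for all $j\ge0$, which holds by monotonicity of $g_0$ in $\abs{x}$ once $2^q/2-1/2\ge K+1$ --- amply satisfied under the paper's choice $2K+1=\lfloor(2/3)\Delta\,2^q\rfloor$. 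Either fix makes the proposal sound.
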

\begin{proof}
We have that
$$
|G_0(0)-\tilde{G}_0| = |1-\tilde{G}_0|,
$$
and we also know that
$$
|\tilde{G}_0-\mathcal{N}| = \left(\sum^{\infty}_{n=2^q/2} g_0(n) + \sum^{-2^q/2-1}_{n=-\infty} g_0(n) \right) \leq \abs{ \tilde{G}_0 - \tilde{F}_0 },
$$
where
$$
\abs{ \tilde{G}_0 - \tilde{F}_0 } = \sum^{\infty}_{n=K+1} g_0(n) + \sum^{-K-1}_{n=-\infty} g_0(n),
$$
and $K$ is an integer where $2K+1<2^q$.
Thus, by triangle inequality and knowing $G_0(0)=1$,
$$
|1-\mathcal{N}|\leq \abs{ \tilde{G}_0 - \tilde{F}_0 } + |G_0(0)-\tilde{G}_0|.
$$
We find it useful to define the following upper bound
$$
\sup_{\mu}\mathcal{N}(\sigma,2^q,\mu) \leq \mathcal{N}_{\rm up} = 1+\abs{ \tilde{G}_0 - \tilde{F}_0 }+|G_0(0)-\tilde{G}_0|
$$
and the lower bound
$$
\inf_{\mu}\mathcal{N}(\sigma,2^q,\mu) \geq \mathcal{N}_{\rm low} = 1-\abs{ \tilde{G}_0 - \tilde{F}_0 }-|G_0(0)-\tilde{G}_0|.
$$
If we assume that $1-\abs{ \tilde{G}_0 - \tilde{F}_0 }-|G_0(0)-\tilde{G}_0|>0$ we can also write the following bound:
$$
\left|1-\frac{1}{\mathcal{N}}\right| \leq \left|\frac{\mathcal{N}_{\rm up}-1}{\mathcal{N}_{\rm low}}\right| =
\frac{\abs{ \tilde{G}_0 - \tilde{F}_0 } + |G_0(0)-\tilde{G}_0|}{1-\abs{ \tilde{G}_0 - \tilde{F}_0 } - |G_0(0)-\tilde{G}_0|}.
$$
\end{proof}

To get a more transparent idea of how the bounds on the normalization factor depend on $K$ and $\sigma$, we will upper-bound $\abs{ \tilde{G}_0 - \tilde{F}_0 }$ in terms of these quantities through the lemma:

\begin{lemma}\label{lem:eps1_error}
For an integer $K$ and $-1/2\leq\tilde{\mu} < 1/2$, given the definition
$$
\abs{ \tilde{G}_0 - \tilde{F}_0 } = \sum^{\infty}_{n=K+1} g_0(n,\tilde{\mu}) + \sum^{-K-1}_{n=-\infty} g_0(n,\tilde{\mu}),
$$
we obtain the following upper bound
$$
\abs{ \tilde{G}_0 - \tilde{F}_0 } \leq \max_{\tilde{\mu}}\abs{ \tilde{G}_0 - \tilde{F}_0 } \leq \exp{-\frac{(K-1/2)^2}{2\sigma^2}}.
$$
\end{lemma}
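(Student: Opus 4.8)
The plan is to dominate the discrete tail sum by the tail of the corresponding continuous Gaussian and then bound that continuous tail uniformly in $\tilde\mu$. First I would fix $\tilde\mu\in[-1/2,1/2)$ and keep the natural split of $\abs{\tilde G_0 - \tilde F_0}$ into its right tail $\sum_{n=K+1}^\infty g_0(n,\tilde\mu)$ and left tail $\sum_{n=-\infty}^{-K-1} g_0(n,\tilde\mu)$. Since $g_0(\cdot,\tilde\mu)$ is a shifted Gaussian, it decreases on $[\tilde\mu,\infty)$ and increases on $(-\infty,\tilde\mu]$; provided $K\ge\tilde\mu$ and $-K\le\tilde\mu$, each summand is bounded by the integral of $g_0$ over the adjacent unit interval (e.g. $g_0(n,\tilde\mu)\le\int_{n-1}^n g_0(x,\tilde\mu)\,dx$ for $n\ge K+1$). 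Adding these per-term comparisons over consecutive intervals reassembles a single tail integral on each side, giving
\begin{align}
\abs{\tilde G_0 - \tilde F_0} \le \int_K^\infty g_0(x,\tilde\mu)\,dx + \int_{-\infty}^{-K} g_0(x,\tilde\mu)\,dx = \frac12\erfc\!\left(\frac{K-\tilde\mu}{\sigma\sqrt 2}\right) + \frac12\erfc\!\left(\frac{K+\tilde\mu}{\sigma\sqrt 2}\right),
\end{align}
where the two Gaussian tails have been evaluated in terms of the complementary error function by the substitution $u=(x-\tilde\mu)/(\sigma\sqrt2)$.

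Next I would apply the elementary estimate $\erfc(z)\le\exp{-z^2}$, valid for $z\ge0$, which converts each term into $\frac12\exp{-\frac{(K\mp\tilde\mu)^2}{2\sigma^2}}$. Rather than locating the worst-case $\tilde\mu$ by calculus, I would bound the two exponent arguments directly: because $\abs{\tilde\mu}\le 1/2$ we have $K-\tilde\mu\ge K-1/2>0$ and $K+\tilde\mu\ge K-1/2>0$, so each exponential is at most $\exp{-\frac{(K-1/2)^2}{2\sigma^2}}$. The two identical bounds then absorb the factor $\frac12$ and yield
\begin{align}
\abs{\tilde G_0 - \tilde F_0} \le \exp{-\frac{(K-1/2)^2}{2\sigma^2}}.
\end{align}
Because this estimate is uniform over $\tilde\mu\in[-1/2,1/2)$, it simultaneously controls $\max_{\tilde\mu}\abs{\tilde G_0 - \tilde F_0}$, which is exactly the claimed inequality.

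The only delicate point is the integral comparison in the first display, since it relies on $g_0(\cdot,\tilde\mu)$ being monotone on each tail region, i.e. $K\ge\tilde\mu$ on the right and $-K\le\tilde\mu$ on the left. As $\abs{\tilde\mu}<1/2$, both conditions hold as soon as $K\ge1$, which is the regime of interest here (recall the standing assumption $2K+1<2^q$). I would therefore state this mild condition explicitly and carefully verify the direction of the term-by-term inequality and its mirror image for the left tail. Everything else—the closed-form evaluation of the Gaussian integrals and the standard $\erfc$ tail bound—is routine, so I do not anticipate further obstacles. The one design choice worth flagging is that bounding each exponent by $K-1/2$ separately (instead of optimizing $\tilde\mu$) is what keeps the constant clean and removes the need for any convexity or first-derivative analysis of the sum of exponentials.
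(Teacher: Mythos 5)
Your proposal is correct and follows essentially the same route as the paper: compare each discrete tail to the corresponding Gaussian tail integral (a Riemann-sum argument), evaluate via $\erfc$, and finish with $\erfc(z)\leq\exp{-z^2}$ together with $K\mp\tilde\mu\geq K-1/2$. The only cosmetic difference is that you keep the two tails separate and bound them uniformly in $\tilde\mu$, whereas the paper symmetrizes to twice the worst-case tail before integrating; both land on the identical final bound.
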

\begin{proof}
\begin{align*}
\abs{ \tilde{G}_0 - \tilde{F}_0 } &\leq 2 \max_{\tilde{\mu}}\max\left\{\sum^{\infty}_{n=K+1} g_0(n,\tilde{\mu}),\right.\cr &\qquad\left.\sum^{-K-1}_{n=-\infty} g_0(n,\tilde{\mu})\right\} \cr
& = 2 \sum^{\infty}_{n=K+1} g_0(n,-1/2)
\end{align*}
Here, we can use the right-Riemann summation approximation to bound with the integral
\begin{align*}
\abs{ \tilde{G}_0 - \tilde{F}_0 } & \leq 2 \int^{\infty}_{K} g_0(n,-1/2) d n \cr 
&\leq \erfc\left(\frac{K-1/2}{\sqrt{2}\sigma}\right) \leq \exp{-\frac{(K-1/2)^2}{2\sigma^2}}.
\end{align*}
\end{proof}

\subsection{Bounds on $\|\epsilon^{\rm (polut)}\|$}

\begin{lemma}[Bounds on excited state contribution]\label{lem:contamination_bounds}
    
\end{lemma}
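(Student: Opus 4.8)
The plan is to read $\epsilon^{(\rm polut)}$ off directly from the measurement statistics of the ancilla register. Since the system register in \Cref{fig:gaussian_qpea} is never measured, the ancilla outcome $z=k$ is drawn from the \emph{incoherent} mixture over eigenstates, so that
\begin{align}
P(k) = \frac{1}{\mathcal N}\sum_j \abs{\gamma_j}^2\, g_0\!\left((k-2^q\theta_j)\bmod 2^q,\,0\right),
\end{align}
with no cross terms between distinct eigenstates. Writing a window index as $k=k_0+n$ with $k_0=\round(2^q\theta_0)$, $n\in\{-K,\dots,K\}$, and residual $\tilde\mu_0\in[-1/2,1/2)$, the $j=0$ term reproduces the clean signal $\tfrac{\abs{\gamma_0}^2}{\mathcal N}g_0(n,\tilde\mu_0)$, and the $j\neq 0$ terms are precisely what the definition $f^{(\rm polut)}=f+\tfrac{\sqrt{\mathcal N}}{\gamma_0}\epsilon^{(\rm polut)}$ collects into the pollution vector. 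Passing from probabilities to amplitudes with the subadditivity $\sqrt{a+b}-\sqrt a\le\sqrt b$ then gives the pointwise estimate $\abs{\epsilon^{(\rm polut)}_n}^2\le \tfrac{1}{\mathcal N}\sum_{j\neq 0}\abs{\gamma_j}^2\,g_0((k_0+n-2^q\theta_j)\bmod 2^q,0)$, which reduces the whole lemma to summing deep Gaussian tails.

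First I would establish the uniform separation that makes every surviving term a tail. The spectral gap $\abs{\theta_j-\theta_0}\ge\Delta$ places each excited center $2^q\theta_j$ at distance at least $2^q\Delta-O(1)$ from $k_0$, so every window point $k_0+n$ with $\abs{n}\le K$ is at distance at least $2^q\Delta-K$ from that center; the Gaussian value there is therefore at most $\tfrac{1}{\sigma\sqrt{2\pi}}\exp{-\tfrac{(2^q\Delta-K)^2}{2\sigma^2}}$, the same tail estimate already used in \Cref{lem:eps1_error}. Summing over the window, the monotone decay lets me bound the $2K+1$ terms by a geometric series rather than crudely by $(2K+1)$ times the maximum (this is the origin of the constant $\tfrac53$), and summing over the excited subspace uses $\sum_{j\neq 0}\abs{\gamma_j}^2\le 1-\eta$. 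Together with the lower bound $\mathcal N\ge\mathcal N_{\rm low}$ from \Cref{lem:norm_error} this yields
\begin{align}
\norm{\epsilon^{(\rm polut)}}^2 \;\le\; \frac{1-\eta}{\mathcal N}\cdot\frac{5}{3}\cdot\frac{1}{\sigma\sqrt{2\pi}}\exp{-\frac{(2^q\Delta-K)^2}{2\sigma^2}}.
\end{align}
The factor $\sqrt{(1-\eta)/\eta}$ that later propagates through $C(\eta)$, $q$, and $\tilde\sigma$ appears only once this norm is divided by $\gamma_0\ge\sqrt\eta$ upon entering $f^{(\rm polut)}$. Finally, inserting the algorithm's choice $2K=\lfloor(2/3)2^q\Delta\rfloor-1$ keeps the surviving distance a constant fraction of $2^q\Delta$, so in relative variables $\tilde\sigma=\sigma/2^q$ the exponent becomes $-\Theta(\Delta^2/\tilde\sigma^2)$, and substituting $\tilde\sigma=\tilde O(\Delta/\sqrt m)$ converts this into the exponential suppression in the logarithmic factors that the algorithm is tuned to exploit.

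The hardest part will be the periodicity inherited from the $\bmod\,2^q$ operation of the QFT: a priori an excited eigenphase could wrap around the circle and approach the ground-state window from the far side, which would destroy the separation estimate above. This is exactly where the hypothesis $\norm{H}\le 1-\Delta$ enters — it confines all eigenphases to an arc that leaves a protected region of width $\sim 2^q\Delta$ on both sides of $k_0$, so that \emph{every} periodic image of \emph{every} excited center remains at distance at least $2^q\Delta-K$ from the window. Verifying this uniformly for the worst-case placement of the top of the spectrum, and tracking the $O(1)$ offsets from $\round(\cdot)$ and from the parity of $2K+1$, is the delicate bookkeeping of the proof; once that uniform separation is secured, the Gaussian-tail and normalization estimates assembled above close the argument.
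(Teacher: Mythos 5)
Your argument lands on essentially the same estimate as the paper's, but by a slightly different route, and two of your constants do not match the paper's bookkeeping. The paper works entirely at the amplitude level: it defines displaced Gaussian vectors $(\epsilon^{(R)})_j=\sqrt{g_0(j,\tilde\mu+2^q\Delta)}$ and $(\epsilon^{(L)})_j=\sqrt{g_0(j,\tilde\mu-2^q\Delta')}$, bounds $\lVert\epsilon^{(\rm polut)}\rVert\le\sqrt{(1-\eta)/\inf_{\tilde\mu}\mathcal N}\,\lVert\epsilon^{(R)}\rVert$, collapses the left contamination onto the right one by the even symmetry of $g_0$ after using the norm constraint on $H$ to set $\Delta'=\Delta$ (your wrap-around discussion is exactly this point), and then bounds $\lVert\epsilon^{(R)}\rVert^2$ by extending the window sum to $-\infty$ and comparing to an integral, giving $\erfc\!\left((2^q\Delta-K-1/2)/(\sqrt2\sigma)\right)\le \exp{-(2^q\Delta-K-1/2)^2/(2\sigma^2)}$. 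Your incoherent-mixture starting point is physically cleaner (and would in principle let you drop the worst-case interference cross terms that the paper carries into the hit-rate and truncation lemmas), and your $\sqrt{a+b}-\sqrt a\le\sqrt b$ step recovers the same amplitude-level object, so the two arguments are interchangeable in substance. Two corrections, though: (i) the constant $5/3$ does \emph{not} originate from a geometric series over the window --- in the paper $\sqrt{5/3}=\sqrt{\mathcal N_{\rm up}/\mathcal N_{\rm low}}=\sqrt{(1+1/4)/(1-1/4)}$ arises from the normalization bounds of \Cref{lem:norm_error} under the $1/8$ constraints, one level above this lemma; and (ii) bounding the $2K+1$ window terms by the largest term times a geometric series leaves a $1/(\sigma\sqrt{2\pi})$ prefactor that the paper's integral-comparison/$\erfc$ bound absorbs, so your prefactor would have to be re-propagated through $C(\eta)$ and the $2^q$ formula to reproduce the stated expressions. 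Neither point breaks the argument, but both matter if you want to match the paper's explicit constants.
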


\begin{proof}

An upper bound on the norm of the excited state contamination from the right, before renormalizing by the exact hit rate $\tilde{p}_0$, would be

\begin{align}
    \lVert \epsilon^{\rm (polut)} \rVert \leq \sqrt{\frac{1-\eta}{\inf_{\tilde\mu}(\mathcal{N})}}\|\epsilon^{(R)}\|.
\end{align}
where we have defined the following ancillary vector $\left(\epsilon^{(R)}\right)_j = \sqrt{g_0(j,\tilde{\mu}+2^q\Delta )}$
\begin{align}
    \| \epsilon^{(R)} \|^2  &= \sum^{K}_{n=-K} g_0(n,\tilde{\mu}+2^q\Delta )
\end{align}
We also have another bound for $\left\Vert \epsilon^{\rm (polut)} \right\Vert$
\begin{align}
    \lVert \epsilon^{\rm (polut)} \rVert \leq \sqrt{\frac{1-\eta}{\inf_{\tilde\mu}(\mathcal{N})}}\|\epsilon^{(L)}\|,
\end{align}
where  $\left(\epsilon^{(L)}\right)_j = \sqrt{g_0(j,\tilde{\mu}-2^q\Delta')}$
\begin{align}
    \| \epsilon^{(L)} \|^2  &= \sum^{K}_{n=-K} g_0(n,\tilde{\mu}-2^q\Delta' ).
\end{align}
Here, $\Delta'$ is the spectral gap from below induced by the periodicity of the amplitudes in \Cref{eq:state}.
If we constrain the norm through
$$
\lVert H \rVert \leq \sum_{j} \lVert H_j \rVert = 1/2 - \Delta/2,
$$
we can set $\Delta' = \Delta$. Thus
\begin{align}
    \| \epsilon^{(L)} \|^2& = \sum^{K}_{n=-K} g_0(n,\tilde{\mu}-2^q\Delta ).
\end{align}
Because of the even symmetry of $g_0$ we note that
\begin{align}
    \max_{\Tilde{\mu}} \| \epsilon^{(R)} \|^2 = \max_{\Tilde{\mu}} \| \epsilon^{(L)} \|^2
\end{align}

Thus, we will focus only on bounding $\| \epsilon^{(R)} \|$. We note that
\begin{align}
      \| \epsilon^{(R)} \|^2 \leq \max_{\Tilde{\mu}} \| \epsilon^{(R)} \|^2  &\leq \max_{\Tilde{\mu}}\sum^{K}_{n=-\infty} g_0(n,\tilde{\mu}+2^q\Delta ). \cr
\end{align}
We can now use left Riemann sum to bound the summation with
\begin{align}
    \| \epsilon^{(R)} \|^2& \leq \max_{\Tilde{\mu}}\int^{K+1}_{-\infty} g_0(n,\tilde{\mu}+2^q\Delta ) \cr
    &\leq \erfc\left(\frac{2^q\Delta -K-1/2}{\sqrt{2} \sigma}\right) \leq \exp{-\frac{(2^q\Delta -K-1/2)^2}{2 \sigma^2}}.
\end{align}
\end{proof}

\subsection{Hit Rate}

We now will bound the exact hit rate $\tilde{p}_0$ through
\begin{lemma}[Hit rate]\label{lem:hit_rate}
Assuming that $\abs{G_0 (0) - \tilde{G}_0}\leq 1/8$, $\abs{ \tilde{G}_0 - \tilde{F}_0 } \leq 1/8$, and $\sqrt{\frac{1}{\eta}} \| \epsilon^{(R)} \| \leq 1/8$, we have that
\begin{align*}
    \tilde{p}_0 &\geq \abs{\gamma_0}^2\left(\frac{1}{1 + \abs{G_0 (0) - \tilde{G}_0} + \abs{ \tilde{G}_0 - \tilde{F}_0 }}\right)  \left( \vphantom{\frac{1-|\gamma_0|^2}{|\gamma_0|^2}} 1-\abs{G_0 (0) - \tilde{G}_0} - \abs{ \tilde{G}_0 - \tilde{F}_0 } \right. \cr 
    &\left.-  \frac{9}{4}\sqrt{\frac{1}{\eta}} \lVert \epsilon^{(R)} \rVert  \right)
\end{align*}  
\end{lemma}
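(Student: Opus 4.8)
The plan is to evaluate the exact hit rate $\tilde p_0$ directly as the squared norm of the post-QFT amplitude restricted to the target window, and then to separate it into a ground-state ``signal'' and an excited-state ``contamination'' contribution. With $f_n = \sqrt{g_0(n)}$ the ideal window-restricted ground-state amplitudes, the in-window (unrenormalized) amplitude vector is exactly the combination $\frac{\gamma_0}{\sqrt{\mathcal N}} f + \epsilon^{(\rm polut)}$ appearing in the definition of $f^{(\rm polut)}$, so that $\tilde p_0 = \norm{\frac{\gamma_0}{\sqrt{\mathcal N}} f + \epsilon^{(\rm polut)}}^2$. Expanding the square (or applying the reverse triangle inequality) and discarding the manifestly nonnegative term $\norm{\epsilon^{(\rm polut)}}^2$, I would use Cauchy--Schwarz on the single sign-indefinite cross term to obtain
\begin{equation*}
\tilde p_0 \;\ge\; \frac{\abs{\gamma_0}^2}{\mathcal N}\,\norm{f}^2 \;-\; 2\,\frac{\abs{\gamma_0}}{\sqrt{\mathcal N}}\,\norm{f}\,\norm{\epsilon^{(\rm polut)}},
\end{equation*}
where $\norm{f}^2 = \tilde F_0 = \sum_{n=-K}^{K} g_0(n)$ is precisely the $m=0$ window sum.

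For the leading (signal) term I would lower-bound $\tilde F_0$ using $G_0(0)=1$ and two triangle inequalities, exactly as in \Cref{lem:norm_error}, giving $\tilde F_0 \ge 1 - \abs{G_0(0)-\tilde G_0} - \abs{\tilde G_0 - \tilde F_0}$, which is the first factor in the claimed bound; and I would replace the prefactor using $\mathcal N \le \mathcal N_{\rm up} = 1 + \abs{G_0(0)-\tilde G_0} + \abs{\tilde G_0 - \tilde F_0}$ from the same lemma, producing the overall prefactor $\abs{\gamma_0}^2/\mathcal N_{\rm up}$.

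For the cross term I would substitute the contamination bound of \Cref{lem:contamination_bounds}, $\norm{\epsilon^{(\rm polut)}} \le \sqrt{(1-\eta)/\mathcal N}\,\norm{\epsilon^{(R)}}$, taking the normalization at the ground-state value so that one power of $\sqrt{\mathcal N}$ cancels against the $1/\sqrt{\mathcal N}$ already present; this cancellation is what keeps the constant small. Using $\abs{\gamma_0}\ge\sqrt\eta$ to turn the residual $\abs{\gamma_0}^{-1}$ into $\eta^{-1/2}$, together with $\sqrt{1-\eta}\le1$, the cross term is bounded by $\frac{\abs{\gamma_0}^2}{\mathcal N}\cdot 2\sqrt{\tilde F_0}\,\eta^{-1/2}\norm{\epsilon^{(R)}}$. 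The standing assumptions $\abs{G_0(0)-\tilde G_0}\le\frac18$ and $\abs{\tilde G_0 - \tilde F_0}\le\frac18$ force $\tilde F_0 \le \tilde G_0 \le \frac98$, whence $2\sqrt{\tilde F_0}\le 3/\sqrt2 \le \frac94$; they also guarantee $\mathcal N_{\rm low}\ge\frac34$ and, with $\sqrt{1/\eta}\norm{\epsilon^{(R)}}\le\frac18$, that the residual bracket $\tilde F_0 - \frac94\eta^{-1/2}\norm{\epsilon^{(R)}}$ stays positive, so the replacement $\mathcal N\le\mathcal N_{\rm up}$ in the prefactor is in the direction that preserves the inequality. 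Collecting these pieces yields precisely the stated lower bound.

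The delicate part is the bookkeeping of the normalization factor $\mathcal N$. The signal carries the ground-state value of $\mathcal N$ while \Cref{lem:contamination_bounds} states its bound in terms of $\inf_{\tilde\mu}\mathcal N$; the clean constant $\tfrac94$ emerges only if the two occurrences of $\mathcal N$ are matched so that the $\sqrt{\mathcal N}$ factors cancel and leave the coefficient $2\sqrt{\tilde F_0}$. One must therefore take care to apply $\mathcal N_{\rm up}$ and $\mathcal N_{\rm low}$ in the directions consistent with a lower bound, and to verify bracket positivity from the standing assumptions before factoring out $\abs{\gamma_0}^2/\mathcal N_{\rm up}$; once the in-window state is correctly identified as the coherent sum $\frac{\gamma_0}{\sqrt{\mathcal N}}f+\epsilon^{(\rm polut)}$, the remainder is a routine reverse-triangle estimate.
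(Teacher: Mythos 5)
Your proposal follows essentially the same route as the paper: both write $\tilde p_0$ as the squared norm of the coherent sum $\frac{\gamma_0}{\sqrt{\mathcal N}}\sqrt{g_0}+\epsilon^{(\rm polut)}$, drop the nonnegative quadratic contamination term, apply Cauchy--Schwarz to the cross term together with the bound $\norm{\epsilon^{(\rm polut)}}\leq\sqrt{(1-\eta)/\inf_{\tilde\mu}\mathcal N}\,\norm{\epsilon^{(R)}}$, sandwich $\norm{\sqrt{g_0}}^2=\tilde F_0$ between $1\mp(\abs{G_0(0)-\tilde G_0}+\abs{\tilde G_0-\tilde F_0})$, and use $\mathcal N\leq\mathcal N_{\rm up}$ for the prefactor, arriving at the same $\tfrac94$ constant (you via $2\sqrt{9/8}\leq\tfrac94$, the paper via $\sqrt{1+x}\leq 1+x/2$). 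The normalization bookkeeping you flag as delicate is handled no more carefully in the paper's own proof, so your argument is correct to the same standard.
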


\begin{proof}
    The exact hit rate $\tilde{p}_0$ is
    \begin{align}
        \tilde{p}_0 = \left\Vert \frac{\gamma_0}{\sqrt{\mathcal{N}}}\sqrt{g_0}  + \epsilon^{(\rm polut)}\right\Vert^2
    \end{align}

An upper bound on $\tilde{p}_0$ can be written the following way where a maximum "destructive" interference is assumed for the vectors contributing to $\tilde{p}_0$
    \begin{align}
        \tilde{p}_0 &\geq \left(\frac{1}{\max_{\tilde\mu}\mathcal{N}}\right) \abs{   \gamma_0}^2 \left\vert \vphantom{\frac{1-|\gamma_0|^2}{|\gamma_0|^2}} \lVert  \sqrt{g_0}   \rVert^2  \right.\cr 
        &\left.-  2\sqrt{\frac{1-|\gamma_0|^2}{|\gamma_0|^2}}\lVert \sqrt{g_0} \rVert \lVert \epsilon^{(R)} \rVert \right.\cr 
        &\left.+ 2\frac{1-|\gamma_0|^2}{|\gamma_0|^2}\lVert \epsilon^{(R)} \rVert^2  \right\vert.
    \end{align}
Here, we have ignored the contributions from the terms proportional to $\| \epsilon^{(R)} \|^2$. How

    \begin{align}
        \tilde{p}_0 &\geq \left(\frac{1}{\max_{\tilde\mu}\mathcal{N}}\right) \abs{\gamma_0}^2 \left\vert \vphantom{\frac{1-|\gamma_0|^2}{|\gamma_0|^2}} \lVert  \sqrt{g_0}   \rVert^2 \right. \cr 
        &\left.-  2\sqrt{\frac{1}{\eta}}\lVert \sqrt{g_0} \rVert \lVert \epsilon^{(R)} \rVert \right\vert
    \end{align}

It will be useful to bound $\|\sqrt{g_0}\|^2$ in terms of $|G_0(0)-\tilde{G}_0|$ and $\sqrt{\abs{ \tilde{G}_0 - \tilde{F}_0 }}$. First, let us note that because
    \begin{align}
        \abs{G_0 (0) - \tilde{G}_0} = \abs{1 - \tilde{G}_0},
    \end{align}
we know that
\begin{align}
            \abs{1- \tilde{F_0}} \leq  \abs{G_0 (0) - \tilde{G}_0} + \abs{ \tilde{G}_0 - \tilde{F}_0 }.
\end{align}
Finally, because $\lVert \sqrt{g_0} \rVert^2 = \tilde{F}_0$, we obtain:
    \begin{align}
        1-\abs{G_0 (0) - \tilde{G}_0} - \abs{ \tilde{G}_0 - \tilde{F}_0 } \leq \lVert \sqrt{g_0} \rVert^2 \leq 1 + \abs{G_0 (0) - \tilde{G}_0} + \abs{ \tilde{G}_0 - \tilde{F}_0 }
    \end{align}

    \begin{align}
        \tilde{p}_0 &\geq \left(\frac{1}{\max_{\tilde\mu}\mathcal{N}}\right) \abs{\gamma_0}^2 \left( \vphantom{\frac{1-|\gamma_0|^2}{|\gamma_0|^2}} 1-\abs{G_0 (0) - \tilde{G}_0} - \abs{ \tilde{G}_0 - \tilde{F}_0 } \right. \cr 
        &\left.-  2\sqrt{\frac{1}{\eta}}\left(1 + \abs{G_0 (0) - \tilde{G}_0}/2 + \abs{ \tilde{G}_0 - \tilde{F}_0 }/2\right) \lVert \epsilon^{(R)} \rVert  \right)
    \end{align}
If we assume $\abs{G_0 (0) - \tilde{G}_0}\leq 1/8$ and $\abs{ \tilde{G}_0 - \tilde{F}_0 } \leq 1/8$ we can write
    \begin{align}
        \tilde{p}_0 &\geq \left(\frac{1}{\max_{\tilde\mu}\mathcal{N}}\right) \abs{\gamma_0}^2 \left( \vphantom{\frac{1-|\gamma_0|^2}{|\gamma_0|^2}} 1-\abs{G_0 (0) - \tilde{G}_0} - \abs{ \tilde{G}_0 - \tilde{F}_0 } \right. \cr 
        &\left.-  \frac{9}{4}\sqrt{\frac{1}{\eta}} \lVert \epsilon^{(R)} \rVert  \right)
    \end{align}
Now, having $\sqrt{\frac{1}{\eta}} \| \epsilon^{(R)} \| \leq 1/8$ is sufficient condition to ensure the positivity of the lower bound on $\tilde{p}_0$. Finally, we use $\max_{\tilde\mu}\mathcal{N} \leq 1 + \abs{G_0 (0) - \tilde{G}_0} + \abs{ \tilde{G}_0 - \tilde{F}_0 }$ from \Cref{lem:norm_error} to obtain
\begin{align}
    \tilde{p}_0 &\geq \abs{\gamma_0}^2\left(\frac{1}{1 + \abs{G_0 (0) - \tilde{G}_0} + \abs{ \tilde{G}_0 - \tilde{F}_0 }}\right)  \left( \vphantom{\frac{1-|\gamma_0|^2}{|\gamma_0|^2}} 1-\abs{G_0 (0) - \tilde{G}_0} - \abs{ \tilde{G}_0 - \tilde{F}_0 } \right. \cr 
    &\left.-  \frac{9}{4}\sqrt{\frac{1}{\eta}} \lVert \epsilon^{(R)} \rVert  \right)
\end{align}
\end{proof}

\begin{corollary}\label{cor:hit_rate}
Since we have the constraints $\sqrt{\frac{1}{\eta}} \| \epsilon^{(R)} \| \leq 1/8$, $\sqrt{\abs{ \tilde{G}_0 - \tilde{F}_0 }}\leq 1/8$, and $\abs{G_0(0)-\tilde{G}_0}\leq 1/8$, we can simplify the bound the following way
    $$
        \tilde{p}_0 \geq \eta \left( \frac{3}{8} \right)
    $$
\end{corollary}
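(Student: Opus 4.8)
The plan is to start directly from the lower bound on $\tilde{p}_0$ established in \Cref{lem:hit_rate} and substitute the three assumed error constraints, replacing $\abs{\gamma_0}^2$ by its lower bound $\eta$. Writing $a = \abs{G_0(0)-\tilde{G}_0}$, $b = \abs{\tilde{G}_0 - \tilde{F}_0}$, and $t = \frac{9}{4}\sqrt{1/\eta}\,\lVert \epsilon^{(R)} \rVert$, the lemma gives $\tilde{p}_0 \geq \abs{\gamma_0}^2\,\frac{1 - a - b - t}{1 + a + b}$. The corollary's hypotheses bound each of these quantities: $a \leq 1/8$ directly, $\sqrt{b}\leq 1/8$ (so that $b \leq 1/64 \leq 1/8$), and $\sqrt{1/\eta}\,\lVert \epsilon^{(R)}\rVert \leq 1/8$, which gives $t \leq 9/32$. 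These are exactly the three preconditions of \Cref{lem:hit_rate}, so the bound applies.

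Next I would observe that the right-hand side is monotonically decreasing in each of $a$, $b$, and $t$ separately: the numerator $1 - a - b - t$ decreases while the denominator $1 + a + b$ increases as any of these nonnegative quantities grows. Hence a valid lower bound is obtained by inserting the largest permitted values $a = b = 1/8$ and $t = 9/32$ into the expression simultaneously. Using the looser $b \leq 1/8$ in place of the tighter $b \leq 1/64$ only weakens the bound, which is harmless for establishing the claimed constant.

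Plugging in, the numerator becomes $1 - \tfrac18 - \tfrac18 - \tfrac{9}{32} = \tfrac{15}{32}$ and the denominator becomes $1 + \tfrac18 + \tfrac18 = \tfrac54$, so the ratio equals $\tfrac{15}{32}\cdot\tfrac45 = \tfrac38$. Combining this with $\abs{\gamma_0}^2 \geq \eta$ (the standing assumption $\eta \leq \abs{\gamma_0}^2$) yields $\tilde{p}_0 \geq \eta\,(3/8)$, as claimed.

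The calculation is elementary arithmetic; the only points requiring care are the monotonicity argument justifying the simultaneous substitution of worst-case values, and checking that the positivity precondition $\sqrt{1/\eta}\,\lVert \epsilon^{(R)}\rVert \leq 1/8$ (which keeps the lemma's lower bound nonnegative and the normalization estimate of \Cref{lem:norm_error} valid) is precisely one of the assumed constraints. I expect no genuine obstacle: essentially all the work lies in confirming that the corollary's three hypotheses imply the three hypotheses of \Cref{lem:hit_rate} and then evaluating the resulting constant.
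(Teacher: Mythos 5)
Your proposal is correct and matches the paper's (implicit) derivation: the corollary is stated without a written proof, but it follows exactly as you describe by substituting the worst-case values $a=b=1/8$ and $\tfrac{9}{4}\sqrt{1/\eta}\,\lVert\epsilon^{(R)}\rVert\le 9/32$ into the bound of \Cref{lem:hit_rate}, giving $\frac{15/32}{5/4}=\frac38$, and then using $\abs{\gamma_0}^2\ge\eta$ (the same arithmetic $32/32-4/32-4/32-9/32$ reappears in the paper's renormalization-error bound). Your attention to the monotonicity justification and to the fact that $\sqrt{b}\le 1/8$ implies the lemma's hypothesis $b\le 1/8$ is appropriate and introduces no gap.
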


\subsection{Aliasing/Discretization errors}

We are now about upper-bounding the quantity $\abs{G_m(0)-\tilde{G}_m}$: useful in estimating discretization errors and also used when $m=0$ for evaluating some of the bounds previously derived.
\begin{lemma}[Aliasing]
$$
|G_m(0)-\tilde{G}_m|=\sum_{k\neq 0} |G_m(-k)|
$$

\end{lemma}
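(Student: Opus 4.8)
The plan is to turn the claimed identity into the Poisson summation formula followed by a saturation of the triangle inequality forced by the phase structure of the Gaussian's Fourier coefficients. First I would invoke Poisson summation for the Schwartz function $g_m(x)=x^m g_0(x)$, whose transform is $G_m(k)=\mathcal{F}_x(g_m)(k)$; since $g_m$ is a polynomial times a Gaussian it decays rapidly and both the lattice sum and its transform converge absolutely, so $\tilde{G}_m=\sum_{n=-\infty}^{\infty}g_m(n)=\sum_{k=-\infty}^{\infty}G_m(k)$. Isolating the $k=0$ term and using $G_m(0)=\mu_m$ gives the exact identity $G_m(0)-\tilde{G}_m=-\sum_{k\neq 0}G_m(k)$, hence $\abs{G_m(0)-\tilde{G}_m}=\abs{\sum_{k\neq 0}G_m(k)}$. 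Because $g_m$ is real-valued we have $G_m(-k)=\overline{G_m(k)}$, so $\abs{G_m(-k)}=\abs{G_m(k)}$ and the relabeling $k\mapsto -k$ shows the asserted right-hand side equals $\sum_{k\neq 0}\abs{G_m(k)}$. The identity is therefore equivalent to the saturated triangle inequality $\abs{\sum_{k\neq 0}G_m(k)}=\sum_{k\neq 0}\abs{G_m(k)}$, which holds precisely when the coefficients $\{G_m(k)\}_{k\neq 0}$ share a single complex phase.

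Second I would determine that phase. Using the moment-generating property $G_m(k)=(-i)^m(2\pi)^{-m}\frac{d^m}{dk^m}G_0(k)$ with $G_0(k)=e^{-2\pi i\tilde\mu k}e^{-2\pi^2\sigma^2 k^2}$, the derivative of a Gaussian is a Hermite polynomial times the Gaussian, so up to a fixed power of $i$ and positive constants $G_m(k)$ is proportional to $H_m(\sqrt{2}\,\pi\sigma k)\,e^{-2\pi^2\sigma^2 k^2}$ (dressed by the centering phase $e^{-2\pi i\tilde\mu k}$). A Hermite polynomial does not change sign beyond its extreme real zeros, and those zeros grow only like $\sqrt{2m}$; in the operative regime of \Cref{thm:short_sampling_algorithm}, where \Cref{alg:sampl_n_trim} enforces that the relevant logarithm's argument exceed $e$, one has $\sigma=\tilde O(\sqrt m)$ and the smallest nonzero argument $\sqrt{2}\,\pi\sigma$ already overshoots that largest zero. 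Thus $H_m(\sqrt{2}\,\pi\sigma k)$ carries a fixed sign for every $k\geq 1$, which is the lever that aligns the coefficients.

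The step I expect to be the main obstacle is ruling out destructive interference between the $+k$ and $-k$ coefficients, i.e. promoting the fixed sign for $k\geq 1$ into a genuinely common phase over all $k\neq 0$. The parity relation $H_m(-y)=(-1)^m H_m(y)$ shows that the contribution from $k\leq -1$ carries the conjugate phase $\overline{G_m(k)}$, so the pairing $G_m(k)+G_m(-k)=2\,\rep(G_m(k))$ must be shown to avoid cancellation; this is exactly where the parity of $m$ and the centering offset $\tilde\mu$ enter, since the phase $e^{-2\pi i\tilde\mu k}$ can rotate $G_m(k)$ away from the real axis. Controlling this interference — quantitatively comparing $\sqrt{2}\,\pi\sigma$ against the extreme Hermite zero and tracking how the factor $(-i)^m$ and the offset $\tilde\mu$ conspire so that no two surviving coefficients end up out of phase — is the crux that upgrades the generic triangle-inequality bound into the stated equality.
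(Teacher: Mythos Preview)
Your Poisson-summation derivation is exactly what the paper does: it rewrites $\tilde G_m$ as $\int \comb(x)\,g_m(x)\,\mathrm{d}x$, Fourier-transforms the comb, and arrives at $\tilde G_m=\sum_k G_m(-k)$, hence $G_m(0)-\tilde G_m=-\sum_{k\neq 0}G_m(-k)$. Up to this point your argument and the paper's are identical.

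Where you diverge is that you (correctly) recognize that passing from $\bigl|\sum_{k\neq 0}G_m(-k)\bigr|$ to $\sum_{k\neq 0}\abs{G_m(-k)}$ is a saturated triangle inequality and try to earn it via the phase structure of the Hermite factors. The paper does not do this: its proof simply stops after Poisson summation and writes the equality without further comment. Your worry about the centering phase $e^{-2\pi i\tilde\mu k}$ is in fact decisive: already for $m=0$ one has $G_0(k)+G_0(-k)=2e^{-2\pi^2\sigma^2 k^2}\cos(2\pi k\tilde\mu)$, whereas $\abs{G_0(k)}+\abs{G_0(-k)}=2e^{-2\pi^2\sigma^2 k^2}$, so the asserted equality fails for generic $\tilde\mu$ regardless of how large $\sigma$ is relative to the Hermite zeros. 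No amount of control on $H_m$ rescues this, because the obstruction lives in the $\tilde\mu$-dependent phase, not in the sign of the Hermite factor.

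In other words, the obstacle you flagged is real and cannot be removed; the lemma as an equality is not correct in the generality stated. The resolution is that the paper only ever \emph{uses} the inequality: in the very next lemma (Discretization Errors) the quantity $\abs{\tilde G_m-G_m(0)}$ is immediately bounded above by $2\sum_{k\geq 1}\abs{G_m(k)}$ and then further upper-bounded. So the ``$=$'' should be read as ``$\leq$'', your Poisson-summation step already delivers that inequality, and the Hermite-zero analysis you propose is unnecessary for the paper's purposes.
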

\begin{proof}
We must also estimate the discretization errors coming from the coarseness of the GPE. We start by writing out the discrete version of moment estimation. Let us define
$$
 g_m(x)=\frac{1}{\sigma \sqrt{2\pi}}x^m e^{-(\frac{x-\mu}{\sigma})^2/2}
$$
The moment estimation sum can be written down in terms of an integral with a $\comb$ function
\begin{align}
\tilde{G}_m = \sum_{x=-\infty}^{\infty} g_m(x) &= \int \mathrm{d}x \comb(x)  g_m(x) \cr
    &= \int \mathrm{d}x \mathcal{F}^{-1}\left(\mathcal{F}(\comb) \star G_m\right)(x) \cr 
    &= \int \mathrm{d}x \mathcal{F}^{-1}\left(\comb \star G_m\right)(x) \cr 
    &= \int \mathrm{d}x \int \mathrm{d}\xi e^{i2\pi \xi x} \left(\sum_k G_m(\xi-k)\right) \cr
    &= \int \mathrm{d}\xi \delta(\xi) \left( \sum_k G_m(\xi-k)\right) \cr
    &=  \sum_k G_m(-k)
\end{align}
The continuous momenta are just $G_m(0)$, and for the raw momenta we do the replacement $\mu \to 0$. The discretization error for the different momenta is
$$
|G_m(0)-\tilde{G}_m|=\sum_{k\neq 0} |G_m(-k)|
$$
\end{proof}

To obtain a bound on discretization errors, we would like to first bound each $\abs{G_m(k)}$ and for that we will be using its moment generating properties. For that, we will instead bound $\abs{D^m_k G_0(k)}$. First, we will introduce the following lemma:
\begin{lemma}[Derivative bound]\label{lem:fnbound}
If the function $f$ is analytic on the complex plane with $|z|\leq r$, then 
\begin{align}
|f^{(n)}(z)| \leq \frac{M n! 2^n}{r^n},
\end{align}
for $|z| \leq \frac{r}{2}$, where $M=\max_{|z|\leq r} |f(z)|$.
\end{lemma}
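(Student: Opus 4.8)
The plan is to invoke the Cauchy integral formula for derivatives with a judiciously chosen contour. Recall that for $f$ analytic on the closed disk $|w| \leq r$ and any point $z$ lying interior to a positively oriented closed contour $C$ contained in this disk, one has
$$
f^{(n)}(z) = \frac{n!}{2\pi i} \oint_C \frac{f(w)}{(w-z)^{n+1}} \, dw.
$$
The key choice is to take $C$ to be the circle centered at $z$ (rather than at the origin) of radius $r/2$. First I would verify that this contour is admissible whenever $|z| \leq r/2$: for any $w$ with $|w-z| = r/2$, the triangle inequality gives $|w| \leq |z| + |w-z| \leq r/2 + r/2 = r$, so $f$ is analytic and satisfies $|f(w)| \leq M$ on $C$ and throughout its interior. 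Since $z$ is the center of $C$, it is trivially interior to it, so the formula applies.

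Next I would apply the standard ML (maximum-modulus times length) estimate to the contour integral. On $C$ the denominator has constant modulus, $\abs{(w-z)^{n+1}} = (r/2)^{n+1}$, the numerator is bounded by $M$, and the contour has length $2\pi(r/2) = \pi r$. Therefore
$$
\abs{f^{(n)}(z)} \leq \frac{n!}{2\pi} \cdot \frac{M}{(r/2)^{n+1}} \cdot \pi r = \frac{n! M r}{2 (r/2)^{n+1}}.
$$
Substituting $(r/2)^{n+1} = r^{n+1}/2^{n+1}$ and simplifying yields exactly $\abs{f^{(n)}(z)} \leq M n! 2^n / r^n$, which is the claimed bound.

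The only subtlety---and the reason for centering the contour at $z$ rather than at the origin---is that centering at the origin would force only the weaker distance estimate $\abs{w-z} \geq r - |z| \geq r/2$ together with a longer contour of length $2\pi r$, producing the looser constant $2^{n+1}$ in place of $2^n$. By instead centering at $z$ and using radius $r/2$, the distance to the pole is \emph{exactly} $r/2$ at every point of $C$ and the contour is half as long, which tightens the estimate to the stated $2^n$. I do not anticipate any genuine obstacle: the result is a direct Cauchy estimate, and essentially all of its content lies in the contour choice that optimizes the constant.
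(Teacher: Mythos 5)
Your proof is correct and follows essentially the same route as the paper: both apply the Cauchy integral formula for derivatives over a circle centered at $z$ contained in the disk $|w|\leq r$ and then use the ML estimate, with the condition $|z|\leq r/2$ guaranteeing the distance to the boundary is at least $r/2$ and hence the constant $2^n$. The paper parametrizes the contour radius as $(r-|z|)/\omega$ before specializing, but the substance of the argument is identical to yours.
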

\begin{proof}
$$
f^{(n)}(z)=\frac{n!}{2\pi i}\int_{|z|=r}\frac{f(\xi)}{(\xi-z)^{n+1}}d\xi
$$
Now, consider another disk $\gamma$ centered at z:
$$
|\xi - z|=\frac{r-|z|}{\omega}
$$
where $\omega \geq 1$ such that $\gamma$ is always contained within $|z|\leq r$. Thus, with $M = \max_{|z|\leq r} |f(z)|$,
\begin{align}
\bigg|f^{(n)}(z)\bigg|&=\bigg|\frac{n!}{2\pi i}\int_{|z|=r}\frac{f(\xi)}{(\xi-z)^{n+1}}d\xi\bigg|
\cr
&=\bigg|\frac{n!}{2\pi i}\int_{C}\frac{f(\xi)}{(\xi-z)^{n+1}}d\xi\bigg|
\cr
&\leqslant\frac{n!}{2\pi}\int_{C}\frac{M}{|\xi-z|^{n+1}}|d\xi|
\cr
&=\frac{M n!}{2\pi}\int_{0}^{2\pi}\frac{1}{|\xi-z|^{n+1}}\frac{r-|z|}{\omega}d\theta
\cr
&=\frac{M n!}{2\pi}\int_{0}^{2\pi}\frac{\omega^n}{(r-|z|)^{n}}d\theta
\cr
&=\frac{M n!\omega^n}{(r-|z|)^{n}}
\cr
&\leq \frac{M n!2^n}{r^{n}}.
\end{align}
\end{proof}

\begin{lemma}[Discretization Errors]\label{lem:disc_errors}
Given that $\exp{-2 \pi^2 \sigma^2 } \leq 1/2$, we know that
$$
|\tilde{G}_m-G_m(0)|=2\sum^{\infty}_{k=1} |G_m(k)| \leq 4\frac{\exp{2 \pi \delta_1 \abs{\mu}} \exp{2\pi^2(\delta_1^2+2\delta_1) \sigma^2}\exp{-2 \pi^2 \sigma^2 }m!}{\pi^m\delta_1^m}.
$$
\end{lemma}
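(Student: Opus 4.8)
The plan is to reduce the statement to a pointwise bound on $\abs{G_m(k)}$ for each integer $k\ge 1$ and then sum a fast-decaying series. I would start from the aliasing identity proved just above, $\abs{G_m(0)-\tilde G_m}=\sum_{k\neq 0}\abs{G_m(-k)}$, and observe that $G_0(z)=\exp{-2\pi i\mu z-2\pi^2\sigma^2 z^2}$ is the (entire) Fourier transform of the Gaussian $g_0$, so on the real line $G_0(-k)=\overline{G_0(k)}$; differentiating $m$ times gives $\abs{G_m(-k)}=\abs{G_m(k)}$ and hence $\abs{G_m(0)-\tilde G_m}=2\sum_{k=1}^\infty\abs{G_m(k)}$, matching the claimed prefactor of $2$. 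The moment-generating property $G_m(k)=(-i)^m(2\pi)^{-m}G_0^{(m)}(k)$ then turns the problem into bounding the $m$-th complex derivative of the entire function $G_0$ at each integer.

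Next I would apply the derivative bound of \Cref{lem:fnbound} to the shifted function $w\mapsto G_0(w+k)$ on the disk of radius $r=\delta_1$, evaluated at $w=0$ (which lies in the required inner disk $\abs{w}\le r/2$). This yields $\abs{G_0^{(m)}(k)}\le \frac{m!\,2^m}{\delta_1^m}\max_{\abs{\xi-k}\le\delta_1}\abs{G_0(\xi)}$, and crucially the factor $2^m$ cancels the $2^m$ hidden in $(2\pi)^{-m}$, producing the stated denominator $\pi^m\delta_1^m$ rather than $(2\pi\delta_1)^m$. It then remains to control the maximum of $\abs{G_0}$ over the complex disk: writing $\xi=k+\rho e^{i\theta}$ with $\rho\le\delta_1$, I would use $\abs{\exp{-2\pi i\mu\xi}}=\exp{2\pi\mu\,\imp\xi}\le\exp{2\pi\abs{\mu}\delta_1}$ together with $\rep(\xi^2)\ge k^2-2k\delta_1-\delta_1^2$ to obtain $\max_{\abs{\xi-k}\le\delta_1}\abs{G_0(\xi)}\le \exp{2\pi\abs{\mu}\delta_1}\exp{-2\pi^2\sigma^2(k^2-2k\delta_1-\delta_1^2)}$.

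Finally I would sum over $k\ge1$. Factoring out $\exp{2\pi^2\sigma^2\delta_1^2}$, the summand is $\exp{-2\pi^2\sigma^2(k^2-2k\delta_1)}$, whose consecutive ratio is $\exp{-2\pi^2\sigma^2(2k+1-2\delta_1)}\le \exp{-2\pi^2\sigma^2}\le 1/2$ for $k\ge1$ (invoking the hypothesis $\exp{-2\pi^2\sigma^2}\le1/2$ and $\delta_1\le1$). Comparison with a geometric series then bounds the whole sum by twice its $k=1$ term; the resulting extra factor of $2$ combines with the $2$ from the symmetry step to give the overall constant $4$, while the $k=1$ term evaluates to exactly $\exp{2\pi^2\sigma^2(\delta_1^2+2\delta_1)}\exp{-2\pi^2\sigma^2}$, reproducing the claimed expression. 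The main obstacle is the middle step: the analytic continuation of $G_0$ onto the complex disk grows like $\exp{4\pi^2\sigma^2 k\delta_1}$, so I must verify that the Gaussian decay $\exp{-2\pi^2\sigma^2 k^2}$ still dominates and leaves a genuinely geometric tail — it is precisely the hypothesis $\exp{-2\pi^2\sigma^2}\le 1/2$ that secures convergence and pins down the constant.
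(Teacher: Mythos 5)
Your proof is correct and follows essentially the same route as the paper: the aliasing identity plus the symmetry $\abs{G_m(-k)}=\abs{G_m(k)}$, the Cauchy-type derivative bound of \Cref{lem:fnbound} applied to the analytic continuation of $G_0$ on a disk of radius $\delta_1$ about each integer $k$, and a geometric comparison for the tail sum. The only (minor) divergence is the last step: the paper linearizes $k^2\to k$ and implicitly needs $\exp{-2\pi^2\sigma^2(1-2\delta_1)}\leq 1/2$, whereas your ratio-test argument on the Gaussian-decaying series extracts the same factor of $2$ directly from the stated hypothesis $\exp{-2\pi^2\sigma^2}\leq 1/2$ (together with $\delta_1\leq 1$), which actually matches the lemma statement more faithfully.
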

\begin{proof}
First, we not that
$$
G_0(k) = e^{-2 \pi  k \left(\pi  k \sigma ^2+i \mu \right)}.
$$
Thus, on the complex disc around $k$ with radius $\delta$, an upper bound on the magnitude of $G_0$ is
% $$
% M = \exp{2 \pi \delta_1 \abs{\mu}} \exp{2 \pi^2 \delta_1^2 \sigma^2 } \exp{-2 \pi^2 k^2 \sigma^2 }
% $$
$$
M \leq \exp{2\pi \delta_1 \abs{\mu}} \exp{2 \pi^2 \delta_1^2 \sigma^2 } \exp{-2 \pi^2 k^2 \sigma^2 } \exp{4 \pi^2 k \delta_1 \sigma^2 }
$$
Thus,
$$
|G_m(k)| \leq \frac{\exp{2 \pi \delta_1 \abs{\mu}} \exp{2 \pi^2 \delta_1^2 \sigma^2 } \exp{-2 \pi^2 k^2 \sigma^2 } \exp{4 \pi^2 k \delta_1 \sigma^2 } m! (2 )^m} {(2\pi)^m\delta_1^m} 
$$

\begin{align*}
&|\tilde{G}_m-G_m(0)|=2\sum^{\infty}_{k=1} |G_m(k)| \cr 
&\leq 2\frac{\exp{2 \pi \delta_1 \abs{\mu}} \exp{2\pi^2\delta_1^2 \sigma^2}m!2^m}{(2\pi)^m\delta_1^m}\sum^{\infty}_{k=1} \exp{4 \pi^2 k \delta_1 \sigma^2 }\exp{-2 \pi^2 k^2 \sigma^2 }
\end{align*}
If we assume that 
$$
\exp{-2\pi^2 \sigma^2} \leq 1
$$
we can instead use the following bound
\begin{align*}
&|\tilde{G}_m-G_m(0)| \cr 
&\leq 2\frac{\exp{2 \pi \delta_1 \abs{\mu}} \exp{2\pi^2\delta_1^2 \sigma^2}m!2^m}{(2\pi)^m\delta_1^m} \sum^{\infty}_{k=1} \exp{4 \pi^2 k \delta_1 \sigma^2 }\exp{-2 \pi^2 k \sigma^2 }
\end{align*}
Moreover, if we assume
$$
\exp{-2 \pi^2 \sigma^2(1-2\delta_1) } \leq 1/2,
$$
\begin{align*}
&|\tilde{G}_m-G_m(0)| \cr
&\leq 2\frac{\exp{2 \pi \delta_1 \abs{\mu}} \exp{2\pi^2(\delta_1^2 +2 \delta_1 )\sigma^2}\exp{-2 \pi^2 \sigma^2 }m!2^m}{(2\pi)^m\delta_1^m}\cr   
&\quad\quad\times\left(\frac{1}{1-\exp{-2 \pi^2 \sigma^2(1-2\delta_1) }}\right) \cr
& \leq 4\frac{\exp{2 \pi \delta_1 \abs{\mu}} \exp{2\pi^2(\delta_1^2 +2 \delta_1 ) \sigma^2}\exp{-2 \pi^2 \sigma^2 }m!}{\pi^m\delta_1^m}.
\end{align*}
\end{proof}

\subsection{Truncation error and excited state contributions}

In order to estimate the truncation and contamination errors, we start by proving the following lemma:
\begin{lemma}\label{lem:HmFmError}
Given $f,h\in \mathbb{C}^{2K+1}$ with indices running through $x\in \{x_0-K,x_0-K+1,\dots,x_0+K\}$, and given the two definitions
\begin{align*}
\tilde{H}_m (k) &= \sum_{x} \exp{- 2 \pi i x k/(2K+1) } x^m |h_x|^2 \cr 
\tilde{F}_m(k) &= \sum_{x} \exp{-2\pi i x k/(2K+1)} x^{m} |f_x|^2,
\end{align*}
provided $\max\abs{x}\leq 2K$, the following inequalities hold:
\begin{align*}
|\tilde{H}_0(0)-\tilde{F}_0(0)| &\leq \||h|^2 - |f|^2\|_1 \cr 
|\tilde{H}_m(0)-\tilde{F}_m(0)| &\leq\frac{m!(2K+1)^m}{\pi^m\delta_2^m} \exp{2\pi \delta_2 }\||h|^2 - |f|^2\|_1  \cr 
&\leq \frac{m!(2K+1)^m}{\pi^m\delta_2^m} \exp{2\pi \delta_2 }\left(\|\epsilon\|^2 + 2\| h \| \| \epsilon \|\right),
\end{align*}
where $\delta_2\in \mathbb{R}^+$.

\end{lemma}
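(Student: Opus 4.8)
The plan is to dispatch the $m=0$ base case directly and then reduce the general-$m$ estimate to the Cauchy-type derivative bound of \Cref{lem:fnbound}, exploiting the moment-generating structure of the discrete Fourier transform. For $m=0$ the statement is immediate: since $\tilde{H}_0(0)=\sum_x |h_x|^2$ and $\tilde{F}_0(0)=\sum_x |f_x|^2$, their difference is $\sum_x \left(|h_x|^2-|f_x|^2\right)$, and the triangle inequality gives $|\tilde{H}_0(0)-\tilde{F}_0(0)|\leq \sum_x \big| |h_x|^2-|f_x|^2 \big| = \big\| |h|^2-|f|^2 \big\|_1$.

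For general $m$, the key observation is that the extra factor $x^m$ appearing in $\tilde{H}_m$ and $\tilde{F}_m$ can be produced by differentiating the $m=0$ transforms in the frequency variable $k$. Since $\tfrac{d}{dk}\exp{-2\pi i x k/(2K+1)} = \left(\tfrac{-2\pi i x}{2K+1}\right)\exp{-2\pi i x k/(2K+1)}$, each $k$-derivative pulls down a factor proportional to $x$, so that $\tilde{H}_m(k)=\left(\tfrac{2K+1}{-2\pi i}\right)^m \tfrac{d^m}{dk^m}\tilde{H}_0(k)$ and likewise for $F$. Introducing the difference $\Psi(k)\defeq \tilde{H}_0(k)-\tilde{F}_0(k)=\sum_x \exp{-2\pi i x k/(2K+1)}\left(|h_x|^2-|f_x|^2\right)$, which is a finite sum of exponentials and hence entire, I obtain $|\tilde{H}_m(0)-\tilde{F}_m(0)| = \left(\tfrac{2K+1}{2\pi}\right)^m |\Psi^{(m)}(0)|$. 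This converts the task into bounding a single high-order derivative of $\Psi$ at the origin.

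I would then apply \Cref{lem:fnbound} to $\Psi$ on a disc of radius $r$, giving $|\Psi^{(m)}(0)| \leq M\, m!\, 2^m / r^m$ with $M=\max_{|z|\leq r}|\Psi(z)|$. The crux, and the main obstacle, is the uniform control of $M$. Writing $z=u+iv$ with $|z|\leq r$, the modulus of each complex exponential is $\exp{2\pi x v/(2K+1)}$, and here is precisely where the support hypothesis $\max\abs{x}\leq 2K$ enters: together with $2K/(2K+1)<1$ it forces the real exponent to satisfy $\big| 2\pi x v/(2K+1)\big|\leq 2\pi r$ uniformly in $K$, whence $M \leq \exp{2\pi r}\,\big\||h|^2-|f|^2\big\|_1$. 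Setting $r=\delta_2$ and collecting constants — the $2^m$ from the Cauchy estimate cancels against the $(2\pi)^{-m}$ versus $(2K+1)^{-m}$ bookkeeping, since $2^m/(2\pi)^m=1/\pi^m$ — yields exactly the claimed factor $\tfrac{m!(2K+1)^m}{\pi^m \delta_2^m}\exp{2\pi\delta_2}$.

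Finally, for the last inequality I would write $f=h+\epsilon$, so that $|f_x|^2-|h_x|^2 = 2\rep(\bar{h}_x \epsilon_x)+|\epsilon_x|^2$. The triangle inequality bounds $\big\||h|^2-|f|^2\big\|_1 \leq \sum_x\left(2|h_x||\epsilon_x|+|\epsilon_x|^2\right)$, and Cauchy--Schwarz applied to the cross term gives $\sum_x |h_x||\epsilon_x|\leq \|h\|\,\|\epsilon\|$, producing the stated bound $\|\epsilon\|^2 + 2\|h\|\,\|\epsilon\|$. Everything outside the $M$-bound is routine; the one step that genuinely uses a hypothesis of the lemma is the uniform exponential control on the disc, which is why $\max\abs{x}\leq 2K$ is assumed.
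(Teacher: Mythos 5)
Your proof is correct and follows essentially the same route as the paper: the $m=0$ case by the triangle inequality, the general case via the moment-generating identity $\tilde{H}_m(k)\propto D^m_k\tilde{H}_0(k)$ combined with the Cauchy-type derivative bound of \Cref{lem:fnbound} applied to the difference on a disc of radius $\delta_2$, and the final step by Cauchy--Schwarz. If anything, you supply more detail than the paper does on the one nontrivial point, namely that the hypothesis $\max\abs{x}\leq 2K$ is what gives the uniform bound $M\leq \exp{2\pi\delta_2}\||h|^2-|f|^2\|_1$ on the disc.
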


\begin{proof}
We take into consideration the following generating functions
\begin{align}
\tilde{H}_m(k) &= (-i)^m \left( \frac{2K+1}{2\pi}\right)^m D^m_k \tilde{H}_0(k) \cr
\tilde{F}_m(k) &= (-i)^m \left( \frac{2K+1}{2\pi}\right)^m D^m_k \tilde{F}_0(k).
\end{align}
Now, we are interested in bounding the magnitude of $\tilde{H}_m(0)-\tilde{F}_m(0)$ for which there is also a similar moment generating function by the linearity of the differentiation operation, and thus can use \Cref{lem:fnbound} around $k=0$. The maximum value $M$ from \Cref{lem:fnbound} would be:
$$
\max_{|k|\leq \delta_2}|\tilde{H}_0(k)-\tilde{F}_0(k)| \leq \exp{2\pi \delta_2 }\||h|^2-|f|^2\|_1,
$$
and with this
$$
|\tilde{H}_m(0)-\tilde{F}_m(0)| \leq \frac{m!(2K+1)^m}{\pi^m\delta_2^m} \exp{2\pi \delta_2 }\||h|^2-|f|^2\|_1.
$$

We also have that
\begin{align}
\||h|^2 - |f|^2\|_1 &\leq \|\epsilon\|_2^2 + \| h^* \epsilon + h\epsilon^{*} \|_1 \cr 
&\leq \|\epsilon\|_2^2 + \| h^* \epsilon \|_1 +  \|h\epsilon^{*} \|_1 \cr 
&\leq \|\epsilon\|_2^2 + 2\| h \|_2 \| \epsilon \|_2.
\end{align}
Here, $f = h - \epsilon$. Thus, we finally get
$$
|\tilde{H}_m(0)-\tilde{F}_m(0)| \leq \frac{m!(2K+1)^m}{\pi^m\delta_2^m} \exp{2\pi \delta_2 }\left(\|\epsilon\|^2 + 2\| h \| \| \epsilon \|\right).
$$

\end{proof}

\begin{lemma}
Given the following definitions
\begin{align*}
\tilde{G}_m (k) &= \sum^{\infty}_{x=-\infty} \exp{- 2 \pi i x k/(2K+1) } x^m g_0(x) \cr 
% \tilde{F}_m(k) &= \sum_{x=-K}^{K} \exp{-2\pi i x k/(2K+1)} x^{m} |f_x|^2 \cr 
\tilde{F}^{(\rm polut)}_m(k) &= \sum_{x} \exp{-2\pi i x k/(2K+1)} x^{m} |f^{(\rm polut)}_x|^2,
\end{align*}
where $f\in \mathbb{C}^{2K+1}$ with indices running through $x\in \{x_0-K,x_0-K+1,\dots,x_0+K\}$ such that
\begin{align*}
    \left(f\right)_x &= \sqrt{g_0(x)}, \cr 
    \left(f^{(\rm polut)}\right)_x &= \left(f\right)_x + \frac{\sqrt{\mathcal{N}}}{\gamma_0}\left(\epsilon^{(\rm polut)}\right)_x,    
\end{align*}
and $\max\abs{x}\leq 2K$, we find the error between the two is bounded through
\begin{align*}
    |\tilde{G}_m(0)-\tilde{F}^{(\rm polut)}_m(0)| &\leq \frac{m!(2K+1)^m}{\pi^m\delta_2^m} \exp{2\pi \delta_2 }\left\Vert\tilde{g}-\abs{ f^{(\rm polut)} }^2\,\,\right\Vert_1 \cr 
    &\leq\frac{11}{4} \frac{m!(2K+1)^m}{\pi^m\delta_2^m} \exp{2\pi \delta_2 } \left( \vphantom{\sqrt{\frac{1}{1}}} \sqrt{\abs{ \tilde{G}_0 - \tilde{F}_0 }}  + \sqrt{\frac{5}{3}}\sqrt{\frac{1}{\eta}}\|\epsilon^{(R)}\| \right).
\end{align*}
where $\delta_2\in \mathbb{R}^+$.
\end{lemma}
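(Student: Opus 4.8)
The plan is to establish the two displayed inequalities separately. The first is a direct reuse of the generating-function/Cauchy-estimate machinery already set up in \Cref{lem:HmFmError} and \Cref{lem:fnbound}; the second is the substantive part, namely bounding the $\ell_1$ norm $\norm{\tilde g - \abs{f^{(\rm polut)}}^2}_1$ in terms of the truncation error $\abs{\tilde G_0 - \tilde F_0}$ and the contamination norm $\norm{\epsilon^{(R)}}$.

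For the first inequality I would note that, exactly as in \Cref{lem:HmFmError}, both $\tilde G_m(0)$ and $\tilde F^{(\rm polut)}_m(0)$ are, up to the common factor $(-i)^m\lrb{\frac{2K+1}{2\pi}}^m$, the $m$-th $k$-derivatives at $k=0$ of the generating functions $\tilde G_0(k)$ and $\tilde F^{(\rm polut)}_0(k)$. Applying \Cref{lem:fnbound} to the analytic difference $\phi(k) = \tilde G_0(k) - \tilde F^{(\rm polut)}_0(k)$ with radius $\delta_2$ gives $\abs{\phi^{(m)}(0)} \leq \frac{m!\,2^m}{\delta_2^m}\max_{\abs{k}\leq\delta_2}\abs{\phi(k)}$, and since $\frac{2^m}{(2\pi)^m} = \frac{1}{\pi^m}$ the prefactor assembles into $\frac{m!(2K+1)^m}{\pi^m\delta_2^m}$. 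The maximum modulus over the disk is then bounded by the weighted $\ell_1$ norm: each complex phase $\exp{-2\pi i x k/(2K+1)}$ with $\abs{k}\leq\delta_2$ has modulus $\exp{2\pi\abs{x}\delta_2/(2K+1)}$, which on the window ($\abs{x}\leq 2K$) is at most $\exp{2\pi\delta_2}$, yielding $\max_{\abs{k}\leq\delta_2}\abs{\phi(k)} \leq \exp{2\pi\delta_2}\norm{\tilde g - \abs{f^{(\rm polut)}}^2}_1$.

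For the second inequality I would split the $\ell_1$ norm into the part outside the window and the part on the window. Outside the window $f^{(\rm polut)}$ vanishes, so that contribution is exactly the tail $\sum_{\abs{x}>K} g_0(x) = \abs{\tilde G_0 - \tilde F_0}$. On the window, writing $f^{(\rm polut)} = f + \epsilon'$ with $\epsilon' = \frac{\sqrt{\mathcal N}}{\gamma_0}\epsilon^{(\rm polut)}$ and using $\abs{f_x}^2 = g_0(x)$, an expansion analogous to the one in \Cref{lem:HmFmError} gives $\norm{\abs{f}^2 - \abs{f^{(\rm polut)}}^2}_1 \leq 2\norm{f}\norm{\epsilon'} + \norm{\epsilon'}^2$. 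I would then bound the three constituents using the standing assumptions $\abs{G_0(0)-\tilde G_0}\leq 1/8$, $\abs{\tilde G_0 - \tilde F_0}\leq 1/8$, and $\frac{1}{\sqrt\eta}\norm{\epsilon^{(R)}}\leq 1/8$: first $\norm{f}^2 = \tilde F_0 \leq \mathcal N_{\rm up} \leq 5/4$ from the normalization bound used in \Cref{lem:hit_rate} and \Cref{lem:norm_error}, so $\norm{f}\leq \sqrt5/2$; second, by \Cref{lem:contamination_bounds} together with $\mathcal N_{\rm up}/\mathcal N_{\rm low}\leq (5/4)/(3/4)=5/3$, $\abs{\gamma_0}^{-2}\leq\eta^{-1}$ and $1-\eta\leq 1$, $\norm{\epsilon'} = \frac{\sqrt{\mathcal N}}{\abs{\gamma_0}}\norm{\epsilon^{(\rm polut)}} \leq \sqrt{\frac{\mathcal N_{\rm up}}{\mathcal N_{\rm low}}}\frac{\sqrt{1-\eta}}{\abs{\gamma_0}}\norm{\epsilon^{(R)}} \leq \sqrt{\frac53}\sqrt{\frac1\eta}\norm{\epsilon^{(R)}} =: B$. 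Finally I would downgrade $\abs{\tilde G_0 - \tilde F_0} \leq \frac{1}{2\sqrt2}\sqrt{\abs{\tilde G_0 - \tilde F_0}}$ (valid since the tail is $\leq 1/8$) and $\norm{\epsilon'}^2 \leq \frac{\sqrt{5/3}}{8}\norm{\epsilon'} \leq \frac{\sqrt{5/3}}{8}B$ (valid since $B\leq \sqrt{5/3}/8$), collect everything as a linear combination of $\sqrt{\abs{\tilde G_0 - \tilde F_0}}$ and $B$, and verify each coefficient is at most $11/4$: the $\sqrt{\abs{\tilde G_0-\tilde F_0}}$ coefficient is $\frac{1}{2\sqrt2}\approx 0.35$, while the $B$ coefficient is $2\cdot\frac{\sqrt5}{2} + \frac{\sqrt{5/3}}{8} = \sqrt5 + \frac{\sqrt{5/3}}{8}\approx 2.40$, both below $11/4 = 2.75$.

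The main obstacle is bookkeeping rather than conceptual: carrying the constant $11/4$ cleanly through the chain of bounds, and in particular through the composition $\norm{\epsilon'}\leq \sqrt{\mathcal N_{\rm up}/\mathcal N_{\rm low}}\,(\sqrt{1-\eta}/\abs{\gamma_0})\norm{\epsilon^{(R)}}$, where the $5/3$ ratio of the normalization bounds must be tracked against the $1/8$ assumptions that license replacing $\abs{\tilde G_0-\tilde F_0}$ by $\sqrt{\abs{\tilde G_0-\tilde F_0}}$ and $\norm{\epsilon'}^2$ by $\norm{\epsilon'}$. A more delicate technical point lies in the first inequality: the $\max\abs{x}\leq 2K$ hypothesis that produces the clean $\exp{2\pi\delta_2}$ factor applies literally only to the window indices, so one must separately confirm that the Gaussian decay of $\tilde g$ dominates the exponential growth $\exp{2\pi\abs{x}\delta_2/(2K+1)}$ of the analytically continued phase on the tail, ensuring the stated weighted $\ell_1$ bound survives for the full infinite sum.
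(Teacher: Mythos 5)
Your handling of the second inequality takes a genuinely different route from the paper's, and it is arguably the cleaner one. The paper bounds $\norm{\tilde g-\abs{f^{(\rm polut)}}^2}_1$ by $\norm{\sqrt{\tilde g}-f^{(\rm polut)}}^2+2\norm{\sqrt{\tilde g}}\,\norm{\sqrt{\tilde g}-f^{(\rm polut)}}$ and then uses the pointwise estimate $(\sqrt a-\sqrt b)^2\leq a-b$ (valid since $\tilde g_x\geq g_x\geq 0$) to get $\norm{\sqrt{\tilde g}-\sqrt g}\leq\sqrt{\abs{\tilde G_0-\tilde F_0}}$; that is how the square root enters, unavoidably, through the cross term. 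You instead apply the triangle inequality at the $\ell_1$ level, $\norm{\tilde g-\abs{f^{(\rm polut)}}^2}_1\leq\norm{\tilde g-g}_1+\norm{\abs{f}^2-\abs{f^{(\rm polut)}}^2}_1$, identify the first term with the tail mass $\abs{\tilde G_0-\tilde F_0}$, and only insert the square root cosmetically at the end via $\abs{\tilde G_0-\tilde F_0}\leq\tfrac{1}{2\sqrt2}\sqrt{\abs{\tilde G_0-\tilde F_0}}$. Your constants check out: the coefficients $\tfrac{1}{2\sqrt2}\approx 0.35$ and $\sqrt5+\tfrac{1}{8}\sqrt{5/3}\approx 2.40$ are both below $11/4$. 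In fact this is tighter than the paper's own chain, whose final display only reaches $25/8$ (indeed $(1/2+\sqrt{5/3}/8)+19/8\approx 3.04>11/4$), so your bookkeeping is the one that actually justifies the stated constant. Note that both you and the paper import the constraints $\abs{G_0(0)-\tilde G_0}\leq 1/8$, $\abs{\tilde G_0-\tilde F_0}\leq 1/8$ and $\norm{\epsilon^{(R)}}/\sqrt\eta\leq 1/8$, which are not hypotheses of the lemma as written but are enforced elsewhere in the algorithm.

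The one genuine gap is the point you flag at the end but do not close. $\tilde G_0(k)$ is defined by an infinite sum over $x\in\mathbb Z$, so applying the Cauchy estimate of \Cref{lem:fnbound} to $\tilde G_0(k)-\tilde F^{(\rm polut)}_0(k)$ does not directly produce the weight $\exp{2\pi\delta_2}$: the analytically continued phase has modulus up to $\exp{2\pi\abs{x}\delta_2/(2K+1)}$, which is unbounded on the tail, and a tail-domination argument would yield a bound in terms of $g_0$ on $\mathbb Z$ rather than the stated $\ell_1$ norm of the window vector $\tilde g$. The paper's resolution is to fold the infinite sum into the window first, writing $\tilde G_0(k)=\sum_{x=x_0-K}^{x_0+K}e^{-2\pi i xk/(2K+1)}\tilde g(x)$ with the aliased weights $\tilde g(x)=\sum_j g_0\lrb{x-j(2K+1)}$, and only then applying \Cref{lem:HmFmError} to two finite sums on the window, where $\max\abs{x}\leq 2K$ genuinely holds. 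This aliasing step is also what makes your ``outside-the-window contribution equals the tail'' claim precise: it becomes the identity $\norm{\tilde g-g}_1=\abs{\tilde G_0-\tilde F_0}$ on the window, since periodization folds exactly the tail mass back in. With that single move supplied, the rest of your argument goes through.
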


\begin{proof}
First of all, we note that $\tilde{G}_0(k)$ can be written the following way
\begin{align}
    \tilde{G}_0 (k) &= \sum^{\infty}_{x=-\infty} \exp{- 2 \pi i x k/(2K+1) } g_0(x) \cr
                    &=\sum^{x_0+K}_{x=x_0-K} \exp{- 2 \pi i x k/(2K+1) } \tilde{g}(x), \cr
\end{align}
where 
$$
\tilde{g}(x) = \sum_j g_0(x-j(2K+1)).
$$
Thus, with $\tilde{G_0}$ written in terms of $\tilde{g}$ and with the results from \Cref{lem:HmFmError} we obtain that the error $|\tilde{G}_m(0)-\tilde{F}^{(\rm polut)}_m(0)|$ is
\begin{align}\label{eq:GmFmError}
|\tilde{G}_m(0)-\tilde{F}^{(\rm polut)}_m(0)| &\leq \frac{m!(2K+1)^m}{\pi^m\delta_2^m} \exp{2\pi \delta_2 }\left\Vert\tilde{g}-\abs{ f^{(\rm polut)} }^2\,\,\right\Vert_1 \cr 
&\leq \frac{m!(2K+1)^m}{\pi^m\delta_2^m} \exp{2\pi \delta_2 }\cr
&\quad\quad\times\left( \left\Vert\sqrt{\tilde{g}}- f^{(\rm polut)}\right\Vert^2 + 2 \left\Vert \sqrt{\tilde{g}} \right\Vert \left\Vert\sqrt{\tilde{g}}- f^{(\rm polut)}\,\,\right\Vert  \right), \cr
\end{align}
where $\delta_2\in \mathbb{R}^+$. We can obtain the following bound for $\left\Vert \sqrt{\tilde{g}} - f^{(\rm polut)} \right\Vert$:

\begin{align}
\left\Vert \sqrt{\tilde{g}} - f^{(\rm polut)} \right\Vert &= \left\Vert \sqrt{\tilde{g}} - \sqrt{g}  -  \epsilon^{(\rm polut)} \right\Vert \cr
&\leq \left\Vert \sqrt{\tilde{g}} - \sqrt{g} \right\Vert + \frac{\sqrt{\mathcal{N}}}{|\gamma_0|}\left\Vert   \epsilon^{(\rm polut)} \right\Vert.
\end{align}
Now, we would like to show that
\begin{align}
    \left\Vert \sqrt{\tilde{g}} - \sqrt{g} \right\Vert \leq \sqrt{\abs{ \tilde{G}_0 - \tilde{F}_0 }}. \cr 
\end{align}
For that, we first note that 
\begin{align}
    \Tilde{g}_x \geq g_x \geq 0. \cr 
\end{align}
Then, it follows that
\begin{align}
    (\sqrt{\tilde{g}}_x - \sqrt{g}_x)^2  \leq \tilde{g}_x - g_x.
\end{align}
We can understand this more easily if we write $\tilde{g}_x = c_x g_x$, where $c_x\leq 1$
\begin{align}
    (\sqrt{\tilde{g}}_x - \sqrt{c_x}\sqrt{\Tilde{g}}_x)^2 = \tilde{g}_x \left(1-\sqrt{c_x}\right)^2  &\leq \tilde{g}_x - g_x = \Tilde{g}_x(1-c_x) \cr 
    \left(1-\sqrt{c_x}\right)^2 &\leq 1-c_x.
\end{align}
Finally, we sum over all $x$ to obtain
\begin{align}
    \left\Vert \sqrt{\tilde{g}} - \sqrt{g} \right\Vert^2 \leq \left\Vert \tilde{g} - g \right\Vert_1 = \abs{ \tilde{G}_0 - \tilde{F}_0 } . \cr 
\end{align}

With this,
\begin{align}
    &\left( \left\Vert\sqrt{\tilde{g}}- f^{(\rm polut)}\right\Vert^2 + 2 \left\Vert \sqrt{\tilde{g}} \right\Vert \left\Vert\sqrt{\tilde{g}}- f^{(\rm polut)}\,\,\right\Vert  \right) \cr 
    &\leq \left( \sqrt{\abs{ \tilde{G}_0 - \tilde{F}_0 }} + \frac{\sqrt{\mathcal{N}}}{|\gamma_0|}\|\epsilon^{(\rm polut)}\| \right)^2 + 2 \|\sqrt{\tilde{g}}\| \left( \sqrt{\abs{ \tilde{G}_0 - \tilde{F}_0 }} + \frac{\sqrt{\mathcal{N}}}{|\gamma_0|}\|\epsilon^{(\rm polut)}\| \right) \cr 
\end{align}
Now, we note that there exist the following bound for $\left\Vert \sqrt{\tilde{g}} \right\Vert^2$
\begin{align}
    \left\Vert \sqrt{\tilde{g}} \right\Vert^2 & \leq \| \tilde{g} - g_0 \|_1 + \| g_0 \|_1 \cr
    &\leq  1 + \abs{G_0 (0)- \tilde{G}_0} + 2\abs{ \tilde{G}_0 - \tilde{F}_0 }
\end{align}
With this, the term in parentheses in \Cref{eq:GmFmError} becomes
\begin{align}
    &\left( \left\Vert\sqrt{\tilde{g}}- f^{(\rm polut)}\right\Vert^2 + 2 \left\Vert \sqrt{\tilde{g}} \right\Vert \left\Vert\sqrt{\tilde{g}}- f^{(\rm polut)}\,\,\right\Vert  \right) \cr 
    &\leq \left( \sqrt{\abs{ \tilde{G}_0 - \tilde{F}_0 }} + \frac{\sqrt{\mathcal{N}}}{|\gamma_0|}\|\epsilon^{(\rm polut)}\| \right)^2 \cr 
    &+ 2 \left(1 + \abs{G_0 (0)- \tilde{G}_0}/2 + \abs{ \tilde{G}_0 - \tilde{F}_0 }\right) \left( \sqrt{\abs{ \tilde{G}_0 - \tilde{F}_0 }} + \frac{\sqrt{\mathcal{N}}}{|\gamma_0|}\|\epsilon^{(\rm polut)}\| \right). \cr 
\end{align}
Finally, through the results of \Cref{lem:contamination_bounds}, \Cref{lem:norm_error}, and the fact that $\eta \leq \abs{\gamma_0}^2$ and $1>1-\eta$, we obtain
\begin{align}
    |\tilde{G}_m(0)-\tilde{F}^{(\rm polut)}_m(0)| &\leq \frac{m!(2K+1)^m}{\pi^m\delta_2^m} \exp{2\pi \delta_2 }\left\Vert\tilde{g}-\abs{ f^{(\rm polut)} }^2\,\,\right\Vert_1 \cr 
    &\leq \frac{m!(2K+1)^m}{\pi^m\delta_2^m} \exp{2\pi \delta_2 }\cr
    &\quad\times\left(\left( \vphantom{\sqrt{\frac{1}{1}}} \sqrt{\abs{ \tilde{G}_0 - \tilde{F}_0 }}  \right.\right.\cr
    &\left.\left.+ \sqrt{\frac{1+\abs{G_0(0)-\tilde{G}_0}+\abs{ \tilde{G}_0 - \tilde{F}_0 }}{1-\abs{G_0(0)-\tilde{G}_0}-\abs{ \tilde{G}_0 - \tilde{F}_0 }}}\sqrt{\frac{1}{\eta}}\|\epsilon^{(R)}\| \right)^2 \right.\cr 
    &\left.+ 2 \left(1 + \abs{G_0 (0)- \tilde{G}_0}/2 + \abs{ \tilde{G}_0 - \tilde{F}_0 }\right) \left( \vphantom{\frac{\sqrt{1}}{\sqrt{1}}} \sqrt{\abs{ \tilde{G}_0 - \tilde{F}_0 }} \right.\right.\cr 
    &\left.\left.+ \sqrt{\frac{1+\abs{G_0(0)-\tilde{G}_0}+\abs{ \tilde{G}_0 - \tilde{F}_0 }}{1-\abs{G_0(0)-\tilde{G}_0}-\abs{ \tilde{G}_0 - \tilde{F}_0 }}}\sqrt{\frac{1}{\eta}}\|\epsilon^{(R)}\| \right)\right). \cr   
\end{align}
If we assume $\abs{G_0(0)-\Tilde{G}_0}\leq 1/8$ and $\abs{ \tilde{G}_0 - \tilde{F}_0 } \leq 1/8$, as well as $\sqrt{\abs{ \tilde{G}_0 - \tilde{F}_0 }} \leq 1/\sqrt{8} \leq 1/2$, we can further simplify

\begin{align}
    |\tilde{G}_m(0)-\tilde{F}^{(\rm polut)}_m(0)| &\leq \frac{m!(2K+1)^m}{\pi^m\delta_2^m} \exp{2\pi \delta_2 }\left\Vert\tilde{g}-\abs{ f^{(\rm polut)} }^2\,\,\right\Vert_1 \cr 
    &\leq \frac{m!(2K+1)^m}{\pi^m\delta_2^m} \exp{2\pi \delta_2 }\cr
    &\quad\quad\times\left(\left(  1/2 + \sqrt{\frac{5}{3}}1/8 \right)\left( \vphantom{\sqrt{\frac{1}{1}}} \sqrt{\abs{ \tilde{G}_0 - \tilde{F}_0 }}  + \sqrt{\frac{5}{3}}\sqrt{\frac{1}{\eta}}\|\epsilon^{(R)}\| \right) \right.\cr 
    &\left.+  \frac{19}{8} \left( \vphantom{\frac{\sqrt{1}}{\sqrt{1}}} \sqrt{\abs{ \tilde{G}_0 - \tilde{F}_0 }} + \sqrt{\frac{5}{3}}\sqrt{\frac{1}{\eta}}\|\epsilon^{(R)}\| \right)\right). \cr
    &\leq \frac{m!(2K+1)^m}{\pi^m\delta_2^m} \exp{2\pi \delta_2 }\left\Vert\tilde{g}-\abs{ f^{(\rm polut)} }^2\,\,\right\Vert_1 \cr 
    &\leq\frac{25}{8} \frac{m!(2K+1)^m}{\pi^m\delta_2^m} \exp{2\pi \delta_2 } \left( \vphantom{\sqrt{\frac{1}{1}}} \sqrt{\abs{ \tilde{G}_0 - \tilde{F}_0 }}  + \sqrt{\frac{5}{3}}\sqrt{\frac{1}{\eta}}\|\epsilon^{(R)}\| \right). \cr 
\end{align}

\end{proof}

\subsection{Bound on the error coming from renormalization, $\varepsilon^{\rm norm}_m$}

\begin{lemma}
Assuming that $\abs{\mu} \leq K$, we can bound $\varepsilon^{\rm norm}_m$ through
    \begin{align*}
        \varepsilon^{\rm norm}_m &\leq \left(\frac{128}{45}\right)\left(\frac{ \exp{2 \pi \delta_3 \abs{\mu}} \exp{2 \pi^2 \delta_3^2 \sigma^2 }  m!}{\pi^m \delta_3^m}\right)\cr 
        &\quad\times\left(  3\abs{G_0 (0) - \tilde{G}_0} + 3\abs{ \tilde{G}_0 - \tilde{F}_0 } + \frac{9}{4}\sqrt{\frac{1}{\eta}} \lVert \epsilon^{(R)} \rVert  \right)
    \end{align*}
\end{lemma}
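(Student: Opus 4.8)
The plan is to start from the factorized form $\varepsilon^{\rm norm}_m = \abs{1 - \frac{1}{\mathcal{N}}\frac{\abs{\gamma_0}^2}{\tilde{p}_0}}\,\abs{G_m(0)}$ recorded at the start of the appendix, and to bound the two factors separately. Writing $R \defeq \frac{\abs{\gamma_0}^2}{\mathcal{N}\tilde{p}_0}$, the claimed bound is precisely the product of a bound on $\abs{G_m(0)}$, which must supply the factor $\frac{\exp{2\pi\delta_3\abs{\mu}}\exp{2\pi^2\delta_3^2\sigma^2}m!}{\pi^m\delta_3^m}$, and a bound on $\abs{1-R}$, which must supply $\frac{128}{45}\bigl(3\abs{G_0(0)-\tilde G_0}+3\abs{\tilde G_0-\tilde F_0}+\frac94\sqrt{\frac1\eta}\lVert\epsilon^{(R)}\rVert\bigr)$. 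So the whole proof reduces to establishing these two one-factor estimates.

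For the first factor I would use the moment-generating identity $G_m(k)=(-i)^m(2\pi)^{-m}\frac{\mathrm d^m}{\mathrm dk^m}G_0(k)$ together with the explicit form $G_0(k)=\exp{-2\pi k(\pi k\sigma^2+i\mu)}$. Applying the derivative bound of \Cref{lem:fnbound} to $G_0$ on a complex disk of radius $\delta_3$ centered at the origin, one has $\abs{G_m(0)}=(2\pi)^{-m}\abs{G_0^{(m)}(0)}\le (2\pi)^{-m}\frac{M\,m!\,2^m}{\delta_3^m}$, where $M=\max_{\abs{\xi}\le\delta_3}\abs{G_0(\xi)}$. A short modulus computation (exactly as in \Cref{lem:disc_errors}, specialized to $k=0$) gives $M\le \exp{2\pi\delta_3\abs{\mu}}\exp{2\pi^2\delta_3^2\sigma^2}$, so $\abs{G_m(0)}\le \frac{\exp{2\pi\delta_3\abs{\mu}}\exp{2\pi^2\delta_3^2\sigma^2}m!}{\pi^m\delta_3^m}$. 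The assumption $\abs{\mu}\le K$ keeps this factor controlled and is what licenses reusing the window-based estimates of the earlier lemmas.

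The main work, and the main obstacle, is bounding $\abs{1-R}$ with exactly the right constants. I would abbreviate $e_1\defeq\abs{G_0(0)-\tilde G_0}+\abs{\tilde G_0-\tilde F_0}$ and $e_2\defeq\frac94\sqrt{\frac1\eta}\lVert\epsilon^{(R)}\rVert$. The hit-rate lower bound of \Cref{lem:hit_rate}, namely $\tilde p_0\ge \frac{\abs{\gamma_0}^2}{\mathcal N_{\rm up}}(1-e_1-e_2)$, combined with the bounds $\mathcal N_{\rm low}=1-e_1\le \mathcal N\le \mathcal N_{\rm up}=1+e_1$ from \Cref{lem:norm_error}, yields $\mathcal N\tilde p_0\ge \mathcal N_{\rm low}\frac{\abs{\gamma_0}^2}{\mathcal N_{\rm up}}(1-e_1-e_2)$ and hence the upper bound $R\le R_{\max}\defeq\frac{1+e_1}{(1-e_1)(1-e_1-e_2)}$. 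Expanding the numerator of $R_{\max}-1$ gives $(1+e_1)-(1-e_1)(1-e_1-e_2)=3e_1+e_2-e_1^2-e_1e_2\le 3e_1+e_2$, while the threshold assumptions $\abs{G_0(0)-\tilde G_0}\le1/8$, $\abs{\tilde G_0-\tilde F_0}\le1/8$, and $\sqrt{\frac1\eta}\lVert\epsilon^{(R)}\rVert\le1/8$ bound the denominator below by $(1-\frac14)(1-\frac14-\frac{9}{32})=\frac34\cdot\frac{15}{32}=\frac{45}{128}$. Together these give $R_{\max}-1\le\frac{128}{45}(3e_1+e_2)$, which is exactly the claimed parenthetical once $3e_1=3\abs{G_0(0)-\tilde G_0}+3\abs{\tilde G_0-\tilde F_0}$ and $e_2=\frac94\sqrt{\frac1\eta}\lVert\epsilon^{(R)}\rVert$ are reinstated.

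Finally I would dispatch the case $R<1$: the upper bound $\mathcal N\tilde p_0\le(\abs{\gamma_0}\lVert\sqrt{g_0}\rVert+\sqrt{\mathcal N}\lVert\epsilon^{(\rm polut)}\rVert)^2$, together with $\lVert\sqrt{g_0}\rVert^2=\tilde F_0\le1+e_1$ and the contamination estimate $\frac{\sqrt{\mathcal N}}{\abs{\gamma_0}}\lVert\epsilon^{(\rm polut)}\rVert\le\sqrt{\frac53}\sqrt{\frac1\eta}\lVert\epsilon^{(R)}\rVert$ from \Cref{lem:contamination_bounds,lem:norm_error}, produces a lower bound $R\ge R_{\min}$ with $1-R_{\min}\le e_1+\frac{5}{\sqrt3}\sqrt{\frac1\eta}\lVert\epsilon^{(R)}\rVert+\frac{5}{24}\sqrt{\frac1\eta}\lVert\epsilon^{(R)}\rVert$, which is term-by-term smaller than $\frac{128}{45}(3e_1+e_2)$. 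Hence $\abs{1-R}\le\max(R_{\max}-1,\,1-R_{\min})$ is bounded by the same quantity, and multiplying the two single-factor estimates yields the stated inequality. I expect the delicate point to be the constant bookkeeping in the $R_{\max}-1$ expansion: verifying that the cross terms $-e_1^2-e_1e_2$ may simply be discarded, and that the three $1/8$ thresholds collapse the denominator to exactly $45/128$, which is what conjures the $\frac{128}{45}$ prefactor together with the integer coefficients $3,3,\frac94$.
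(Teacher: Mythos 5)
Your proposal is correct and follows essentially the same route as the paper: the same factorization of $\varepsilon^{\rm norm}_m$, the same Cauchy-estimate bound on $\abs{G_m(0)}$, and the same expansion of $R_{\max}-1$ with the three $1/8$ thresholds collapsing the denominator to $45/128$. You are in fact slightly more careful than the paper, which only bounds $\frac{1}{\mathcal{N}}\frac{\abs{\gamma_0}^2}{\tilde p_0}-1$ from above and tacitly ignores the $R<1$ direction of the absolute value that your final paragraph dispatches explicitly.
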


\begin{proof}
    First, we recall that
    \begin{align}
        \varepsilon^{\rm norm}_m = \left| G_m(0) - \frac{1}{\mathcal{N}}\left(\frac{\abs{\gamma_0}^2}{\tilde{p}_0}\right) G_m(0) \right| = \abs{ 1 - \frac{1}{\mathcal{N}}\left(\frac{\abs{\gamma_0}^2}{\tilde{p}_0}\right) } \left| G_m(0) \right|.
    \end{align}
    Now, assuming that $\abs{\mu} \leq K$, a bound for $\left| G_m(0) \right| $ is
    $$
    \left| G_m(0) \right| = \left(\frac1{2\pi}\right)^m \abs{D^m_k G_0(k)}\Big\vert_{k=0} \leq  \frac{ \exp{2 \pi \delta_3 \abs{\mu}} \exp{2 \pi^2 \delta_3^2 \sigma^2 }  m!}{\pi^m \delta_3^m}.
    $$
    Moreover,
    \begin{align}
    &\frac{1}{\mathcal{N}}\left(\frac{\abs{\gamma_0}^2}{\tilde{p}_0}\right) - 1 \leq \cr 
    &\left(\frac{1}{ 1-\abs{ \tilde{G}_0 - \tilde{F}_0 }-|G_0(0)-\tilde{G}_0|}\right)\cr 
    &\qquad\times \left(\frac{1}{\left(\frac{1}{1 + \abs{G_0 (0) - \tilde{G}_0} + \abs{ \tilde{G}_0 - \tilde{F}_0 }}\right)  \left( \vphantom{\frac{1-|\gamma_0|^2}{|\gamma_0|^2}} 1-\abs{G_0 (0) - \tilde{G}_0} - \abs{ \tilde{G}_0 - \tilde{F}_0 } -  \frac{9}{4}\sqrt{\frac{1}{\eta}} \lVert \epsilon^{(R)} \rVert  \right)}\right) - 1 \cr 
    &\leq \frac{ 3\abs{G_0 (0) - \tilde{G}_0} + 3\abs{ \tilde{G}_0 - \tilde{F}_0 } + \frac{9}{4}\sqrt{\frac{1}{\eta}} \lVert \epsilon^{(R)} \rVert  }{\left( 1-\abs{ \tilde{G}_0 - \tilde{F}_0 }-|G_0(0)-\tilde{G}_0|\right) \left( 1-\abs{G_0 (0) - \tilde{G}_0} - \abs{ \tilde{G}_0 - \tilde{F}_0 } -  \frac{9}{4}\sqrt{\frac{1}{\eta}} \lVert \epsilon^{(R)} \rVert  \right)} \cr 
    &\leq \frac{ 3\abs{G_0 (0) - \tilde{G}_0} + 3\abs{ \tilde{G}_0 - \tilde{F}_0 } + \frac{9}{4}\sqrt{\frac{1}{\eta}} \lVert \epsilon^{(R)} \rVert  }{\left( 3/4 \right) \left( 32/32-4/32 - 4/32 -  \frac{9}{32}  \right)} \cr 
    &\leq \left(\frac{128}{45}\right)\left(  3\abs{G_0 (0) - \tilde{G}_0} + 3\abs{ \tilde{G}_0 - \tilde{F}_0 } + \frac{9}{4}\sqrt{\frac{1}{\eta}} \lVert \epsilon^{(R)} \rVert  \right).
    \end{align}
\end{proof}

\subsection{Requirements on Time evolution lenght, $T=2^q$}

\begin{theorem}\label{thm:2q}
Provided that $\abs{G_0(0)-\Tilde{G}_0} \leq 1/8$, $\abs{ \tilde{G}_0 - \tilde{F}_0 } \leq 1/8$, $\| \epsilon^{(R)} \|/\sqrt{\eta} \leq 1/8$, $1/2^q\leq \Delta/3$, and we remain in
    \begin{align*}
        \tilde\sigma  \leq \frac{1}{2^{1/4}}\sqrt{\frac{1}{2^q}}\sqrt{\frac{\Delta}{12\pi}},
    \end{align*}
for an error $\tilde\varepsilon_m$, we need
\begin{align*}
     2^q& \geq \frac{\sqrt{m}}{2\pi \tilde\sigma}\sqrt{1+3\log{\frac{m}{4 e \pi^2 \tilde\sigma^2 }\left(\frac{(m!)^2 C(\eta)}{\tilde\varepsilon_m}\right)^{2/m}}}\cr  
     &\quad\quad \text{for }\frac{m}{2 e \pi^2 \tilde\sigma^2 }\left(\frac{(m!)^2 C(\eta)}{\tilde\varepsilon_m}\right)^{2/m} > 1.    \cr
\end{align*}
where
\begin{align*}
    C(\eta) &= \left[\left(\frac{128}{45}\right)\exp{12}  \left(12 + 3 + \frac{9}{4}\sqrt{\frac{1}{\eta}}   \right) \right.\cr
    &\left. +10\exp{12} + \frac{55}{8} \exp{2} \left(  1  + \sqrt{\frac{5}{3}}\sqrt{\frac{1}{\eta}} \right)\right].
\end{align*}

\end{theorem}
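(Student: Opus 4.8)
The plan is to start from the triangle-inequality decomposition $\varepsilon_m\leq\varepsilon_m^{\rm norm}+\varepsilon_m^{\rm discret}+\varepsilon_m^{\rm trunc+polut}$ and substitute the bounds already established for the three pieces: \Cref{lem:disc_errors} for the discretization (aliasing) term, the truncation/contamination lemma for $\varepsilon_m^{\rm trunc+polut}$, and the renormalization lemma for $\varepsilon_m^{\rm norm}$. Each bound is a product of a moment-generating ``derivative prefactor'' of the shape $\tfrac{m!}{\pi^m\delta_i^m}\exp{2\pi^2\delta_i^2\sigma^2}$ (carrying a free radius $\delta_i$ inherited from \Cref{lem:fnbound}, plus, in the truncation case, an extra window factor $(2K+1)^m\exp{2\pi\delta_2}$ and residual factors $\exp{2\pi\delta_i\abs\mu}$) times one exponentially small ``physical'' quantity — one of $\abs{G_0(0)-\tilde G_0}$, $\abs{\tilde G_0-\tilde F_0}$, or $\norm{\epsilon^{(R)}}$. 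First I would rescale by $(2^q)^m$ using $\tilde\sigma=\sigma/2^q$ and $\tilde\varepsilon_m=\varepsilon_m/(2^q)^m$ so that everything is compared on the same footing.

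The next step is to discharge all the $O(1)$ overhead into a single constant. The factor $\tfrac1{\mathcal N}\bigl(\tfrac{\abs{\gamma_0}^2}{\tilde p_0}\bigr)$ multiplying the discretization and truncation terms is controlled by \Cref{lem:norm_error} together with the hit-rate bound $\tilde p_0\geq\tfrac38\eta$ of \Cref{cor:hit_rate}, and the hypotheses $\abs{G_0(0)-\tilde G_0}\leq1/8$, $\abs{\tilde G_0-\tilde F_0}\leq1/8$, $\norm{\epsilon^{(R)}}/\sqrt\eta\leq1/8$ let me replace by their numerical bounds the small quantities that appear inside these overhead factors (as opposed to the single one carrying the decay). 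Collecting the resulting $m$-independent numbers — the $\tfrac{128}{45}$ of the renormalization lemma, the coefficients $10$ and $\tfrac{55}{8}$, the $\sqrt{1/\eta}$ pieces from \Cref{lem:contamination_bounds} and \Cref{lem:hit_rate}, and the exponential prefactors $\exp{2\pi^2\delta_i^2\sigma^2}$ and $\exp{2\pi\delta_2}$ evaluated at the chosen radii under the constraints $1/2^q\leq\Delta/3$ and $\tilde\sigma\leq 2^{-1/4}\sqrt{1/2^q}\sqrt{\Delta/12\pi}$ (the origin of the $\exp{12}$ and $\exp{2}$ factors) — reproduces precisely the definition of $C(\eta)$.

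With the constant folded away I would argue that the three decays are not comparable. Using $2K+1\approx\tfrac23 2^q\Delta$ and the $\tilde\sigma$-constraint, the truncation factor $\sqrt{\abs{\tilde G_0-\tilde F_0}}\leq\exp{-(K-1/2)^2/4\sigma^2}$ and the squared contamination $\norm{\epsilon^{(R)}}^2\leq\exp{-(2^q\Delta-K-1/2)^2/2\sigma^2}$ are suppressed by a larger multiple of $\sigma^2$ than the $\exp{-2\pi^2\sigma^2}$ governing the aliasing term of \Cref{lem:disc_errors}; hence the aliasing term is the slowest-decaying and the whole sum is bounded by $C(\eta)$ times a common prefactor times $\exp{-2\pi^2\sigma^2}$. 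It then remains to invert. I would split $\exp{-2\pi^2\sigma^2}=\exp{-\tfrac{2\pi^2}{3}\sigma^2}\exp{-\tfrac{4\pi^2}{3}\sigma^2}$, spend the second factor to dominate the entire polynomial-and-sub-Gaussian prefactor (moment factors $\sim\sigma^m$, window factors, and cross-terms $\exp{2\pi\sqrt m\,\sigma}$) by maximizing over $\sigma$ — this is what produces the $(m!)^2$ and the factor $e$ inside $\log{m/4e\pi^2\tilde\sigma^2}$ — and keep the first factor for the inversion. Solving $\exp{-\tfrac{2\pi^2}{3}\sigma^2}$ against $\tilde\varepsilon_m/[(m!)^2C(\eta)(\cdots)]$ for $\sigma=2^q\tilde\sigma$ yields $2^q\geq\tfrac{\sqrt m}{2\pi\tilde\sigma}\sqrt{1+3\log{\cdots}}$, with the side condition $\tfrac{m}{2e\pi^2\tilde\sigma^2}\bigl(\tfrac{(m!)^2C(\eta)}{\tilde\varepsilon_m}\bigr)^{2/m}>1$ ensuring the logarithm is positive so the square root is well defined.

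The main obstacle, I expect, is this final inversion together with the constant bookkeeping. How one apportions the Gaussian suppression between ``killing the prefactor'' and ``being inverted'' is exactly what fixes the coefficient $3$ (keeping one third of the $2\pi^2\sigma^2$ decay) and the additive $1$ in $\sqrt{1+3\log{\cdots}}$; and it is the several maximizations over $\sigma$ of the prefactors from the three distinct error sources — each with its own factorial, window, and exponential contributions — that must be executed carefully to recover the stated $C(\eta)$ exactly rather than merely up to unspecified constants.
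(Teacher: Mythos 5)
Your decomposition, the constant bookkeeping that produces $C(\eta)$, and---most importantly---the observation that the $\tilde\sigma$ constraint exists precisely to force $\abs{\tilde G_0-\tilde F_0}$, $\sqrt{\abs{\tilde G_0-\tilde F_0}}$ and $\norm{\epsilon^{(R)}}$ to decay at least as fast as the aliasing factor $\exp{-2\pi^2\sigma^2}$, so that a single exponential survives, all match the paper's proof (which sets $\delta_1=\delta_3=1/(\pi 2^q)$, $\delta_2=1/\pi$, uses $\abs{\mu}\leq 2^q$ and $\sigma\leq\sqrt{2^q/\pi}$ to get the $\exp{12}$ and $\exp{2}$, and bounds $\tfrac{1}{\mathcal N}\tfrac{\abs{\gamma_0}^2}{\tilde p_0}\leq 5/2$). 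Two small misattributions: the $(m!)^2$ does not come from a maximization over $\sigma$ but from multiplying together the two Cauchy-estimate prefactors (each $\geq 1$) and from the crude step $m!(2^q)^m\leq (m!(2^q)^m)^2$ for the truncation term; and the competing decays are of the form $\exp{-cK^2/\sigma^2}$ versus $\exp{-2\pi^2\sigma^2}$, so the comparison is an inverse, not a ``larger multiple,'' dependence on $\sigma^2$---which is exactly why the constraint takes the form $\sigma^4\lesssim K^2$, i.e.\ $\tilde\sigma\leq 2^{-1/4}\sqrt{(1/2^q)(\Delta/12\pi)}$.

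The genuine divergence is the final inversion, and as described it does not land on the stated bound. The paper does not split the Gaussian factor: it raises $\tilde\varepsilon_m\leq (2^q)^m(m!)^2 C(\eta)\exp{-2\pi^2\tilde\sigma^2(2^q)^2}$ to the power $2/m$ to obtain an inequality of the form $y\leq xe^{-x}$ with $x=4\pi^2\tilde\sigma^2(2^q)^2/m$, inverts it exactly with the Lambert branch $W_{-1}$, and then applies $-W_{-1}(-e^{-u-1})<1+3u$; this single step is the source of the additive $1$, the coefficient $3$, and the $e$ in $\log{\tfrac{m}{4e\pi^2\tilde\sigma^2}(\cdots)^{2/m}}$. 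Your $1/3$--$2/3$ split is a workable alternative, but carrying it out (maximize $(2^q)^m\exp{-\tfrac{4\pi^2}{3}\tilde\sigma^2(2^q)^2}$ over $2^q$, then invert the remaining third) yields $2^q\geq\tfrac{\sqrt m}{2\pi\tilde\sigma}\sqrt{3\log{\tfrac{3m}{8e\pi^2\tilde\sigma^2}(\cdots)^{2/m}}}$, whose square exceeds the theorem's $1+3\log{\tfrac{m}{4e\pi^2\tilde\sigma^2}(\cdots)^{2/m}}$ by $3\log{3/2}-1\approx 0.22>0$. So you would prove a marginally weaker sufficient condition, and there is no natural source for the additive $1$ in your scheme. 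To recover the exact statement you need the $W_{-1}$ bound (or an equally sharp elementary substitute for inverting $xe^{-x}$).
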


\begin{proof}
    We group all the error terms
    \begin{align}
\varepsilon^{\rm norm}_m &
    \leq \left(\frac{128}{45}\right)\left(\frac{ \exp{2 \pi \delta_3 \abs{\mu}} \exp{2 \pi^2 \delta_3^2 \sigma^2 }  m!}{\pi^m \delta_3^m}\right) \cr 
    &\times\left(  12\frac{\exp{2 \pi \delta_1 \abs{\mu}} \exp{2\pi^2(\delta_1^2 + 2 \delta_1)\sigma^2}m!}{\pi^m\delta_1^m}\exp{-2 \pi^2 \sigma^2 } \right.\cr  
    &\left.+ 3\abs{ \tilde{G}_0 - \tilde{F}_0 } + \frac{9}{4}\sqrt{\frac{1}{\eta}} \lVert \epsilon^{(R)} \rVert  \right) \cr
\varepsilon^{\rm discret}_m &
    \leq \frac{1}{\mathcal{N}}\left(\frac{\abs{\gamma_0}^2}{\tilde{p}_0}\right)4\frac{\exp{2 \pi \delta_1 \abs{\mu}} \exp{2\pi^2(\delta_1^2 + 2 \delta_1)\sigma^2}m!}{\pi^m\delta_1^m}\exp{-2 \pi^2 \sigma^2 } \cr 
% \varepsilon^{\rm trunc}_{m} &
%     = \frac{1}{\mathcal{N}}\left(\frac{\abs{\gamma_0}^2}{\tilde{p}_0}\right)\left|\tilde{G}_m - \tilde{F}_m  \right| \cr  
\varepsilon^{\rm trunc+cont}_{m} &
    \leq \frac{1}{\mathcal{N}}\left(\frac{\abs{\gamma_0}^2}{\tilde{p}_0}\right)\frac{11}{4} \frac{m!(2K+1)^m}{\pi^m\delta_2^m} \exp{2\pi \delta_2 } \left( \vphantom{\sqrt{\frac{1}{1}}} \sqrt{\abs{ \tilde{G}_0 - \tilde{F}_0 }}  + \sqrt{\frac{5}{3}}\sqrt{\frac{1}{\eta}}\|\epsilon^{(R)}\| \right). \cr 
\end{align}

We now note the prefactor $\frac{1}{\mathcal{N}}\frac{\abs{\gamma_0}^2}{\Tilde{p}_0}$ can be bounded through the constraints we have imposed: $\abs{G_0(0)-\Tilde{G}_0} \leq 1/8$, $\abs{ \tilde{G}_0 - \tilde{F}_0 } \leq 1/8$, and $\| \epsilon^{(R)} \|/\sqrt{\eta} \leq 1/8$
    $$
    \frac{1}{\mathcal{N}}\frac{\abs{\gamma_0}^2}{\Tilde{p}_0} \leq \left(\frac{4}{3}\right)\left(\frac{4}{5}\right)\left(\frac{32}{15}\right) = \frac{512}{225} \leq 5/2
    $$
With this, the bounds become:
    \begin{align}
\varepsilon^{\rm norm}_m &
    \leq \left(\frac{128}{45}\right)\left(\frac{ \exp{2 \pi \delta_3 \abs{\mu}} \exp{2 \pi^2 \delta_3^2 \sigma^2 }  m!}{\pi^m \delta_3^m}\right) \cr 
    & \times \left(12\frac{\exp{2 \pi \delta_1 \abs{\mu}} \exp{2\pi^2(\delta_1^2 + 2 \delta_1)\sigma^2}m!}{\pi^m\delta_1^m}\exp{-2 \pi^2 \sigma^2 } \right.\cr  
    &\left.\quad + 3\abs{ \tilde{G}_0 - \tilde{F}_0 } + \frac{9}{4}\sqrt{\frac{1}{\eta}} \lVert \epsilon^{(R)} \rVert  \right) \cr
\varepsilon^{\rm discret}_m &
    \leq 10\frac{\exp{2 \pi \delta_1 \abs{\mu}} \exp{2\pi^2(\delta_1^2 + 2 \delta_1)\sigma^2}m!}{\pi^m\delta_1^m}\exp{-2 \pi^2 \sigma^2 } \cr 
% \varepsilon^{\rm trunc}_{m} &
%     = \frac{1}{\mathcal{N}}\left(\frac{\abs{\gamma_0}^2}{\tilde{p}_0}\right)\left|\tilde{G}_m - \tilde{F}_m  \right| \cr  
\varepsilon^{\rm trunc+cont}_{m} &
    \leq \frac{55}{8} \frac{m!(2K+1)^m}{\pi^m\delta_2^m} \exp{2\pi \delta_2 } \left( \vphantom{\sqrt{\frac{1}{1}}} \sqrt{\abs{ \tilde{G}_0 - \tilde{F}_0 }}  + \sqrt{\frac{5}{3}}\sqrt{\frac{1}{\eta}}\|\epsilon^{(R)}\| \right)
\end{align}
Given that $\delta_1\leq 1$ and $\delta_3\leq 1$, we observe that
\begin{align}
    \frac{ \exp{2 \pi \delta_3 \abs{\mu}} \exp{2 \pi^2 \delta_3^2 \sigma^2 }  m!}{\pi^m \delta_3^m} \geq 1 \cr
    \frac{ \exp{2 \pi \delta_1 \abs{\mu}} \exp{2\pi^2(\delta_1^2 + 2 \delta_1)\sigma^2}  m!}{\pi^m \delta_1^m} \geq 1.
\end{align}
Thus,
    \begin{align}
\varepsilon^{\rm norm}_m &
    \leq \left(\frac{128}{45}\right)\left(\frac{ \exp{2 \pi \delta_3 \abs{\mu}} \exp{2 \pi^2 \delta_3^2 \sigma^2 }  m!}{\pi^m \delta_3^m}\right) \cr 
    &\times\left(\frac{ \exp{2 \pi \delta_1 \abs{\mu}} \exp{2\pi^2(\delta_1^2 + 2 \delta_1)\sigma^2}  m!}{\pi^m \delta_1^m}\right) \cr 
    & \times \left(12\exp{-2 \pi^2 \sigma^2 } + 3\abs{ \tilde{G}_0 - \tilde{F}_0 } + \frac{9}{4}\sqrt{\frac{1}{\eta}} \lVert \epsilon^{(R)} \rVert  \right) \cr
\varepsilon^{\rm discret}_m &
    \leq 10\left(\frac{\exp{2 \pi \delta_1 \abs{\mu}} \exp{2\pi^2(\delta_1^2 + 2 \delta_1)\sigma^2}m!}{\pi^m\delta_1^m}\right)\cr 
    &\times\left(\frac{ \exp{2 \pi \delta_3 \abs{\mu}} \exp{2 \pi^2 \delta_3^2 \sigma^2 }  m!}{\pi^m \delta_3^m}\right)\exp{-2 \pi^2 \sigma^2 } \cr 
\end{align}
Now, we set $\delta_1=\delta_3=\frac{1}{\pi2^q}$
    \begin{align}
\varepsilon^{\rm norm}_m &
    \leq \left(\frac{128}{45}\right)(2^q)^{2m}\exp{4 \frac{ \abs{\mu}}{2^q} } \exp{4\frac{\sigma^2}{(2^q)^2}}\exp{4\pi\frac{\sigma^2}{2^q}}\left(m!\right)^2 \cr 
    & \times \left(12\exp{-2 \pi^2 \sigma^2 } + 3\abs{ \tilde{G}_0 - \tilde{F}_0 } + \frac{9}{4}\sqrt{\frac{1}{\eta}} \lVert \epsilon^{(R)} \rVert  \right) \cr
\varepsilon^{\rm discret}_m &
    \leq 10(2^q)^{2m}\exp{4 \frac{ \abs{\mu}}{2^q} } \exp{4\frac{\sigma^2}{(2^q)^2}}\exp{4\pi\frac{\sigma^2}{2^q}}\left(m!\right)^2\cr 
    &\times\exp{-2 \pi^2 \sigma^2 }. \cr 
\end{align}
Moreover, we have $\abs{\mu}\leq 2^q$, and assume $\sigma\leq \sqrt{2^q/\pi}$ which gets us
\begin{align}
\varepsilon^{\rm norm}_m &
    \leq \left(\frac{128}{45}\right)(2^q)^{2m}\exp{12}\left(m!\right)^2 \cr 
    & \times \left(12\exp{-2 \pi^2 \sigma^2 } + 3\abs{ \tilde{G}_0 - \tilde{F}_0 } + \frac{9}{4}\sqrt{\frac{1}{\eta}} \lVert \epsilon^{(R)} \rVert  \right) \cr
\varepsilon^{\rm discret}_m &
    \leq 10(2^q)^{2m}\exp{12} \left(m!\right)^2\cr 
    &\times\exp{-2 \pi^2 \sigma^2 }  
\end{align}
Next, we set $\delta_2 = 1/\pi$ and use $m!(2^q)^m\leq (m!(2^q)^m)^2 $
\begin{align}
\varepsilon^{\rm trunc+cont}_{m} &
    \leq \frac{55}{8} (m!)^2(2^q)^{2m}\exp{2} \left( \vphantom{\sqrt{\frac{1}{1}}} \sqrt{\abs{ \tilde{G}_0 - \tilde{F}_0 }}  + \sqrt{\frac{5}{3}}\sqrt{\frac{1}{\eta}}\|\epsilon^{(R)}\| \right).    
\end{align}
We can now put the errors together and factorize some common terms
\begin{align}
\tilde\varepsilon_m & \leq (2^q)^{2m} (m!)^2 \cr 
    &\times\left[\left(\frac{128}{45}\right)\exp{12} \right.\cr 
    &\left. \times \left(12\exp{-2 \pi^2 \sigma^2 } + 3\abs{ \tilde{G}_0 - \tilde{F}_0 } + \frac{9}{4}\sqrt{\frac{1}{\eta}} \lVert \epsilon^{(R)} \rVert  \right) \right.\cr
    &\left. +10\exp{12}\exp{-2 \pi^2 \sigma^2 } \right.\cr
    &\left. + \frac{55}{8} \exp{2} \left( \vphantom{\sqrt{\frac{1}{1}}} \sqrt{\abs{ \tilde{G}_0 - \tilde{F}_0 }}  + \sqrt{\frac{5}{3}}\sqrt{\frac{1}{\eta}}\|\epsilon^{(R)}\| \right)\right].    
\end{align}

    We note that the exponent in $\exp{-2\pi^2\sigma^2}$ and the implied exponents from $\abs{ \tilde{G}_0 - \tilde{F}_0 }$,$\|\epsilon^{(R)}\|^2$ and $\sqrt{\abs{ \tilde{G}_0 - \tilde{F}_0 }}$ from our previous bounds are all different. This complicates the solution of the inequality.  For example the exponentials coming from the bounds on discretization errors and truncation errors ($\sqrt{\abs{ \tilde{G}_0 - \tilde{F}_0 }}$, \Cref{lem:eps1_error}) are:
    \begin{align}
        \exp{-2 \pi^2 \sigma^2 }&=\exp{-2 \pi^2 \sigma^2 },\cr 
        \abs{ \tilde{G}_0 - \tilde{F}_0 }&\leq\exp{-\frac{(K-1/2)^2}{2\sigma^2}} \cr 
        \sqrt{\abs{ \tilde{G}_0 - \tilde{F}_0 }}&\leq\exp{-\frac{(K-1/2)^2}{4\sigma^2}} \cr
        \|\epsilon^{(R)}\|&\leq\exp{-\frac{(2^q\Delta -K-1/2)^2}{4 \sigma^2}}
    \end{align}
    respectively. First, we can bound both $\abs{ \tilde{G}_0 - \tilde{F}_0 }$ and $ \sqrt{\abs{ \tilde{G}_0 - \tilde{F}_0 }}$ with a common exponential:
    \begin{align}
        \exp{-2 \pi^2 \sigma^2 }&=\exp{-2 \pi^2 \sigma^2 },\cr 
        \abs{ \tilde{G}_0 - \tilde{F}_0 }\leq\sqrt{\abs{ \tilde{G}_0 - \tilde{F}_0 }}&\leq\exp{-\frac{(K-1/2)^2}{4\sigma^2}} \cr 
        \|\epsilon^{(R)}\|&\leq\exp{-\frac{(2^q\Delta -K-1/2)^2}{4 \sigma^2}}
    \end{align}    
    Now, we will update the bound on $\| \epsilon^{(R)} \|$. First, we impose the restriction
    \begin{align}
        2K + 1 = \lfloor (2/3) \Delta 2^q \rfloor,
\end{align}
which means
\begin{align}
            2K +1 \leq \Delta 2^q, \cr
        \Delta 2^q \geq 2K.
\end{align}
    Thus, a new, looser bound on $\|\epsilon^{(R)}\|$ (\Cref{lem:contamination_bounds}) is
    \begin{align}
        \| \epsilon^{(R)} \|    &\leq \exp{-\frac{(2^q\Delta -K-1/2)^2}{4 \sigma^2}} \cr
                            &\leq \exp{-\frac{(2K-K-1/2)^2}{4 \sigma^2}} \cr 
                            &\leq \exp{-\frac{(K-1/2)^2}{4 \sigma^2}} \cr 
                            % &\leq \exp{-\frac{\rho^{-2}\pi(K-1/2)} {2}} \cr  
                            % &\leq \exp{-\frac{\rho^{2}\pi(K-1/2)} {2}}
    \end{align}
    Finally, we will impose the bound:
    \begin{align}
        \exp{-2\pi^2\sigma^2} \geq \exp{-\frac{(K-1/2)^2}{4\sigma^2}},
    \end{align}
    which leads to the tighter constraint on $\sigma$,
    \begin{align}
        \sigma  \leq \frac{1}{2^{1/4}}\sqrt{\frac{K-1/2}{2\pi}}.
    \end{align}
With this, we have the relative standard deviation
    \begin{align}
        \tilde\sigma  \leq \frac{1}{2^{1/4}}\sqrt{\frac{\Delta/3-1/2^{q+1}}{2\pi (2^q)}}.
    \end{align}
If we assume that $1/2^q\leq \Delta / 3$, we can simplify the bound of the domain further with
    \begin{align}
        \tilde\sigma  \leq \frac{1}{2^{1/4}} \sqrt{\frac{1}{2^q}}\sqrt{\frac{\Delta}{12\pi}},
    \end{align}
at the cost of reducing the domain.
This way we can upper-bound both exponentials with a single one:
    \begin{align}
        \exp{-2 \pi^2 \sigma^2 }&=\exp{-2 \pi^2 \tilde\sigma^2 (2^q)^2 },\cr 
        \abs{ \tilde{G}_0 - \tilde{F}_0 }\leq\sqrt{\abs{ \tilde{G}_0 - \tilde{F}_0 }}&\leq\exp{-2 \pi^2 \tilde\sigma^2 (2^q)^2 } \cr 
        \|\epsilon^{(R)}\|&\leq\exp{-2 \pi^2 \tilde\sigma^2 (2^q)^2 }
    \end{align}

Morever, switching to the relative error $\tilde\varepsilon_m$ 
\begin{align}
\tilde\varepsilon_m & \leq (2^q)^{m} (m!)^2 \exp{-2 \pi^2 \tilde\sigma^2 (2^q)^2 } C(\eta),
\end{align}
where
\begin{align}
    C(\eta) &= \left[\left(\frac{128}{45}\right)\exp{12}  \left(12 + 3 + \frac{9}{4}\sqrt{\frac{1}{\eta}}   \right) \right.\cr
    &\left. +10\exp{12} + \frac{55}{8} \exp{2} \left(  1  + \sqrt{\frac{5}{3}}\sqrt{\frac{1}{\eta}} \right)\right].
\end{align}

Solving for $2^q$
\begin{align}
\left(\frac{\tilde\varepsilon_m}{(m!)^2 C(\eta)}\right)^{2/m} & \leq (2^q)^{2}  \exp{-\frac{4 \pi^2 \tilde\sigma^2 (2^q)^2}{m} }  \cr
-\frac{4 \pi^2 \tilde\sigma^2 }{m}\left(\frac{\tilde\varepsilon_m}{(m!)^2 C(\eta)}\right)^{2/m} & \geq -\frac{4 \pi^2 \tilde\sigma^2 (2^q)^2}{m}  \exp{-\frac{4 \pi^2 \tilde\sigma^2 (2^q)^2}{m} }  \cr
-\frac{4 \pi^2 \tilde\sigma^2 }{m}\left(\frac{\tilde\varepsilon_m}{(m!)^2 C(\eta)}\right)^{2/m} & \geq -\frac{4 \pi^2 \tilde\sigma^2 (2^q)^2}{m}  \exp{-\frac{4 \pi^2 \tilde\sigma^2 (2^q)^2}{m} }  \cr
W_{-1}\left(-\frac{4 \pi^2 \tilde\sigma^2 }{m}\left(\frac{\tilde\varepsilon_m}{(m!)^2 C(\eta)}\right)^{2/m}\right) & \leq -\frac{2 \pi^2 \tilde\sigma^2 (2^q)^2}{m}    \cr
\frac{4 \pi^2 \tilde\sigma^2 (2^q)^2}{m}& \leq -W_{-1}\left(-\frac{4 \pi^2 \tilde\sigma^2 }{m}\left(\frac{\tilde\varepsilon_m}{(m!)^2 C(\eta)}\right)^{2/m}\right)     \cr
 (2^q)^2& \leq \frac{m}{4\pi^2\tilde\sigma^2}\left(-W_{-1}\left(-\frac{4 \pi^2 \tilde\sigma^2 }{m}\left(\frac{\tilde\varepsilon_m}{(m!)^2 C(\eta)}\right)^{2/m}\right) \right)     \cr
\end{align}
We note that
$$
1 + \sqrt{2u} + \tfrac{2}{3}u < -W_{-1}\left(-e^{-u-1}\right) <  1 + \sqrt{2u} + u < 1 + 3u\quad \text{for } u > 0.
$$
In this scenario,
$$
-u-1 = \log{\frac{4 \pi^2 \tilde\sigma^2 }{m}\left(\frac{\tilde\varepsilon_m}{(m!)^2 C(\eta)}\right)^{2/m}}
$$
$$
u+1 = \log{\frac{m}{4\pi^2 \tilde\sigma^2 }\left(\frac{(m!)^2 C(\eta)}{\tilde\varepsilon_m}\right)^{2/m}}
$$
$$
u = \log{\frac{m}{4 e \pi^2 \tilde\sigma^2 }\left(\frac{(m!)^2 C(\eta)}{\tilde\varepsilon_m}\right)^{2/m}}.
$$
Thus,
\begin{align}
     2^q& \leq \frac{\sqrt{m}}{2\pi \tilde\sigma}\sqrt{1+3\log{\frac{m}{4 e \pi^2 \tilde\sigma^2 }\left(\frac{(m!)^2 C(\eta)}{\tilde\varepsilon_m}\right)^{2/m}}}\cr  
     &\quad\quad \text{for }\frac{m}{2 e \pi^2 \tilde\sigma^2 }\left(\frac{(m!)^2 C(\eta)}{\tilde\varepsilon_m}\right)^{2/m} > 1.    \cr
\end{align}

\end{proof}

Fix the integration interval to $2K+1 = \kappa \sigma = \lfloor (2/3) \Delta 2^q \rfloor$, such that we can add-in all the outcomes that lie $\lfloor (2/3) \Delta  2^q \rfloor-1$ to the right without fear of adding contributions from higher states. Moreover, there should be a $\lfloor \Delta 2^q/3 \rfloor$ segment that is dark (no counts) which should serve as a test of validity of the lowest energy outcome.

In order to have negligible excited state contamination $\|\epsilon^{(R)}\|$ and probability leakage $\sqrt{\abs{ \tilde{G}_0 - \tilde{F}_0 }}$, $1/\sigma_0$ should at least be $ O(1/\Delta)$.

\subsection{Estimates on the failure rate of \Cref{alg:sampl_n_trim}}

\begin{lemma}[Failure rate]\label{lem:fail_rate}

Provided that $\abs{G_0(0)-\Tilde{G}_0} \leq 1/8$, $\abs{ \tilde{G}_0 - \tilde{F}_0 } \leq 1/8$, $1/2^q \leq \Delta/6$, the number of samples is $M_0=\left\lceil\frac{16}{3 \eta}\log{\frac{3}{ \delta }}\right\rceil$, and that the relative width is 
$$
\tilde\sigma \leq \frac{ \Delta/6 }{\sqrt{2 \log{\frac{4 M_0}{\delta} \left( 1 + \sqrt{\frac{5}{3}}\sqrt{\frac{1-\eta}{\eta}}\right)^2} } },
$$
we will have that the probability of failing, that is, that we get an outcome from the left, from the gap window, or that we get zero outcomes from the left half of the window of interest $x^{\rm (left)}\in \{-K+x_0,-K+x_0+1,\dots,x_0\}$ is, by union bound,
$$
p_{\rm fail} \leq \delta
$$
\end{lemma}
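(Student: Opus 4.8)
The plan is to treat $p_{\rm fail}$ as a union over the three named bad events and to bound each separately, allotting $\delta/3$ to the empty-left-half event and $\delta/4$ to each of the two contamination events, so that the total is $\frac{5}{6}\delta\le\delta$. Every circuit repetition produces an i.i.d. outcome $z$ drawn from the measured mixture $\sum_j\abs{\gamma_j}^2\tilde p_{G,j}(z)$ of discretized, period-$2^q$ Gaussians centered at $2^q\theta_j$, with the ground-state weight obeying $\abs{\gamma_0}^2\ge\eta$ and every excited center at relative distance $\ge\Delta$ from the ground-state center (and, by the periodicity with $\Delta'=\Delta$ used in \Cref{lem:contamination_bounds}, also from below). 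Writing $x_0=\round(2^q\theta_0)$ and $2K+1=\lfloor(2/3)\Delta 2^q\rfloor$, the three regions are the left region $z<x_0-K$, the dark gap just to the right of $x_0+K$, and the left half $\{x_0-K,\dots,x_0\}$. The assumption $1/2^q\le\Delta/6$ forces $K/2^q>\Delta/3-1/2^q>\Delta/6$, so every region boundary sits at relative distance at least $\Delta/6$ from the ground-state center, uniformly over the sub-bin shift $\tilde\mu\in[-1/2,1/2)$; this is the single quantity that will enter all tail estimates.

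For the empty-left-half event I would lower-bound the single-shot probability $p_{\rm left}$ of landing in $\{x_0-K,\dots,x_0\}$. The hit-rate estimate $\tilde p_0\ge\frac{3}{8}\eta$ of \Cref{cor:hit_rate} controls the mass of the full window; combining it with the near-symmetry of the ground-state Gaussian about $x_0$ (the shift is at most half a bin and the escaping mass is exponentially small under $\abs{G_0(0)-\tilde G_0}\le1/8$ and $\abs{\tilde G_0-\tilde F_0}\le1/8$) yields $p_{\rm left}\ge\frac{3}{16}\eta$. Then the probability of no left-half count is $(1-p_{\rm left})^{M_0}\le\exp{-p_{\rm left}M_0}\le\exp{-\tfrac{3\eta}{16}M_0}$, and the choice $M_0=\lceil\frac{16}{3\eta}\log{\frac{3}{\delta}}\rceil$ makes the exponent at most $-\log{3/\delta}$, i.e. this probability is $\le\delta/3$.

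For the left and gap events I would bound the single-shot mass in each region by one clean Gaussian tail. In either region the ground-state contribution is a tail of $g_0$ beyond the window edge, bounded as in \Cref{lem:eps1_error} by $\exp{-(\Delta/6)^2/(2\tilde\sigma^2)}$ once the boundary distance $\ge\Delta/6$ is inserted; the excited-state and wrapped contributions are left/right tails of Gaussians whose centers lie at relative distance $\ge\Delta/3>\Delta/6$ from the relevant boundary, and are estimated by exactly the mechanism that controls $\norm{\epsilon^{(R)}}$ in \Cref{lem:contamination_bounds}. Funneling the excited-state amplitude $\sqrt{1-\eta}$ through the renormalization by $\tilde p_0$ as in \Cref{lem:hit_rate,lem:norm_error} produces the factor $1+\sqrt{5/3}\sqrt{(1-\eta)/\eta}$ in front of the tail amplitude, with $\sqrt{5/3}=\sqrt{(1+1/4)/(1-1/4)}$ coming from the $1/8$ bounds, so each single-shot probability is at most $(1+\sqrt{5/3}\sqrt{(1-\eta)/\eta})^2\exp{-(\Delta/6)^2/(2\tilde\sigma^2)}$. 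The stated hypothesis on $\tilde\sigma$ is precisely the rearrangement of $(1+\sqrt{5/3}\sqrt{(1-\eta)/\eta})^2\exp{-(\Delta/6)^2/(2\tilde\sigma^2)}\le\delta/(4M_0)$, so a union bound over the $M_0$ shots gives $\le\delta/4$ for the left event and $\le\delta/4$ for the gap event.

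Summing the three contributions gives $p_{\rm fail}\le\delta/3+\delta/4+\delta/4<\delta$. The main obstacle is the contamination accounting of the third paragraph: one must verify that after passing to relative coordinates the worst-case boundary distance really is $\ge\Delta/6$ (tracking both the floor in $2K+1$ and the half-bin shift $\tilde\mu$), must route the excited-state and periodically wrapped amplitudes through the $\ell_2$ and renormalization bounds of \Cref{lem:contamination_bounds,lem:hit_rate,lem:norm_error} so that only the single factor $(1+\sqrt{5/3}\sqrt{(1-\eta)/\eta})^2$ survives in front of one shared exponential, and must keep the constants aligned so that the hypothesis on $\tilde\sigma$ matches $\delta/(4M_0)$ exactly while the three pieces still sum below $\delta$.
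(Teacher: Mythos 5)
Your proposal follows essentially the same route as the paper: the same three-event union bound, the same hit-rate argument ($\tilde p_0 \geq 3\eta/8$ from \Cref{cor:hit_rate}, halved for the left half, giving $p_{\rm zero}\le\delta/3$ with the stated $M_0$), and the same Gaussian-tail-plus-contamination estimate whose rearrangement is exactly the hypothesis on $\tilde\sigma$. The only difference is cosmetic: your $\delta/3+\delta/4+\delta/4$ budget versus the paper's $\delta/3+\delta/3+\delta/3$ --- the paper retains the normalization prefactor $1/\mathcal{N}_{\rm low}\le 4/3$ that you drop from the single-shot tail bounds, and reinstating it exactly consumes your $\delta/6$ of slack, so the conclusion $p_{\rm fail}\le\delta$ holds either way.
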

\begin{proof}
We first upper-bound the probability of getting a count within the gap of length $K$ above from our $2K+1$-window of interest 
\begin{align}
p_{\rm gap\,\,window} &= \frac{\abs{\gamma_0}^2}{\mathcal{N}}\sum^{2K}_{K+1} \abs{f^{\rm polut}}^2 \cr
& \leq \frac{\abs{\gamma_0}^2}{\mathcal{N}}\sum^{2K}_{K+1} \abs{f}^2 + \frac{\mathcal{N}}{\abs{\gamma_0}^2}\abs{\epsilon^{\rm (polut)}}^2 + 2 \frac{\sqrt{\mathcal{N}}}{\abs{\gamma_0}}\abs{f}\abs{\epsilon^{\rm (polut)}} \cr
& \leq \frac{\abs{\gamma_0}^2}{\mathcal{N}}\left( \sum^{\infty}_{K+1} \abs{f}^2 + \frac{\mathcal{N}}{\abs{\gamma_0}^2}\sum^{2K}_{-\infty}\abs{\epsilon^{\rm (polut)}}^2 + 2 \frac{\sqrt{\mathcal{N}}}{\abs{\gamma_0}}\sum^{2K}_{K+1} \abs{f}\abs{\epsilon^{\rm (polut)}}\right) \cr
& \leq \frac{\abs{\gamma_0}^2}{\mathcal{N}}\left( \sum^{\infty}_{K+1} \abs{f}^2 + \frac{\mathcal{N}}{\abs{\gamma_0}^2}\sum^{2K}_{-\infty}\abs{\epsilon^{\rm (polut)}}^2 + 2 \frac{\sqrt{\mathcal{N}}}{\abs{\gamma_0}}\left(\sum^{2K}_{K+1} \abs{f}^2\right)^{1/2}\left(\sum^{2K}_{K+1} \abs{\epsilon^{\rm (polut)}}^2\right)^{1/2}\right) \cr
& \leq \frac{\abs{\gamma_0}^2}{\mathcal{N}}\left( \sum^{\infty}_{K+1} g_0(n,\tilde{\mu} ) + \frac{\mathcal{N}}{\abs{\gamma_0}^2}\frac{1-\eta}{\inf_{\tilde\mu}(\mathcal{N})}\sum^{2K}_{-\infty}g_0(n,\tilde{\mu}+2^q\Delta ) \right.\cr  
&\left.+ 2 \frac{\sqrt{\mathcal{N}}}{\abs{\gamma_0}}\sqrt{\frac{1-\eta}{\inf_{\tilde\mu}(\mathcal{N})}}\left(\sum^{\infty}_{K+1} g_0(n,\tilde{\mu} )\right)^{1/2}\left(\sum^{2K}_{-\infty} g_0(n,\tilde{\mu}+2^q\Delta )\right)^{1/2}\right) \cr
\end{align}

\begin{align}\label{eq:right_sum}
    \sum^{\infty}_{K+1} g_0(x,\tilde{\mu} )  &\leq
    \frac{1}{\sigma\sqrt{2\pi}}\int^{\infty}_{K} \exp{-\frac{(x-\Tilde{\mu})^2}{2\sigma^2}} {\rm d} x \cr  
    &\leq \erfc\left(\frac{K-1/2}{\sqrt{2}\sigma}\right) \leq \exp{-\frac{(K-1/2)^2}{2\sigma^2}}
\end{align}

\begin{align}\label{eq:left_sum}
    \sum^{2K}_{-\infty} g_0(x,\tilde{\mu} +2^q\Delta)  &\leq
    \frac{1}{\sigma\sqrt{2\pi}}\int^{2K+1}_{-\infty} \exp{-\frac{(x-\Tilde{\mu}-2^q\Delta)^2}{2\sigma^2}}{\rm d} x \cr  
    &\leq \erfc\left(\frac{2^q\Delta -2K-1/2}{\sqrt{2} \sigma}\right) \leq \exp{-\frac{(2^q\Delta -2K-1/2)^2}{2 \sigma^2}}.
\end{align}
Remembering that,
\begin{align}\label{eq:K_Delta}
2K + 1 &= \lfloor (2/3) 2^q \Delta \rfloor,
\end{align}
which in turn means
\begin{align}
2K + 1 &\leq  (2/3) 2^q \Delta \cr
K &\leq  (1/3) 2^q\Delta - 1/2 
\end{align}
we then use it on \Cref{eq:left_sum} to get
\begin{align}
    \sum^{2K}_{-\infty} g_0(x,\tilde{\mu} +2^q\Delta)  & \leq \exp{-\frac{\left(2^q\Delta/3 + 1/2\right)^2}{2 \sigma^2}}.
\end{align}
\Cref{eq:K_Delta} also means that
\begin{align} 
2K + 1 &\geq  (2/3) 2^q \Delta - 1 \cr
K &\geq  (1/3) 2^q\Delta-1,
\end{align}
thus, if we use this in \Cref{eq:right_sum} we obtain
\begin{align}
    \sum^{\infty}_{K+1} g_0(x,\tilde{\mu} )  &\leq \exp{-\frac{\left(2^q \Delta/3 -3/2\right)^2}{2\sigma^2}}.
\end{align}
Finally, for $p_{\rm gap\,\,window}$ we obtain the bound
\begin{align}
p_{\rm gap\,\,window} &  \leq \frac{\abs{\gamma_0}^2}{\mathcal{N}}\exp{-\frac{\left(2^q \Delta/3 -3/2\right)^2}{2\sigma^2}}\cr 
&\left( 1 + \frac{\mathcal{N}}{\abs{\gamma_0}^2}\frac{1-\eta}{\inf_{\tilde\mu}(\mathcal{N})}   
+ 2 \frac{\sqrt{\mathcal{N}}}{\abs{\gamma_0}}\sqrt{\frac{1-\eta}{\inf_{\tilde\mu}(\mathcal{N})}}\right)  \cr
\end{align}
If we use \Cref{lem:norm_error} while assuming $\abs{G_0(0)-\Tilde{G}_0} \leq 1/8$ and $\abs{ \tilde{G}_0 - \tilde{F}_0 } \leq 1/8$, and also use $\abs{\gamma_0}^2\leq 1$, and $0<\eta \leq \abs{\gamma_0}^2$, we obtain:
\begin{align}
p_{\rm gap\,\,window} &  \leq \frac{4}{3}\exp{-\frac{\left(2^q \Delta/3 -3/2\right)^2}{2\sigma^2}}\cr 
&\left( 1 + \sqrt{\frac{5}{3}}\sqrt{\frac{1-\eta}{\eta}}\right)^2  \cr
\end{align}
Similarly, the probability of obtaining an outcome from the left to our window of interest is 
\begin{align}
p_{\rm left} &\leq \frac{1}{\min_{\tilde\mu}\mathcal{N}}\sum^{-K-1}_{-\infty} \abs{f}^2 \leq \frac{1}{\min_{\tilde\mu}\mathcal{N}}\erfc\left(\frac{K-1/2}{\sqrt{2}\sigma}\right) \cr 
&\leq\frac{1}{\min_{\tilde\mu}\mathcal{N}}\exp{-\frac{(K-1/2)^2}{2\sigma^2}} \leq \frac{4}{3}\exp{-\frac{\left(2^q \Delta/3 -3/2\right)^2}{2\sigma^2}}.
\end{align}
To simplify calculations we will loosen the bound on $p_{\rm left}$ to match that on $p_{\rm gap}$.
Now, we take a look at the hit-rate from \Cref{lem:hit_rate} and assert that the probability of getting an outcome from $x^{\rm(left)} = \{-K+x_0,-K+x_0+1,\dots,x_0\}$, the left side of the window of intrest, is
\begin{align}
    p_{x^{\rm (left)}} \geq \tilde{p}_0/2
\end{align}
If we perform the Gaussian phase estimation experiment $M_0$ times, the probability of getting at least one outcome from $x^{\rm(left)}$ is
\begin{align}
   P(X\geq 1) = 1 - P(X=0) = 1 - (1- p_{x^{\rm (left)}})^{M_0}.
\end{align}
The corresponding error rate is
\begin{align}
    p_{\rm zero} &= P(X=0) = (1- p_{x^{\rm (left)}})^{M_0} \cr
                &\leq \exp{-p_{x^{\rm (left)}} M_0}\cr
                &\leq \exp{-p_0 M_0 /2}
\end{align}
Now, from \Cref{cor:hit_rate}, we have that
\begin{align}
    p_{\rm zero} &\leq \exp{-3 \eta M_0 /16}    
\end{align}
If upper bound $p_{\rm zero}$ through the quantity $\delta$ the following way
\begin{align}
    p_{\rm zero} \leq & \exp{-3 \eta M_0 /16} \leq \frac{1}{3} \delta = \exp{-3 \eta \tilde{M}_0 /16}
\end{align}
where $M_0=\lceil \tilde{M}_0 \rceil$. We solve for the number of samples $M_0$ obtaining
\begin{align}
    M_0 = \left\lceil\frac{16}{3 \eta}\log{\frac{3}{ \delta }}\right\rceil.
\end{align}
Now, we relate $\delta$ to $p_{\rm gap}$ with
\begin{align}
p_{\rm gap} \leq \frac{4}{3}\exp{-\frac{\left(2^q \Delta/3 -3/2\right)^2}{2\sigma^2}} \left( 1 + \sqrt{\frac{5}{3}}\sqrt{\frac{1-\eta}{\eta}}\right)^2 \leq \frac{1}{3} (\delta /M_0 )  \cr
\end{align}

Solving for $\tilde\sigma=\sigma/2^q$
\begin{align}
    \tilde\sigma \leq \frac{\left( \Delta/3 -3/2^{q+1}\right)}{\sqrt{2 \log{\frac{4 M_0}{\delta} \left( 1 + \sqrt{\frac{5}{3}}\sqrt{\frac{1-\eta}{\eta}}\right)^2} } }
\end{align}
We impose the constraint
$$
1/2^q \leq \Delta/6
$$
Thus, obtaining the decoupled bound
\begin{align}
    \tilde\sigma \leq \frac{ \Delta/6 }{\sqrt{2 \log{\frac{4 M_0}{\delta} \left( 1 + \sqrt{\frac{5}{3}}\sqrt{\frac{1-\eta}{\eta}}\right)^2} } }.
\end{align}
With this, we now have, the event of failure is qualified by
\begin{align}
    A_{\rm fail} = A_{\rm zero} \cup \left(\bigcup_i A_{i,\rm left}\right) \cup \left(\bigcup_i A_{i,\rm gap}\right)
\end{align}
Here, $A_{zero}$ is the event that no outcomes from the window $x^{\rm(left)}=\{-K+x_0,-K+x_0+1,\dots,x_0\}$ after $M_0$ measurements, $A_{i,\rm left}$ and $A_{i,\rm gap}$ are the events in which we get an outcome from the left and the gap window respectively on $i$th measurement. By the union bound we get,
\begin{align}
    p_{\rm fail} \leq p_{\rm zero} + M_0 p_{\rm left} + M_0 p_{\rm gap} \leq \delta.
\end{align}

\end{proof}
Lastly, after putting everything together, we get the formal version of \Cref{thm:short_sampling_algorithm}

\begin{theorem}
    \label{thm:final_sampling_algorithm}
There exists an algorithm to sample the $m_{\rm th}$ moment, where $m\geq 1$, of a random variable $x$ which approximately has the normal distribution $\mathcal{N}(\tilde\sigma,\mu=E_0)$, but which is bounded through $|x-\mu|\leq    (2/3)\Delta$. Here, $E_0$ is the ground state energy of $H$, and $\|H\|\leq 1-\Delta$. $\Delta$ is the lower bound on the spectral gap and $\eta\leq\abs{\gamma_0}^2$ is the lower bound on the initial state squared overlap with the ground state. $\delta\leq 0.01$ 
The number of samples required of the circuit \Cref{fig:gaussian_qpea} is
$$
    M_0=\left\lceil\frac{16}{3 \eta}\log{\frac{3}{ \delta }}\right\rceil
$$
where $\delta$ is the target overall failure rate of the algorithm, in order to obtain at least one approximate sample of the $m_{\rm th}$ moment whose expectation value has the error:
\begin{align}
   \tilde\varepsilon_m = \abs{\mu_m-\mathbb{E}(x^m)}.
\end{align}
By fixing the standard deviation to
\begin{align*}
    \tilde\sigma = \frac{1}{\sqrt{m}}\frac{ \Delta/6 }{\sqrt{2 \log{\frac{4 M_0}{\delta} \left( 1 + \sqrt{\frac{5}{3}}\sqrt{\frac{1-\eta}{\eta}}\right)^2} } },
\end{align*}
we require an ancillary register size of
    \begin{align*}
         q &= \left\lceil{\rm log}_2\left(\frac{3 m}{\pi \Delta}\sqrt{1+3\log{\frac{m}{4 e \pi^2 \tilde\sigma^2 }\left(\frac{(m!)^2 C(\eta)}{\tilde\varepsilon_m}\right)^{2/m}}} \sqrt{2 \log{\frac{4 M_0}{\delta} \left( 1 + \sqrt{\frac{5}{3}}\sqrt{\frac{1-\eta}{\eta}}\right)^2} } \right) \right\rceil \cr  
         &\qquad\qquad \textup{for}\,\,\frac{m}{2 e \pi^2 \tilde\sigma^2 }\left(\frac{(m!)^2 C(\eta)}{\tilde\varepsilon_m}\right)^{2/m} > e, 
\end{align*}
where
\begin{align*}
    C(\eta) &= \left[\left(\frac{128}{45}\right)\exp{12}  \left(12 + 3 + \frac{9}{4}\sqrt{\frac{1}{\eta}}   \right) \right.\cr
    &\left. +10\exp{12} + \frac{55}{8} \exp{2} \left(  1  + \sqrt{\frac{5}{3}}\sqrt{\frac{1}{\eta}} \right)\right],
\end{align*}
while also maintaining, $\abs{G_0(0)-\Tilde{G}_0} \leq 1/8$, $\abs{ \tilde{G}_0 - \tilde{F}_0 } \leq 1/8$, $\| \epsilon^{(R)} \|/\sqrt{\eta} \leq 1/8$, and
\begin{align*}
    \frac{\sqrt{ \log{\frac{4 M_0}{\delta} \left( 1 + \sqrt{\frac{5}{3}}\sqrt{\frac{1-\eta}{\eta}}\right)^2} }}{\sqrt{1+3\log{\frac{m}{4 e \pi^2 \tilde\sigma^2 }\left(\frac{(m!)^2 C(\eta)}{\tilde\varepsilon_m}\right)^{2/m}}}} \geq 2.
\end{align*}
\end{theorem}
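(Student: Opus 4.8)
The plan is to treat this as an assembly theorem: every quantitative ingredient has already been established, so the work is to combine the failure-rate analysis with the aggregated moment-error bound and then to verify that the single stated choice of $\tilde\sigma$, $M_0$, and $q$ simultaneously respects every constraint those two strands impose. Concretely, I would run two independent lines of argument — one controlling the probability that \Cref{alg:sampl_n_trim} returns at least one usable sample, the other controlling the bias of the $m$-th moment of that sample — and then reconcile them.

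First I would invoke \Cref{lem:fail_rate}. Fixing $M_0=\lceil\frac{16}{3\eta}\log{3/\delta}\rceil$ and requiring the relative width to obey the bound there guarantees, by the union bound over the events $A_{\rm zero}$, $A_{i,\rm left}$, and $A_{i,\rm gap}$, that the trimming step returns a sample drawn from the ground-state Gaussian with total failure probability at most $\delta$. The theorem's choice $\tilde\sigma=\frac{1}{\sqrt m}\frac{\Delta/6}{\sqrt{2\log{(4M_0/\delta)(1+\sqrt{5/3}\sqrt{(1-\eta)/\eta})^2}}}$ is exactly the failure-rate bound scaled by $1/\sqrt m\le 1$, so it sits inside the admissible region of \Cref{lem:fail_rate}; the extra $1/\sqrt m$ is the slack I will spend taming the $m$-dependence of the moment error.

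Next I would control $\tilde\varepsilon_m=|\mu_m-\mathbb{E}(x^m)|$. Using the decomposition $\varepsilon_m\le\varepsilon_m^{\rm norm}+\varepsilon_m^{\rm discret}+\varepsilon_m^{\rm trunc+polut}$ together with the renormalization, discretization (\Cref{lem:disc_errors}), and truncation-plus-contamination bounds, the error collapses to the form obtained in \Cref{thm:2q}, namely $\tilde\varepsilon_m\le(2^q)^m(m!)^2\exp{-2\pi^2\tilde\sigma^2(2^q)^2}C(\eta)$. Solving this transcendental inequality for $2^q$ via the Lambert-$W$ estimate used there gives the register-size requirement $2^q\ge\frac{\sqrt m}{2\pi\tilde\sigma}\sqrt{1+3\log{\cdots}}$; substituting the chosen $\tilde\sigma$ turns the prefactor $\frac{\sqrt m}{2\pi\tilde\sigma}$ into $\frac{3m}{\pi\Delta}\sqrt{2\log{(4M_0/\delta)(1+\cdots)^2}}$, reproducing exactly the stated $q$ after applying $\lceil\log_2(\cdot)\rceil$.

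The main obstacle, and the last step, is the mutual compatibility of all side conditions, because \Cref{lem:fail_rate} and \Cref{thm:2q} pull $\tilde\sigma$ and $2^q$ in opposite directions. In particular \Cref{thm:2q} needs $\tilde\sigma\le 2^{-1/4}\sqrt{1/2^q}\sqrt{\Delta/(12\pi)}$ — an upper bound on $\tilde\sigma$ in terms of $2^q$ — whereas the failure-rate analysis has already pinned $\tilde\sigma$ through the logarithmic factors. I would substitute the chosen $q$ into this domain bound; since the ceiling can inflate $2^q$ to as much as twice its lower bound, the worst case must still satisfy the domain constraint, and a short manipulation reduces the whole requirement to the single clean inequality $\sqrt{\log{(4M_0/\delta)(1+\cdots)^2}}\big/\sqrt{1+3\log{\frac{m}{4e\pi^2\tilde\sigma^2}((m!)^2C(\eta)/\tilde\varepsilon_m)^{2/m}}}\ge 2$, which is precisely the condition carried in the statement (the factor $2$ absorbing the worst-case doubling from the ceiling together with the $2^{-1/4}$). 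It then remains only to confirm that the operating regime $\delta\le 0.01$ and $1/2^q\le\Delta/6$ is consistent with the standing assumptions $|G_0(0)-\tilde G_0|\le 1/8$, $|\tilde G_0-\tilde F_0|\le 1/8$, and $\|\epsilon^{(R)}\|/\sqrt\eta\le 1/8$; these follow from \Cref{lem:eps1_error} and \Cref{lem:contamination_bounds} once $\tilde\sigma=\tilde O(\Delta/\sqrt m)$ and $2K+1=\lfloor(2/3)\Delta 2^q\rfloor$, which makes all three Gaussian tails simultaneously exponentially small. Collecting the consistent values of $M_0$, $\tilde\sigma$, and $q$ then yields the claim.
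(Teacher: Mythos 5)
Your proposal is correct and follows essentially the same route as the paper's own proof: it assembles the sample count and $\tilde\sigma$ bound from \Cref{lem:fail_rate}, the Lambert-$W$ lower bound on $2^q$ from \Cref{thm:2q}, and then reconciles the competing upper bound $\tilde\sigma\leq 2^{-1/4}\sqrt{1/2^q}\sqrt{\Delta/(12\pi)}$ with the chosen $q$ via the stated ratio-$\geq 2$ condition. The only cosmetic difference is that you present the $1/8$-type constraints as consequences of the tail bounds, whereas the paper carries them as standing assumptions enforced by the algorithm's parameter-tightening loop; the substance is identical.
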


\begin{proof}
We now group all the requirements by previous lemmas. The number of samples required of the circuit \Cref{fig:gaussian_qpea} is (See \Cref{lem:fail_rate})
$$
    M_0=\left\lceil\frac{16}{3 \eta}\log{\frac{3}{ \delta }}\right\rceil
$$
where $\delta$ is the target overall failure rate of the algorithm, in order to obtain at least one approximate sample of the $m_{\rm th}$ moment whose expectation value has the error:
\begin{align}
   \tilde\varepsilon_m = \abs{\mu_m-\mathbb{E}(x^m)}.
\end{align}
The requirements of the method are (See \Cref{lem:fail_rate})
\begin{align}\label{eq:tsigma_bound}
    \tilde\sigma \leq \frac{ \Delta/6 }{\sqrt{2 \log{\frac{4 M_0}{\delta} \left( 1 + \sqrt{\frac{5}{3}}\sqrt{\frac{1-\eta}{\eta}}\right)^2} } },
\end{align}
and a circuit depth/ancillar register size of (See \Cref{thm:2q})
    \begin{align}\label{eq:lowerbound_on_q}
         2^q& \geq \frac{\sqrt{m}}{2\pi \tilde\sigma}\sqrt{1+3\log{\frac{m}{4 e \pi^2 \tilde\sigma^2 }\left(\frac{(m!)^2 C(\eta)}{\tilde\varepsilon_m}\right)^{2/m}}}\cr  
         &\quad\quad \textup{for}\,\,\frac{m}{2 e \pi^2 \tilde\sigma^2 }\left(\frac{(m!)^2 C(\eta)}{\tilde\varepsilon_m}\right)^{2/m} > 1,    \cr
    \end{align}
    
while also maintaining $1/2^q \leq \Delta/6$ (See \Cref{lem:fail_rate}), $\abs{G_0(0)-\Tilde{G}_0} \leq 1/8$ (See \Cref{lem:fail_rate}), $\abs{ \tilde{G}_0 - \tilde{F}_0 } \leq 1/8$ (See \Cref{lem:fail_rate}), $\| \epsilon^{(R)} \|/\sqrt{\eta} \leq 1/8$ (See \Cref{thm:2q}), and (See \Cref{thm:2q})
\begin{align}\label{eq:upperbound_on_q}
        \tilde\sigma  \leq \frac{1}{2^{1/4}}\sqrt{\frac{1}{2^q}}\sqrt{\frac{\Delta}{12\pi}}.
\end{align}

We now enforce \Cref{eq:tsigma_bound} by setting
\begin{align}
    \tilde\sigma = \frac{1}{\sqrt{m}}\frac{ \Delta/6 }{\sqrt{2 \log{\frac{4 M_0}{\delta} \left( 1 + \sqrt{\frac{5}{3}}\sqrt{\frac{1-\eta}{\eta}}\right)^2} } }.
\end{align}

We now replace the definition of $\tilde\sigma$ in \Cref{eq:upperbound_on_q} to obtain
    \begin{align}\label{eq:upperbound_on_q_v2}
         2^q \leq \frac{3m}{\pi\Delta}\sqrt{ \log{\frac{4 M_0}{\delta} \left( 1 + \sqrt{\frac{5}{3}}\sqrt{\frac{1-\eta}{\eta}}\right)^2} }\sqrt{2 \log{\frac{4 M_0}{\delta} \left( 1 + \sqrt{\frac{5}{3}}\sqrt{\frac{1-\eta}{\eta}}\right)^2} }.
\end{align}

Now, we must fulfill both \Cref{eq:upperbound_on_q_v2} and \Cref{eq:lowerbound_on_q}, which we accomplish through setting
  \begin{align}\label{eq:q_definition}
         q& = \left\lceil {\rm log}_2\left(\frac{3 m}{\pi \Delta}\sqrt{1+3\log{\frac{m}{4 e \pi^2 \tilde\sigma^2 }\left(\frac{(m!)^2 C(\eta)}{\tilde\varepsilon_m}\right)^{2/m}}} \sqrt{2 \log{\frac{4 M_0}{\delta} \left( 1 + \sqrt{\frac{5}{3}}\sqrt{\frac{1-\eta}{\eta}}\right)^2} }\right)\right\rceil
\end{align}
and enforcing:
\begin{align}
    \frac{\sqrt{ \log{\frac{4 M_0}{\delta} \left( 1 + \sqrt{\frac{5}{3}}\sqrt{\frac{1-\eta}{\eta}}\right)^2} }}{\sqrt{1+3\log{\frac{m}{4 e \pi^2 \tilde\sigma^2 }\left(\frac{(m!)^2 C(\eta)}{\tilde\varepsilon_m}\right)^{2/m}}}} \geq 2.
\end{align}
We also enforce $1/2^q \leq \Delta/6$  by requiring $\delta \leq 0.01$ and $\frac{m}{2 e \pi^2 \tilde\sigma^2 }\left(\frac{(m!)^2 C(\eta)}{\tilde\varepsilon_m}\right)^{2/m} > e$.
\end{proof}

\end{document}